\pgfplotsset{compat=1.18}
\def \tn #1[[#2]]{#1\llbracket #2\rrbracket}
\def \tnr #1[[#2]]{#1\left\llbracket #2\right\rrbracket^{\tt r}}
\def \rootpos {\varepsilon}
\def \bisim {\sim_\text{b}}
\def \sfork {\sim_\text{sf}}
\def \forkeq {\sim_\text{f}}
\def \calphaeq {\sim}
\def \lamdown {\downarrow}
\def \appleft {\swarrow}
\def \appright {\searrow}
\def \varup {\uparrow}
\def \gstep #1{\overset{#1}{\longrightarrow}}
\def \debruijn #1{\underline{#1}}
\def \valid #1{\mathbb P\!\left(#1\right)}
\def \vars #1{\mathbb V\!\left(#1\right)}
\def \free #1{\mathbb F\!\left(#1\right)}
\def \bound #1{\mathbb B\!\left(#1\right)}
\def \SCC {\mathop{\rm SCC}}
\def \lamsize #1{\left|#1\right|_\lambda}
\def \node#1{{\mathbf{#1}}}
\def \lift #1<#2>{#1\langle#2\rangle}
\def \gvar #1{{\rm g}(#1)}
\def \d#1{{\it #1}}
\def \glob{{\rm globalize}}
\def \globnaive{{\rm globalize_{naive}}}
\def \globscc{{\rm globalize_{scc}}}
\def \globstep{{\rm globalize_{step}}}
\theoremstyle{definition}
\newtheorem{definition}{Definition}[section]
\newtheorem{observation}[definition]{Observation}
\newtheorem{theorem}[definition]{Theorem}
\newtheorem{lemma}[definition]{Lemma}
\newtheorem{example}[definition]{Example}
\newtheorem{corollary}[definition]{Corollary}
\newtheorem{remark}[definition]{Remark}
\numberwithin{equation}{lemma}
\newenvironment{proofsketch}{%
  \proof}{\endproof}
\begin{document}

\title{Hashing Modulo Context-Sensitive $\alpha$-Equivalence}
\author{Lasse Blaauwbroek}
\email{lasse@blaauwbroek.eu}
\author{Miroslav Ol\v s\'ak}
\email{mirek@olsak.net}
\affiliation{%
  \institution{Institut des Hautes Études Scientifiques}
  \city{Bures-sur-Yvette}
  \country{France}
}
\author{Herman Geuvers}
\email{herman@cs.ru.nl}
\affiliation{%
  \institution{Radboud University}
  \city{Nijmegen}
  \country{The Netherlands}
}

\begin{abstract}
  The notion of $\alpha$-equivalence between $\lambda$-terms is commonly used to
  identify terms that are considered equal. However, due to the primitive
  treatment of free variables, this notion falls short when comparing subterms
  occurring within a larger context. Depending on the usage of the Barendregt
  convention (choosing different variable names for all involved binders), it
  will equate either too few or too many subterms. We introduce a formal notion
  of context-sensitive $\alpha$-equivalence, where two open terms can be
  compared within a context that resolves their free variables. We show that
  this equivalence coincides exactly with the notion of bisimulation
  equivalence. Furthermore, we present an efficient $O(n\log n)$ runtime
  hashing scheme that identifies $\lambda$-terms modulo context-sensitive
  $\alpha$-equivalence, generalizing over traditional bisimulation partitioning
  algorithms and improving upon a previously established $O(n\log^2 n)$
  bound for a hashing modulo ordinary $\alpha$-equivalence by Maziarz et
  al~\cite{DBLP:conf/pldi/MaziarzELFJ21}. Hashing $\lambda$-terms is useful in
  many applications that require common subterm elimination and structure
  sharing. We have employed the algorithm to obtain a large-scale, densely packed,
  interconnected graph of mathematical knowledge from the Coq proof assistant
  for machine learning purposes.
\end{abstract}

\begin{CCSXML}
<ccs2012>
   <concept>
       <concept_id>10011007.10011006.10011041</concept_id>
       <concept_desc>Software and its engineering~Compilers</concept_desc>
       <concept_significance>500</concept_significance>
       </concept>
   <concept>
       <concept_id>10003752.10003753.10003754.10003733</concept_id>
       <concept_desc>Theory of computation~Lambda calculus</concept_desc>
       <concept_significance>500</concept_significance>
       </concept>
   <concept>
       <concept_id>10003752.10003809</concept_id>
       <concept_desc>Theory of computation~Design and analysis of algorithms</concept_desc>
       <concept_significance>500</concept_significance>
       </concept>
 </ccs2012>
\end{CCSXML}

\ccsdesc[500]{Software and its engineering~Compilers}
\ccsdesc[500]{Theory of computation~Lambda calculus}
\ccsdesc[500]{Theory of computation~Design and analysis of algorithms}

\keywords{Lambda Calculus, Hashing, Alpha Equivalence, Bisimilarity, Syntax Tree}

\maketitle

\section{Introduction}
This paper studies equivalence of $\lambda$-terms modulo renaming of bound
variables, so called $\alpha$-equivalence. This has been studied extensively in
the history of $\lambda$-calculus, starting with Church~\cite{Church}. The
overview book by Barendregt~\cite{BarendregtLambdaCalc} also defines
and studies it in detail. There, $\alpha$-equivalence is defined as a relation
$t =_\alpha u$ between two $\lambda$-terms that captures the idea
that $t$ can be obtained from $u$ by renaming bound variables.

In the present paper we study a more general situation where $t$ and $u$ are
accompanied by a \textit{context} that binds their free variables. Hence, we
study the notion of {\em context-sensitive $\alpha$-equivalence}, which we will
show to coincide with bisimulation when interpreting $\lambda$-terms as graphs.
This notion has particular importance in case one wants to semantically compare
and de-duplicate the subterms of huge $\lambda$-terms (e.g. a dataset extracted
from the Coq proof assistant~\cite{the_coq_development_team_2020_4021912}). To
this end, we also define an efficient $O(n\log n)$-time hashing algorithm that
respects this equivalence, in the sense that sub-terms receive the same hash if
and only if they are context-sensitive $\alpha$-equivalent.

\subsection{Problem Description}

The $\alpha$-equivalence relation equates
$\lambda$-terms modulo the names of binders. For example, $\lambda x. \lambda y. x
=_\alpha \lambda y. \lambda x. y$. De Bruijn indices are a
way to make the syntax of $\lambda$-terms invariant w.r.t. the
$\alpha$-equivalence relation. Using de Bruijn indices, both terms above would
be represented by $\lambda\lambda\ \debruijn{1}$, and would hence be
syntactically equal.

The $\alpha$-equivalence relation becomes less clear when free variables are
involved. Usually, it is understood that terms are only equal when all
occurrences of free variables are equal. However, the situation is more complicated
when considering free variables within a known context. Consider the example
\begin{equation}\label{ex:false-negative}
  \lambda t.\ Q\ (\lambda z. \lambda f.\ f\ t)\ (\lambda g.\ g\ t).
\end{equation}
In this term, are the subterms $\lambda f.\ f\ t$ and $\lambda g.\ g\ t$
considered $\alpha$-equivalent? Most would agree they are equivalent because we can
share these terms with a let-construct without changing the meaning of the term:
\begin{equation}\label{ex:let-introduction}
\lambda t.\ \text{let}\ s \coloneqq \lambda f.\ f\ t\text{ in } Q\ (\lambda
z. s)\ s
\end{equation}
Here, the justification that we are ``not changing the meaning of the term'' is
that one $\beta$-reduction of the introduced \textit{let} will give us the
original term (modulo renaming of bound variables).
However, when we represent the original term with de Bruijn indices, these two
sub-terms are not syntactically equal.
\[\lambda\ Q\ (\lambda\lambda\ \debruijn{0}\ \debruijn{2})\ (\lambda\
  \debruijn{0}\ \debruijn{1})\]
The promise of de Bruijn indices has failed us! If we want to find the common
sub-terms of a program, we cannot simply convert the program to use de Bruijn
indices, hash the program into a Merkle tree, and call it a day. Other
representations, like de Bruijn levels, suffer from similar issues.

In addition to false negatives, de Bruijn indices can also lead to false
positives. Consider the example
\begin{equation}\label{ex:false-positive}
  \lambda t.\ Q\ (\lambda z. \lambda f.\ f\ z)\ (\lambda g.\ g\ t).
\end{equation}
Here, the subterms $\lambda f.\ f\ p$ and $\lambda g.\ g\ t$ are not
$\alpha$-equivalent. However, when expressed using de Bruijn indices
they become equal.
\[\lambda\ Q\ (\lambda\lambda\ \debruijn{0}\ \debruijn{1})\ (\lambda\
  \debruijn{0}\ \debruijn{1})\]

Given these counter-examples, one might conclude that de Bruijn indices are not
as useful in deciding equality between (sub)-terms as is commonly thought.
Unfortunately, the situation is not much better for $\lambda$-terms with
ordinary named variables. Take for example this naive attempt at defining
context-sensitive $\alpha$-equivalence:
\begin{quote}
  Two subterms of $t$ are $\alpha$-equivalent in the context of $t$ if the bound
  variables in $t$ can be renamed such that the subterms become syntactically equal.
\end{quote}
An immediate counter-example to this definition is the term
$Q\ (\lambda x.x)\ (\lambda xy.\ x)$
and the question whether the two occurrences of the variable $x$
are $\alpha$-equivalent. According to the definition, yes, but the variables
correspond to binders that cannot possibly be considered equivalent.

At this point, it is not even clear what precise equivalence relation we are
looking for, even though many people would ``know an equivalence between
subterms when they see one.'' The only intuitive idea we have to build on is the
introduction of a \textit{let}-abstraction in Formula~\ref{ex:let-introduction}.
But, as we will see, this is not sufficient on its own.

\subsection{Fork Equivalence}\label{sec:intro-fork}

Let us return to Example~\ref{ex:false-negative}, where we used
\textit{let}-abstraction to ``show'' that two subterms are equal. We make this
more precise (a fully formal treatment can be found in
Section~\ref{sec:definitions}). To conveniently talk about the equality of
subterms within a context, we underline the two terms of interest. This allows
us to restate the question in Example~\ref{ex:false-negative} as follows.
\begin{equation}\label{ex:false-negative-formal}
  \begin{gathered}
  \lambda t.\ Q\ (\lambda z. \underline{\lambda f.\ f\ t})\ (\underline{\lambda g.\ g\ t}) \\
  \lambda\ Q\ (\lambda\underline{\lambda\ \debruijn{0}\ \debruijn{2}})\ (\underline{\lambda\ \debruijn{0}\ \debruijn{1}})
  \end{gathered}
\end{equation}
The underlined subterms are the \textit{subject} of the context-sensitive
$\alpha$-equivalence. The remainder of the term, which we can write as $\lambda
t.\ Q\ (\lambda z.\ \_)\ \_$ or $\lambda\ Q\ (\lambda\ \_)\ \_$, is the
\textit{context} in which the equivalence is to be shown. Now, in order to
perform the \textit{let}-abstraction, we must split the context into two pieces,
between which we can insert the \textit{let}. In this case, the split we make is
$\lambda t.\ \_$ and $Q\ (\lambda z.\ \_)\ \_$. We can then show that the term
$\lambda f.\ f\ t$ is closed under the outer context and when we substitute it
into the holes in the inner context (while avoiding variable capture), we obtain
the original term. Hence, we can reassemble everything to obtain the same term
from Example~\ref{ex:let-introduction}.
\[
  \lambda t.\ \text{let}\ s \coloneqq \lambda f.\ f\ t\text{ in } Q\ (\lambda
  z. s)\ s
\]

\begin{figure}
  \centering
  \begin{subfigure}{.4\textwidth}
    \centering
    \includegraphics{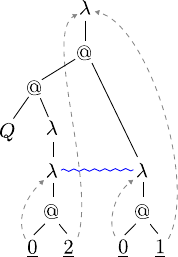}
    \caption{The single fork in Example~\ref{ex:false-negative-formal}.}
    \label{fig:fork}
  \end{subfigure}\hfill
  \begin{subfigure}{.6\textwidth}
    \centering
    \includegraphics{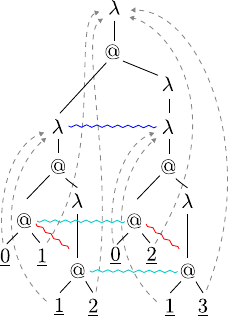}
    \caption{Illustration of sub-forks and transitivity from Example~\ref{ex:double-fork}.}
    \label{fig:double-fork}
  \end{subfigure}
  \caption{Illustrations of forks in terms built from de Bruijn indices. Equivalent
    sub-terms are related through a squiggly line. Back-edges from variables to
    binders are illustrative only.}
  \label{fig:test}
\end{figure}

\textit{Let}-abstractable subterms are the first ingredient of the
fork-equivalence relation\footnote{The formal definition we introduce later does
  not involve actual \textit{let}-expressions, but rather identifies the
  circumstances where a \textit{let}-abstraction can be performed.}. This
relation is named so because the relation is built upon a fork, starting with a
single outer context that forms the base of the fork. Then follows an inner context
that acts as the bifurcation of the base towards the two subjects. The fork is
shown in Figure~\ref{fig:fork}.

The definition of fork equivalence is not yet complete.
\textit{Let}-abstraction only allows us to compare subjects that share a common
context. However, when two subjects are closed, their context is irrelevant. In
that situation, both subjects may occur in a completely different context while
still being equivalent. This gives rise to an equivalence relation $\forkeq$
formed between two pairs of a subject and context. This allows establishing
equivalences between closed subjects such as
\[P\ (\underline{\lambda x.x}) \forkeq Q\ (\underline{\lambda y.y}).\]
In general, in addition to \textit{let}-abstraction, we say that a fork can also be
established between any two closed subjects that are equal modulo variable
renaming (or syntactically equal in the case of de Bruijn indices). Closed
subjects can be accompanied by arbitrary contexts, as they can never influence
the meaning of the subjects. \textit{Let}-abstraction, when phrased as a
relation $\forkeq$, would still require a common context even though that
common context is stated twice. Our running
Example~\ref{ex:false-negative-formal} would be phrased as
\[
  \lambda t.\ Q\ (\lambda z. \underline{\lambda f.\ f\ t})\ (\lambda g.\ g\ t)
  \forkeq
  \lambda t.\ Q\ (\lambda z. \lambda f.\ f\ t)\ (\underline{\lambda g.\ g\ t}).
\]

To complete the definition of fork equivalence, we must also extend it with both
the \textit{sub-fork} and with transitivity of forks. The following example
illustrates the need for these extensions:
\begin{equation}\label{ex:double-fork}
\begin{aligned}
  &\lambda t.\ (\lambda x.\ \underline{x\ t}\ (\lambda y.\ x\ t))\ (\lambda z.\ \lambda x.\ x\ t\ (\lambda y.\ x\ t))\\
  \forkeq {}&
  \lambda t.\ (\lambda x.\ x\ t\ (\lambda y.\ \underline{x\ t}))\ (\lambda z.\ \lambda x.\ x\ t\ (\lambda y.\ x\ t))\\
  \forkeq {}&
  \lambda t.\ (\lambda x.\ x\ t\ (\lambda y.\ x\ t))\ (\lambda z.\ \lambda x.\ x\ t\ (\lambda y.\ \underline{x\ t}))
\end{aligned}
\end{equation}
Notice how the first equivalence can be established because the subjects are
\textit{let}-abstractable. For the second equivalence, we have no such luck. There
is no place where we could split the context such that both instances of $x\ t$
would be closed under the outer context. Note, however, that if we widen the
subject $x\ t$ to $\lambda x.\ x\ t\ (\lambda y.\ x\ t)$, a
\textit{let}-abstraction can indeed be performed. Because both underscored instances of
$x\ t$ occur in the same position in this wider term, we allow for the creation
of a sub-fork between them. Finally, to demonstrate the transitivity concept, we
can combine both \textit{let}-abstractions in the following term, such that
all underscored locations coincide:
\[
\lambda t.\ \text{let }u \coloneqq (\lambda x.\ \text{let }v \coloneqq x\ t\text{ in }v\ \lambda y.v)\text{ in }u\ \lambda z. u
\]

This is further illustrated in Figure~\ref{fig:double-fork}. The two red forks
are created by \textit{let}-abstraction, as well as the dark blue fork. The two
teal lines are sub-forks of the dark blue fork. Finally, using
transitivity, fork equivalence is formed between all four connected sub-terms.

At this point, the definition of fork equivalence includes (1)
\textit{let}-abstraction, (2) equivalence between $\alpha$-equal closed terms, (3) the
sub-fork and (4) a transitive closure. It is not obvious that this relation is
indeed sound and complete w.r.t. the intuitive notion of equivalence between
subterms. To mitigate such doubts, we will now introduce a completely separate
equivalence relation that will be shown equal to fork equivalence in
Section~\ref{sec:proof-sketch}.

\subsection{Equivalence through Bisimulation}\label{sec:intro-bisim}

For our second equivalence relation, we will interpret $\lambda$-terms as
directed graphs. The skeleton of the graph is formed by the abstract syntax tree
of the $\lambda$-term, but instead of having variables or de Bruijn indices in the
leaves, they have a pointer back to the location where the variable was
bound.

\begin{figure}[b]
  \centering
  \begin{subfigure}{.6\textwidth}
    \centering
    \includegraphics{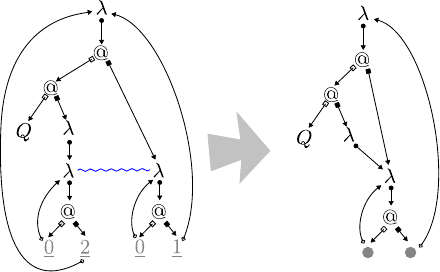}
    \caption{Subterms of Example~\ref{ex:false-negative} are bisimilar. Their
      nodes can be merged, resulting in variables without a unique de Bruijn
      index.}
    \label{fig:simple-bisim}
  \end{subfigure}\hfill
  \begin{subfigure}{.37\textwidth}
    \centering
    \includegraphics{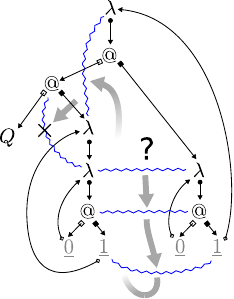}
    \caption{Subterms of Example~\ref{ex:false-positive} are not bisimilar,
      shown by a counter-example.}
    \label{fig:example-not-bisim}
  \end{subfigure}
  \caption{Illustrations of $\lambda$-terms as unordered graphs with labeled
    edges. Subject terms are related through a squiggly line. De Bruijn indices
    in variables are for illustrative purposes only.}
  \label{fig:bisim-examples}
\end{figure}

With such an interpretation, it becomes possible to define context-sensitive
$\alpha$-equivalence as the well-known notion of the \textit{bisimilarity relation}
that is common in the analysis of labeled transition systems and many other
(potentially) infinite structures~\cite{DBLP:books/daglib/0020348}. As a refresher, we restate
the definition of bisimilarity on a directed graph with labeled nodes and edges:
A relation $R$ between nodes is a bisimulation relation when for all nodes
$(p_1, p_2) \in R$ the labels of $p_1$ and $p_2$ are equal and
\begin{equation*}
  \begin{split}
    &\text{if}\ p_1\xrightarrow{a} q_1\ \text{then there exists}\ q_2\ \text{such that}\ p_2\xrightarrow{a}q_2\ \text{and}\ (q_1, q_2) \in R\\
    &\text{if}\ p_2\xrightarrow{a} q_2\ \text{then there exists}\ q_1\ \text{such that}\ p_1\xrightarrow{a}q_1\ \text{and}\ (q_1, q_2) \in R.
  \end{split}
\end{equation*}
Two nodes $p_1$ and $p_2$ in a graph are then considered bisimilar if there
exists a bisimulation relation $R$ such that $(p_1, p_2)\in R$. The bisimilarity
relation is the union of all bisimulation relations, and is itself a
bisimulation.

Figure~\ref{fig:simple-bisim} shows the term in Example~\ref{ex:false-negative}
as a graph. Due to the unordered nature of graphs, in contrast to more common
presentation of syntax trees as ordered trees, edges are annotated with a shape
that represents their label. It is straight-forward to build a relation $R$ that
includes the two subject terms and satisfies the requirements of a bisimulation
relation. The existence of the bisimulation certifies that we can de-duplicate
the two subject terms in the graph structure without changing their meaning.
Note that as a result, variables can no longer be assigned a unique de Bruijn
index.

Figure~\ref{fig:example-not-bisim} shows how the subterms in
Example~\ref{ex:false-positive} are not bisimilar. This is done by assuming a
valid bisimulation relation $R$ that contains the two subject terms. One can
then simultaneously follow equal edges on both sides until two nodes with
different labels are encountered that are not allowed to be in $R$ (marked with
a cross in the figure).

\begin{wrapfigure}[18]{r}{0.37\textwidth}
  \vspace{-1em}
  \centerline{\includegraphics{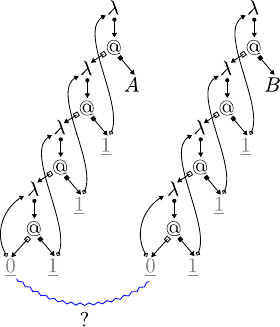}}
  \caption{To decide whether two variables are bisimilar, we must examine
    bisimilarity between far-away terms $A$ and $B$.}
  \label{fig:many-jumps}
\end{wrapfigure}
Deciding whether two subterms are equal is not always as easy as the
illustrations in Figure~\ref{fig:bisim-examples} suggest. Consider the
bisimulation question in Figure~\ref{fig:many-jumps} between two variables. To
decide whether the variables are bisimilar, we must repeatedly travel up and down
into the term jumping between variables and their binders, until we reach the
root. From there, we must travel into the subterms $A$ and $B$. The two
variables will be bisimilar if and only if $A$ and $B$ are bisimilar. This shows
that no matter the size of the subject terms, one might need to traverse and
examine the entire context in order to decide their bisimilarity. A decision
procedure or hashing scheme must take all of this information into account while
still being efficient.

Finally, we note the importance of having variable nodes in the
embedding of $\lambda$-terms as graphs. Since variable nodes only have a single
incoming edge and a single outgoing back-edge to the corresponding binder, one
might be tempted to skip the variable node altogether. However, such an
embedding would lead to a situation where we share too many terms. One example
of such problems is the following sequence of terms that should not be bisimilar.
\[
  \lambda x.x \bisim \lambda y.\lambda x.x \bisim \lambda
  z.\lambda y.\lambda x.x \bisim \ldots
\]
However, under the following encoding, which omits variables, these terms would
be bisimilar.
\begin{center}
  \includegraphics{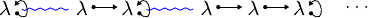}
\end{center}

\begin{wrapfigure}[8]{r}{0.27\textwidth}
  \centering
  \begin{tikzpicture}
    \tkzDefPoint(0,0){O};
    \tkzDefPoint(35:1){A};
    \tkzDefPoint(185:1){B};
    \tkzDefPoint(315:1){C};
    \tkzDrawArc[stealth-, delta=-15](O,A)(B);
    \tkzDrawArc[stealth-, delta=-15](O,B)(C);
    \tkzDrawArc[stealth-, delta=-15](O,C)(A);
    \node at (A) {hash algorithm};
    \node at ($ (B) + (0.5,0) $) {fork equivalence};
    \node at (C) {bisimilarity};
  \end{tikzpicture}
  \caption{Proof strategy.}
  \label{fig:proof-circular}
\end{wrapfigure}
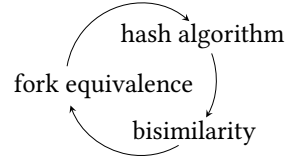
Fork equivalence and bisimilarity are defined quite differently. Indeed, it
might be surprising that they end up being exactly equal. These two dissimilar
characterizations of context-sensitive $\alpha$-equivalence are crucial
in the correctness proof of the hashing algorithm we present in
Section~\ref{sec:algorithms}. Figure~\ref{fig:proof-circular} outlines how we
will prove the equality between the algorithm and the two equivalences.
Section~\ref{sec:proof-sketch} will show that if two subterms are fork-equivalent,
they will receive the same hash. Conversely, if two subterms receive the same
hash, they must be bisimilar. The first direction is considerably simplified
because fork equivalence is characterized as a sequence of syntactically simple
single forks. The other direction can be proven by showing that every
$\lambda$-term is bisimilar to its hashed version (given an appropriate
extension of bisimulation).

\subsection{Hashing versus Partitioning Modulo Bisimulation}
\label{sec:versus-bisim}

Given that $\lambda$-terms can be expressed as a deterministic labeled
transition systems such that context-sensitive $\alpha$-equivalence corresponds
to bisimilarity, we can draw apply the large body of research in this area to
$\lambda$-terms. Indeed, there are off-the-shelf algorithms that partition a
deterministic graph $G = (E, V)$ modulo bisimulation in $O(|E|\log |V|)$
time~\cite{DBLP:journals/siamcomp/PaigeT87, DBLP:journals/tcs/DovierPP04,
  hopcroft1971n, DBLP:journals/ipl/Valmari12}. For a $\lambda$-term of size $n$,
this translates into a $O(n\log n)$ runtime.

The algorithm we present in Section~\ref{sec:algorithms} does not partition a
$\lambda$-term but rather assigns a hash-code to each position in the term. If
no hash-collisions occur, two positions are context-sensitively
$\alpha$-equivalent if and only if their hash is equal. If desired, collisions can be
detected by combining our algorithm with
hash-consing~\cite{DBLP:conf/ml/FilliatreC06}. As such, our algorithm is
strictly more powerful than a partitioning algorithm. The crucial advantage is
that $\lambda$-terms may be hashed independently of each other while the
resulting hashes can still be meaningfully compared. This enables one to
efficiently find duplicate terms as they are created by inserting their hashes
in a table. Partitioning schemes do not allow this, because all terms must
be simultaneously input into the algorithm. We are not aware of
any existing graph-hashing algorithms that operate modulo bisimilarity.

Our hashing scheme is straight-forward to implement compared to general-purpose
partitioning algorithms, and relies only on well-known $\lambda$-calculus
operations such as capture-avoiding substitution.
Section~\ref{sec:experimental-evaluation} shows it can outperform
partitioning algorithms in all realistic scenarios.

\subsection{Context-Sensitive $\alpha$-Equivalence versus Ordinary
  $\alpha$-Equivalence}
\label{sec:versus}

We will now explore some of the differences between context-sensitive
$\alpha$-equivalence and ordinary $\alpha$-equivalence. Previously, Maziarz et
al~\cite{DBLP:conf/pldi/MaziarzELFJ21} have constructed a hashing scheme modulo
ordinary $\alpha$-equivalence. How does this differ from our scheme and when is
one relation or algorithm to be preferred over another?

To examine this question, we again look at the term in
Example~\ref{ex:double-fork} as visualized in Figure~\ref{fig:double-fork}.
\begin{align*}
  \lambda t.\ (\lambda x.\ x\ t\ (\lambda y.\ x\ t))\ (\lambda z.\
  \lambda x.\ x\ t\ (\lambda y.\ x\ t)) \\
  \lambda t.\ (\lambda x.\ \debruijn 0\ \debruijn 1\ (\lambda y.\ \debruijn 1\ \debruijn 2))\ (\lambda z.\
  \lambda x.\ \debruijn 0\ \debruijn 2\ (\lambda y.\ \debruijn 1\ \debruijn 3))
\end{align*}
This term contains four instances of the term $(x\ t)$, all of whom are
represented differently using de Bruijn indices, and all of whom are considered
equal modulo context-sensitive $\alpha$-equivalence. Which of these instances of
$x\ t$ are equal under ordinary $\alpha$-equivalence? This is a tricky question
to answer. Both the variables $x$ and $t$ are free in the term $(x\ t)$. As
such, under ordinary $\alpha$-equivalence, they would be compared syntactically.
This would indeed make all four instances $\alpha$-equivalent. However, such an
interpretation would get us into trouble when we rename the bound variables in
the term. This may change the variable names in the subterms, which in turn may
cause them to no longer be $\alpha$-equivalent. Hence, $\alpha$-equality between
subterms is contingent on the particular choice of bound variable names we have
made. That is not acceptable. Maziarz et al. solve this ambiguity by globally
enforcing the Barendregt convention on the entire universe of $\lambda$-terms.
That is, no two binders may ever have the same name. The term in our example
does not satisfy the Barendregt convention. We must rewrite it to
\[
  \lambda t.\ (\lambda x_1.\ x_1\ t\ (\lambda y_1.\ x_1\ t))\ (\lambda z.\
  \lambda x_2.\ x_2\ t\ (\lambda y_2.\ x_2\ t)).
\]
We now have two $\alpha$-equivalent instances of $(x_1\ t)$ and two
$\alpha$-equivalent instances of $(x_2\ t)$. This is contrasted by
context-sensitive $\alpha$-equivalence, where all four terms are equal. In general,
we have that ordinary $\alpha$-equivalence is a sub-relation of
context-sensitive $\alpha$-equivalence\footnote{A direct comparison of the two
  relations is impossible because one is defined on the domain of
  $\lambda$-terms, while the other is defined on $\lambda$-terms with a context.
  However, one can imagine a trivial lift of ordinary $\alpha$-equivalence to
  $\lambda$-terms with a context, such that the context is entirely ignored.}.
For ordinary $\alpha$-equivalence, this can lead to some counter-intuitive situations: The subterms $\lambda
x_1.\ x_1\ t\ (\lambda y_1.\ x_1\ t)$ and $\lambda x_2.\ x_2\ t\ (\lambda y_2.\
x_2\ t)$ are considered $\alpha$-equivalent according to Maziarz et al., but
not all their constituent parts are mutually $\alpha$-equivalent. No such issues
exist for the context-sensitive variant.

Trade-offs between the two relations are as follows:
\begin{itemize}
\item \sloppy Ordinary $\alpha$-equivalence is simple. It is defined on $\lambda$-terms,
  while context-sensitive $\alpha$-equivalence needs an additional context.
\item Ordinary $\alpha$-equivalence cannot be defined properly on open terms
  encoded with de Bruijn indices. Syntactic comparisons between open de Bruijn
  indices leads to incorrect results (see Example~\ref{ex:false-positive}). A hybrid approach, such as a
  locally-nameless representation~\cite{DBLP:journals/jar/Chargueraud12}, is
  required to resolve this. Context-sensitive $\alpha$-equivalence can be
  properly defined for any representation of $\lambda$-terms.
\item When interpreting $\lambda$-terms as a graph, one should use
  context-sensitive $\alpha$-equivalence because it equals the graph-theoretic
  notion of bisimilarity.
\item For tasks like common subterm elimination in compilers, both relations may
  be appropriate because there one usually seeks to find the largest
  $\alpha$-equivalent subterms and not their descendants. Both relations achieve
  this.
\end{itemize}
Furthermore, the trade-offs between the hashing algorithm in this paper and the
one by Maziarz et al. are as follows:
\begin{itemize}
\item Our algorithm hashes all nodes in a term in $O(n\log n)$ time while
  Maziarz et al. require $O(n\log^2 n)$ time.
\item Maziarz et al. require a global preprocessing step to enforce the
  Barendregt convention. No such step is required for our algorithm.
\item The algorithm by Maziarz et al. is compositional. That is, given two
  hashed terms $t$ and $u$, one can derive the hash for
  $(t\ u)$ from the hash of the children. This may be a desirable property in a compiler, allowing one to maintain
  correct hashes while rewriting terms during optimization
  passes\footnote{Re-hashing a term after rewriting a term at depth $h$ may
    still be expensive, taking up to $O(h^2)$ time.}.
  Compositionality is not possible for context-sensitive $\alpha$-equivalence,
  because changing the context may require a change in the hash of all subterms
  (see Figure~\ref{fig:many-jumps}).
\item Maziarz et al. rely fundamentally on named variables, while we rely
  fundamentally on de Bruijn indices. In both cases, it is possible to do a
  representation conversion before hashing, but both algorithms have a clear ``native''
  representation. Much has been said around the relative merits of
  $\lambda$-term representations~\cite{DBLP:journals/entcs/BerghoferU07}, and we
  do not wish to make a value judgement here. It is good to know there are viable
  algorithms for both representational approaches, even if those algorithms have
  subtle differences in how they operate.
\end{itemize}

\subsection{Applications}
The original motivation for the subterm sharing algorithm in this paper was the
creation of a large-scale, graph-based machine learning dataset of terms in the
calculus of inductive constructions, exported from the Coq Proof
Assistant~\cite{the_coq_development_team_2020_4021912}. Extracting data from
over a hundred Coq developments leads to a single, interconnected graph
containing over 520k definitions, theorems and proofs. For each node in this
graph we calculate a hash modulo context-sensitive $\alpha$-equivalence. The
hash is then used to maximally share all subterms, resulting in a dense graph
with approximately 250 million nodes. A very small section of this graph is
visualized in Figure~\ref{fig:web}. More details on the construction can be
found elsewhere~\cite{web-paper}. Experiments show that subterm
sharing allows for an 88\% reduction in the number of nodes. Hence, without
sharing, such a graph would have over 2 billion nodes. Identifying identical
subterms in a graph is helpful semantic information that can be used by machine
learning algorithms to make predictions. The graph dataset has been leveraged to
train a state of the art graph neural network to synthesize proofs in
Coq~\cite{graph2tac}.
\begin{figure}
  \centering
  \includegraphics[width=\textwidth]{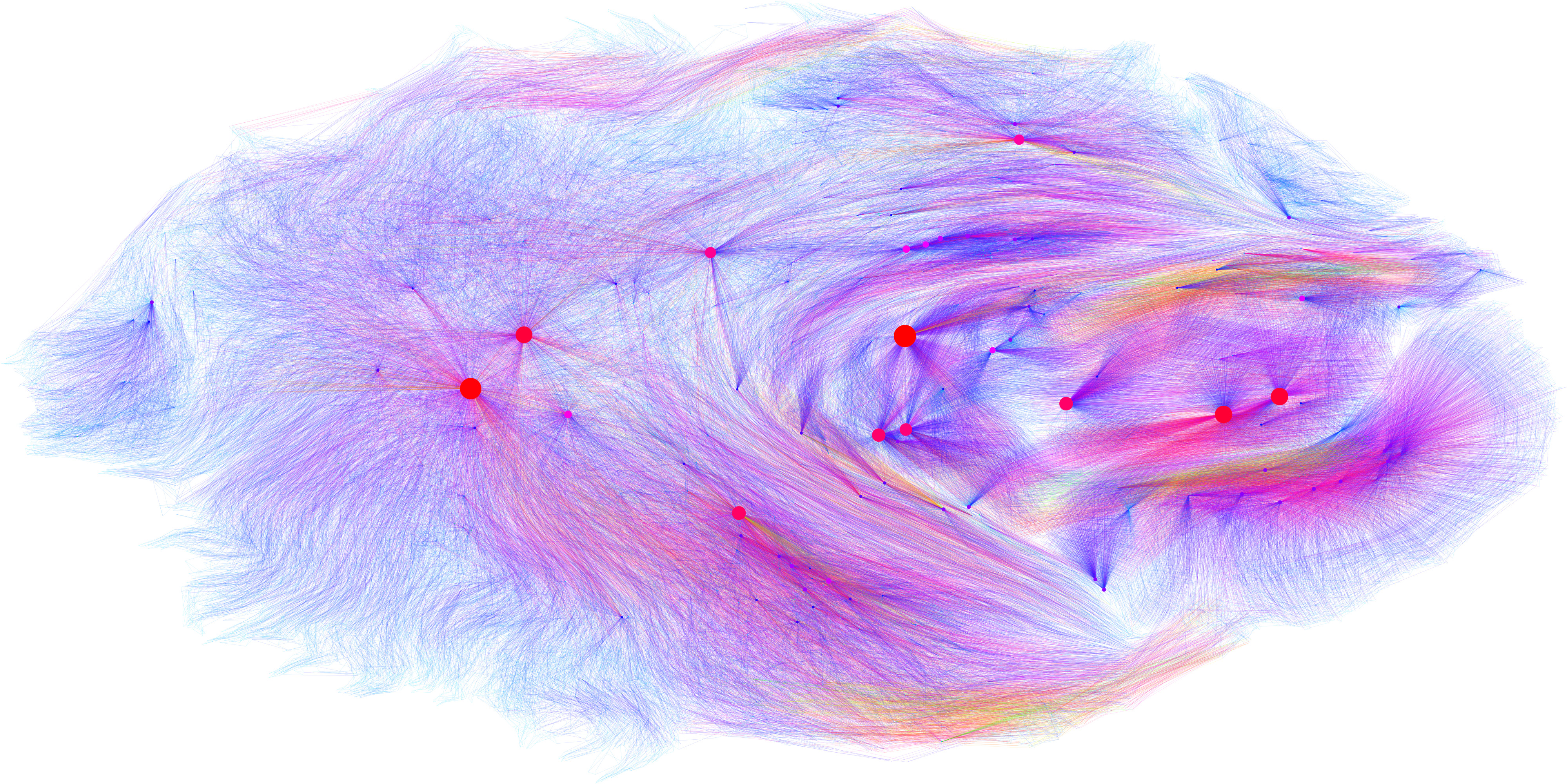}
  \caption{A visualization of a maximally shared graph of CIC terms extracted from
    Coq's Prelude.}
  \label{fig:web}
\end{figure}

In addition to our original motivation, we expect our algorithm to be useful in
other applications as well. In compilers, common sub-expression elimination is a
common optimization pass~\cite{DBLP:conf/comop/Cocke70} that can be performed
quickly using our algorithm. This was the original motivation for the algorithm
by Maziarz et al.~\cite{DBLP:conf/pldi/MaziarzELFJ21}. Hashes can also be used
by content addressable programming languages like
Unison~\cite{Chiusano_Bjarnason_Irani} and to build indexes of program libraries
that can then be queried to find opportunities for code
sharing~\cite{DBLP:conf/iwsc/ThomsenH12} or plagiarism
detection~\cite{DBLP:conf/bwcca/ZhaoXFC15}.

\subsection{Contributions}

In this paper, we develop a notion of context-sensitive $\alpha$-equivalence
that compares potentially open terms within a context. We define this notion
both through fork equivalence and bisimulation, and prove that these
approaches are equivalent. Note that we are not the first to study bisimilarity
on the syntax of $\lambda$-terms. In the context of optimal sharing for
efficient evaluation this is common~\cite{DBLP:conf/ppdp/CondoluciAC19,
  DBLP:conf/icfp/GrabmayerR14}. However, our relational characterization between
subterms with a context appears to be novel.
We
present an efficient decision procedure and hashing algorithm that identifies
terms modulo context-sensitive $\alpha$-equivalence. These algorithms have been
successfully used to efficiently encode a graph of Coq terms for machine
learning purposes. A reference implementation written in OCaml is
available~\cite{lasse_blaauwbroek_2023_10421517}.

The remainder of the paper is organized as follows.
Section~\ref{sec:definitions} first introduces the preliminaries followed by the
formal definitions of fork equivalence in Section~\ref{sec:fork-equivalence}
and bisimilarity in Section~\ref{sec:bisimulation-relation}. Then,
Section~\ref{sec:naive-alg} presents a simple, naive algorithm for hashing
$\lambda$-terms in $O(n^2)$ time. This is then refined in
Section~\ref{sec:alg-efficient} and finally a concrete, $O(n\log n)$ hashing
algorithm implemented in OCaml is presented in Section~\ref{sec:alg-concrete}.
Proofs of equality between the two
relations and the algorithms can be found in Section~\ref{sec:proof-sketch}.

\section{Definitions}
\label{sec:definitions}

In this section, we will further formalize the equivalence relations presented
in the introduction. From this point, we will only consider $\lambda$-terms
encoded with de Bruijn indices. The algorithms rely heavily on the fact that two
$\alpha$-equivalent closed terms encoded using de Bruijn indices are
syntactically equal. That said, the equivalence relations and the proofs of
equality between them also work when one uses $\lambda$-terms with named
variables modulo ordinary $\alpha$-equivalence. For the sake of legibility, we
will frequently still give examples using $\lambda$-terms with named
variables.

\subsection{Terms, Positions and Indexing}
\begin{definition}[$\lambda$-terms]
$\lambda$-terms with de Bruijn indices are generated
by the grammar
\[ t \Coloneqq \underline{i} \mid t\ t \mid \lambda\ t, \]
where a de Bruijn index $\debruijn i$ is a nonnegative integer.
We denote $t[\debruijn i\coloneqq u]$ to be the well-known capture-avoiding substitution of
variable $\debruijn i$ by $u$ in $t$. Furthermore, let
$\sigma = [u_0, u_1,\ldots, u_{n-1}]$ be a list of terms. Then $t\sigma$ denotes
the simultaneous subtitution of all variables $\debruijn i$ by $u_i$ in $t$.
\end{definition}

\begin{definition}[term indexing]
Let $p$ be a \textit{term position} generated by the grammar $\{\lamdown,\appleft, \appright\}^*$.
The indexing operation $t[p]$ is a partial function defined by
\begin{mathpar}
  (t\ u)[\appleft p] = t[p] \and
  (t\ u)[\appright p] = u[p] \and
  (\lambda\ t)[\lamdown p] = t[p] \and
  t[\rootpos] = t.
\end{mathpar}
\end{definition}
\begin{example}
  Consider a term $t = \lambda\ (\lambda\ A\ B)\ (\lambda\ C\ D)$ and positions $p =\ \lamdown\appleft$ and $q
  =\ \lamdown\appright$. Then
\begin{mathpar}
  t[\rootpos] = \lambda\ (\lambda\ A\ B)\ (\lambda\ C\ D)
  \and t[pp] = A \and t[pq] = B \and t[qp] = C \and t[qq] = D.
\end{mathpar}
\end{example}

\begin{definition}[position sets]
Let $\lamsize{p}$ denote the number of $\lamdown$s in a term position $p$. We
define the following sets of positions for a $\lambda$-term $t$.
\begin{align*}
  \valid{t} &= \{p \mid t[p] \text{ is defined}\}\quad && \text{The set of all valid positions in $t$.} \\
  \vars{t} &= \{ p \in \valid{t} \mid t[p] = \debruijn i\}\quad && \text{Positions that represent a variable in $t$.} \\
  \bound{t} &= \{ p \in \vars{t} \mid t[p] < \debruijn{\lamsize{p}} \} && \text{Positions that represent a bound variable.} \\
  \free{t} &= \{p \in \vars{t} \mid t[p] \geq \debruijn{\lamsize{p}} \} && \text{Positions that represent a free variable.}
\end{align*}
\end{definition}

\subsection{Locally Closed Subterms}
A subterm rooted at position $p$ in $t$ is considered to be \d{locally closed} in
$t$ if all of the free variables in $t[p]$ are also free in $t$. We will later use
this concept while formalizing \textit{let}-abstraction.
\begin{definition}
  Position $p\in\valid{t}$ is \d{locally closed} in $t$ if for every
  $q\in\free{t[p]}$ we have $pq\in\free{t}$.
\end{definition}
\begin{example}
  Consider again Example~\ref{ex:false-negative} as visualized in Figure~\ref{fig:fork}:
  \[u = \lambda t.\ Q\ (\lambda z. \lambda f.\ f\ t)\ (\lambda g.\ g\ t).\] Let $p =
  \ \lamdown\appleft\appright$, giving us $u[p] = \lambda z. \lambda f.\ f\ t$ and
  $u[p\lamdown] = \lambda f.\ f\ t$. The position $p\lamdown$ is locally closed
  in $u[p]$, because the variable $f$ is bound in $u[p\lamdown]$, and $t$ is
  free in both $u[p]$ and $u[p\lamdown]$. On the other hand, $p\lamdown$ is
  not locally closed in $u$, because variable $t$ is free in $u[p\lamdown]$ but bound in $u$.
\end{example}

Using this notion, we can introduce a more semantic version of term indexing.
Intuitively, $\lift t<p>$ will denote the subterm $t[p]$ with its de Bruijn
indices shifted to skip the context given by $p$. This way, we ``lift'' $t[p]$
out of its context. This can only be done if $p$ is locally closed in $t$.

\begin{definition}
  \label{def:lift}
  For $p \in \valid{t}$, define $\lift t<p>$ to be equal to
  $t[p]$, except that for all $q\in\free{t[p]}$ we have $\lift t<p>[q] = t[pq]
    - \lamsize{p}$.
\end{definition}
The term $\lift t<p>$ is a valid term (with non-negative
indices) only when $p$ is locally closed in $t$.
The operation $\lift t<p>$ is crucial in much of the analysis below, as it
allows us to abstract away from manually doing arithmetic on de Bruijn indices.
Another natural view of $\lift t<p>$ is that it reverses capture-avoiding
substitution, as shown in the following observation.

\begin{observation}
  Let $t$ be a $\lambda$-term with a free variable $\debruijn i$ at position
  $p$. That is, $p\in\free{t}$ such that $t[p] = \debruijn i$. Then $\lift
  t[\debruijn i \coloneq u]<p> = u$.
\end{observation}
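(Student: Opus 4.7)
The plan is to unfold capture-avoiding substitution and Definition~\ref{def:lift} in parallel and observe that they are mutual inverses at position $p$. The intuition is that capture-avoiding substitution, as it recurses past each binder above $p$, must shift up the free variables of $u$ by one (to keep them free after being pushed under that binder). By the time it plants $u$ at position $p$, every free variable of $u$ has been raised by $\lamsize{p}$. The operation $\lift{\cdot}<p>$ of Definition~\ref{def:lift} performs exactly the opposite correction, subtracting $\lamsize{p}$ from each free variable of the subterm rooted at $p$, and so it returns $u$ untouched.

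First I would formalize the intuition by induction on $\lamsize{p}$ (equivalently, on the prefix of $t$ above $p$), establishing the intermediate fact that for every $q\in\valid{u}$,
\[
  (t[\debruijn i \coloneqq u])[pq] = \begin{cases} u[q] + \lamsize{p} & \text{if } q\in\free{u}, \\ u[q] & \text{otherwise.} \end{cases}
\]
Next I would verify the side condition that $p$ is locally closed in $t[\debruijn i \coloneqq u]$, so that $\lift{t[\debruijn i \coloneqq u]}<p>$ is indeed a well-formed $\lambda$-term. For every $q\in\free{u}$ we have $u[q] \geq \lamsize{q}$, hence $u[q]+\lamsize{p} \geq \lamsize{p}+\lamsize{q} = \lamsize{pq}$, placing $pq$ in $\free{t[\debruijn i \coloneqq u]}$ as required by local closedness.

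With these two ingredients a direct termwise comparison closes the argument: on positions $q\in\valid{u}\setminus\free{u}$, Definition~\ref{def:lift} gives $\lift{t[\debruijn i \coloneqq u]}<p>[q] = (t[\debruijn i \coloneqq u])[pq] = u[q]$, while on $q\in\free{u}$ the $+\lamsize{p}$ introduced by substitution is cancelled by the $-\lamsize{p}$ correction of the lift. The only step with real content is the induction above, whose main obstacle is the careful bookkeeping of the twin shifts — of the target de Bruijn index and of $u$'s free variables — that capture-avoiding substitution accumulates as it traverses the $\lamsize{p}$ binders between the root of $t$ and position $p$. Once that bookkeeping is settled, everything else follows by unfolding definitions.
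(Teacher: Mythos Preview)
The paper states this as an observation without proof, so there is no argument to compare against. Your approach is correct in substance: capture-avoiding substitution deposits a copy of $u$ at position $p$ with every free variable raised by $\lamsize{p}$, and Definition~\ref{def:lift} undoes exactly that shift.

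Two small points of precision. First, your intermediate fact should be restricted to variable positions $q\in\vars{u}$; for a non-leaf position $q$ the equation $(t[\debruijn i \coloneqq u])[pq] = u[q]$ fails whenever $u[q]$ contains free variables, since those have been shifted inside the substituted copy. The cleanest formulation of what you want is simply that $(t[\debruijn i \coloneqq u])[p]$ equals $u$ with every free index raised by $\lamsize{p}$; the termwise comparison you do afterwards is then immediate. Second, the induction should be on the structure (or length) of $p$, not on $\lamsize{p}$: the application steps $\appleft$ and $\appright$ do not decrease $\lamsize{p}$ but must still be traversed. Your parenthetical ``equivalently, on the prefix of $t$ above $p$'' shows you have the right picture; just state the induction variable as $p$ itself. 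These are presentational matters only --- the content of your argument is sound.
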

Note that $t[\debruijn i \coloneq u][p] = u$ does not hold, because
the capture-avoiding substitution may have shifted the free de Bruijn
indices in $u$. The semantic indexing operation reverses these shifts.

\begin{example}
  Consider the term $t = \lambda\ \debruijn 0\ (\lambda\ \debruijn 0\ \debruijn
  2)$, where $\debruijn 2$ is a free variable. The position $\lamdown\appright$
  is locally closed in $t$ because the outer-most $\lambda$ is not referenced.
  We then have $\lift t<\lamdown\appright> = \lambda\ \debruijn 0\ \debruijn
  1$ compared to $t[\lamdown\appright] = \lambda\ \debruijn 0\ \debruijn 2$.
  On the other hand, neither $\lamdown$ nor $\lamdown\appleft$ are locally
  closed positions.
\end{example}

\subsection{Term Nodes}
We now formally define the notion of a term with a context as introduced in
Section~\ref{sec:intro-fork}.

\begin{definition}[term node]
  \label{def:term-node}
  Let $\tn t[[p]]$ denote a pair $(t, p)$ such that $t$ is a closed term,
  and $p\in\valid{t}$.
\end{definition}

In a pair $\tn t[[p]]$, the term $t[p]$ represents the \textit{subject}, while
the remaining part of $t$ is the \textit{context}.
Intuitively, we can think of $\tn t[[p]]$ as the subterm
at $t[p]$ but without losing the information about the context.

We call a pair $\tn t[[p]]$ a
\textit{term node} because it represents a node in the graph induced on
$\lambda$-terms that we introduced in Section~\ref{sec:intro-bisim}.
In the graph visualizations in Section~\ref{sec:intro-bisim}, each node is labeled only with the
top-most symbol of the subject term, either $\lamdown$, $@$, or $\varup$.
Although this is convenient from a visualization perspective, it does not work
from a mathematical perspective where a graph is defined as a pair of sets $G = (V, E)$ that
determine the nodes
and edges.
Taking the set of nodes to be $V = \{\lamdown, @, \varup\}$ would give us
trivial graphs with three nodes. Instead, a node $n \in V$ must uniquely
represent the subject term and the context in which it occurs. This is exactly
what a term node is.

In order to formally define the graph on $\lambda$-terms, we must also define
the set $E$ of transitions between term nodes.

\begin{definition}[term node transitions]
  \label{def:node-transitions}

We define the transitions between term nodes as follows. Let $\tn t[[p]]$ be a
term node, and $x \in \{\lamdown, \appleft, \appright\}$ such that $x\in
\valid{t[p]}$. Then,
\[\tn t[[p]] \gstep x \tn t[[px]].\]
In addition, a term node whose subject is a variable also has a transition to
the corresponding binder. Formally, if $q\lamdown r \in \vars{t}$ and
$t[q\lamdown r] = \debruijn{\lamsize{r}}$, then
\[
\tn t[[q\lamdown r]] \gstep\varup \tn t[[q]].
\]
\end{definition}

Note that for a term node $\tn t[[p]]$ such that $p \in \vars{t}$, we can always
make a split $p = q\lamdown r$ such that $t[p] = \debruijn{\lamsize{r}}$. This
is because the definition of a term node $\tn t[[p]]$ stipulates that $t$
is closed and hence $\lamsize{p} > t[p]$.




Now, we have all the tools needed to formally specify the two equal notions of
context-sensitive $\alpha$-equivalences outlined in the introduction.

\subsection{Fork Equivalence}
\label{sec:fork-equivalence}
Here, we formalize the concepts introduced in Section~\ref{sec:intro-fork},
starting with the notion of a single fork.

\begin{definition}[single fork]
  \label{def:single-fork}
  A \textit{single fork} between term nodes, denoted by $\tn
  t_1[[p_1]]\sfork\tn t_2[[p_2]]$, exists when one of the following rules is
  satisfied.
  \begin{mathpar}
    \prftree[r]{\textit{let}-abs}
    {q_1 \text{ locally closed in } t[p]}
    {\lift t[p]<q_1> = \lift t[p]<q_2>}
    {\tn t[[pq_1r]] \sfork \tn t[[pq_2r]]}
    \and
    \prftree[r]{closed}
    {t_1[p_1] \text{ closed}}
    {t_1[p_1] = t_2[p_2]}
    {\tn t_1[[p_1r]] \sfork \tn t_2[[p_2r]]}
  \end{mathpar}
  It is assumed that $\tn t[[pq_1r]]$, $\tn t[[pq_2r]]$, $\tn t_1[[p_1r]]$ and
  $\tn t_2[[p_2r]]$ are valid term nodes in the rules above.
\end{definition}
When $\tn t[[pq_1r]]\sfork\tn t[[pq_2r]]$ is satisfied by the \textit{let}-abs
rule, this means that a \textit{let} can be introduced in $t$ at position $p$.
The \textit{let} binds the term $\lift t[p]<q_1>$, and the terms at position $pq_1$ and
$pq_2$ can be changed into a variable pointing to the \textit{let}.
Example~\ref{ex:let-introduction} illustrates this. The rule for closed terms is
simpler. It states that a closed term is equivalent to itself in an arbitrary
context. Finally, note that the conclusion of both rules allow for an arbitrary
position $r$ that ``extends'' the prongs of a known fork, as illustrated in the
following observation:

\begin{observation}[sub-fork]
  \label{obs:forkprop}
  $\tn t_1[[p_1]] \sfork \tn t_2[[p_2]]$ implies
  $\tn t_1[[p_1r]] \sfork \tn t_2[[p_2r]]$ for $r\in\valid{t_1[p_1]}$.
\end{observation}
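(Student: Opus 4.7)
The plan is to do case analysis on which of the two rules in Definition~\ref{def:single-fork} was used to derive $\tn t_1[[p_1]] \sfork \tn t_2[[p_2]]$, and to observe that both rules already permit an arbitrary tail $r$ in their conclusion. The extension of a fork by a further $r$ then follows by re-applying the same rule with a longer tail.

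Before invoking the rules, I would check that $\tn t_1[[p_1 r]]$ and $\tn t_2[[p_2 r]]$ are legitimate term nodes, i.e.\ that $p_1 r \in \valid{t_1}$ and $p_2 r \in \valid{t_2}$. On the $t_1$ side this is immediate: $p_1 \in \valid{t_1}$ together with $r \in \valid{t_1[p_1]}$ give $p_1 r \in \valid{t_1}$ by a straightforward induction on $r$ using the indexing equations. On the $t_2$ side I note that in either rule the two subjects $t_1[p_1]$ and $t_2[p_2]$ have the same skeleton: they are literally equal in the \textit{closed} case, and in the \textit{let}-abs case they differ only in their de Bruijn indices via $\lift t[p]<q_1> = \lift t[p]<q_2>$. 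Hence the valid positions of $t_1[p_1]$ and of $t_2[p_2]$ coincide, and $r$ is valid in both.

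For the \textit{let}-abs case we have $t_1 = t_2 = t$, $p_1 = p q_1 r_0$, $p_2 = p q_2 r_0$ with $q_1$ locally closed in $t[p]$ and $\lift t[p]<q_1> = \lift t[p]<q_2>$. The premises depend only on $p, q_1, q_2$, not on the tail, so I simply re-apply the rule with $r_0 r$ in place of $r_0$ to obtain $\tn t[[p q_1 r_0 r]] \sfork \tn t[[p q_2 r_0 r]]$, which is exactly $\tn t_1[[p_1 r]] \sfork \tn t_2[[p_2 r]]$. The \textit{closed} case is analogous: the premises ``$t_1[p_1'] = t_2[p_2']$ and closed'' at the prong-root positions $p_1',p_2'$ (where $p_1 = p_1' r_0$, $p_2 = p_2' r_0$) are tail-independent, so substituting $r_0 r$ for $r_0$ yields the desired extended fork.

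The only ``obstacle,'' if one can call it that, is pure bookkeeping: making sure the implicit validity side-conditions of Definition~\ref{def:single-fork} continue to hold after enlarging the tail, which reduces to the skeleton-preservation observation above. Indeed, the inclusion of a universal tail variable $r$ in the conclusion of each rule of the definition was precisely engineered to make this closure property immediate, so no substantive difficulty arises.
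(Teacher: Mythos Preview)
Your proposal is correct and matches the paper's intent: the paper states this as an observation without proof, remarking just beforehand that ``the conclusion of both rules allow for an arbitrary position $r$ that `extends' the prongs of a known fork.'' Your case analysis and bookkeeping about validity (via skeleton preservation under $\lift\cdot<\cdot>$) make explicit exactly what the paper leaves implicit.
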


The relation for a single fork is reflexive. For every term node $\tn t[[p]]$,
a self-fork can be constructed using the \textit{let}-abs rule by taking $q_1 =
q_2 = \rootpos$. In addition, a fork is symmetric. Transitivity does not hold,
however. To obtain an equivalence, we take the transitive closure.

\begin{definition}[fork equivalence] $\tn t_1[[p_1]] \forkeq \tn
  t_2[[p_2]]$ is the transitive closure of $\tn t_1[[p_1]] \sfork \tn t_2[[p_2]]$.
\end{definition}

\subsection{Bisimilarity}
\label{sec:bisimulation-relation}

In addition to fork equivalence, we also formalize the bisimulation relation
introduced in Section~\ref{sec:intro-bisim}.

\begin{definition}[bisimilarity]
  \label{def:bisim}
  A binary relation $R$ on term nodes is called a \d{bisimulation} if for every
  pair of term nodes $(n_1, n_2)\in R$ and every $x\in
  \{\lamdown,\appleft,\appright,\varup\}$ the following holds:
  \begin{itemize}
  \item If $n_1 \gstep x n_1'$, then there
    exists $n_2'$ such that $n_2 \gstep x n_2'$ and $(n_1', n_2') \in R$.
  \item If $n_2 \gstep x n_2'$, then there
    exists $n_1'$ such that $n_1 \gstep x n_1'$ and $(n_1', n_2') \in R$.
  \end{itemize}
  Two term nodes $\tn t_1[[p_1]]$ and $\tn t_2[[p_2]]$ are \textit{bisimilar}, denoted
  $\tn t_1[[p_1]]\bisim\tn t_2[[p_2]]$, if there exists a bisimulation $R$ such
  that $(\tn t_1[[p_1]], \tn t_2[[p_2]])\in R$.
\end{definition}

It is well-known that bisimilarity is an equivalence relation, and that it is
itself a bisimulation. Note that unlike the informal definition in
Section~\ref{sec:intro-bisim}, this bisimulation relation does not directly
compare the labels of term nodes (the label of a node $\tn t[[p]]$ would be the
root symbol of $t[p]$). This is not needed, because the label of a node is fully
determined by the labels of its outgoing edges. This, together with the fact
that the transition system is deterministic, considerably simplifies our setup
compared to arbitrary transition systems. In particular, the notion of
\textit{bisimulation} coincides with the notion of \textit{simulation}, in which
the second clause of Definition~\ref{def:bisim} is omitted.

One of the main results of this paper, proved in Section~\ref{sec:proof-sketch}, is
that fork equivalence and bisimilarity are equal.
\begin{theorem}
  \label{thm:fork-bisim}
  $\tn t_1[[p_1]] \bisim \tn t_2[[p_2]]$ if and only if
  $\tn t_1[[p_1]] \forkeq \tn t_2[[p_2]]$.
\end{theorem}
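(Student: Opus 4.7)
The plan is to prove the two inclusions separately. The forward inclusion $\forkeq \Rightarrow \bisim$ is comparatively easy: since bisimilarity is an equivalence relation and $\forkeq$ is the transitive closure of $\sfork$, it suffices to prove $\sfork \subseteq \bisim$ by exhibiting an explicit bisimulation for each rule of Definition~\ref{def:single-fork}. For the \textit{closed} rule with $u = t_1[p_1] = t_2[p_2]$, the relation $R = \{(\tn t_1[[p_1 s]], \tn t_2[[p_2 s]]) : s \in \valid{u}\}$ is a bisimulation: the descending transitions clearly stay in $R$, and because $u$ is closed every $\varup$-transition reaches a binder at a position $p_i s'$ with $s' \subsetneq s$, matching on both sides. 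For the \textit{let-abs} rule I would take $R_0 = \{(\tn t[[p q_1 s]], \tn t[[p q_2 s]]) : s \in \valid{\lift t[p]<q_1>}\}$ together with the identity relation on all remaining term nodes of $t$. The hypothesis $\lift t[p]<q_1> = \lift t[p]<q_2>$, combined with local closure of $q_1$ and $q_2$ in $t[p]$, forces the bound-vs-free classification at each position $s$ to agree on the two sides. Hence descending transitions and $\varup$-transitions from locally bound variables stay inside $R_0$, while $\varup$-transitions from variables that are free in the local subject land on the \emph{same} target node above $p$ on both sides, which is handled by the identity part of the relation.

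The converse $\bisim \Rightarrow \forkeq$ is the main obstacle, and a naive structural induction does not work: bisimilar subjects can be as small as two variables while the only witness of their equivalence involves traversing a long $\varup$-chain deep into the shared context, exactly as in Figure~\ref{fig:many-jumps}. Instead, my plan is to route through the hashing algorithm developed in Sections~\ref{sec:naive-alg}--\ref{sec:alg-concrete}, following the triangular strategy of Figure~\ref{fig:proof-circular}. Concretely, I would establish three auxiliary implications: (a) fork-equivalent term nodes receive the same hash, (b) equal hashes imply bisimilarity, and (c) equal hashes imply fork equivalence. Combining (a) with (c) makes hash equality a syntactic witness of $\forkeq$, and combining this with (b) closes the cycle.

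Of these three, (a) is the gentlest: induction on the length of the $\sfork$-chain, with a case split on which single-fork rule justifies each step, is enough because the two rules describe operations---let-abstraction and closed-subterm identification---that the algorithm is designed to respect. Implication (b) is the one already flagged in the introduction: one exhibits, under a suitable extension of bisimulation, that every $\lambda$-term is bisimilar to its hashed image, forcing equal hashes to yield bisimilar nodes. The delicate step, and the principal obstacle, is (c): given two term nodes that the algorithm has collapsed into the same hash, one must reverse-engineer from the syntactic operations performed (chiefly the capture-avoiding substitutions that mirror the shape of the let-abs rule) an explicit chain of single forks witnessing the equivalence. This reconstruction has to be phrased purely operationally, without appealing to either $\forkeq$ or $\bisim$, since those are precisely the relations whose equality we are trying to establish.
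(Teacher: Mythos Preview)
Your direct argument for $\forkeq\Rightarrow\bisim$ is correct and in fact more elementary than what the paper does: the paper never exhibits an explicit bisimulation for a single fork, but instead obtains this direction as the composite $\forkeq\Rightarrow\text{hash}\Rightarrow\bisim$ via Theorems~\ref{thm:fork-to-alg} and~\ref{thm:alg-to-bisim}. Your construction of $R_0$ for the \textit{let}-abs rule is exactly right, including the observation that a $\varup$-transition from a variable free in the local subject lands on the \emph{same} node above $p$ on both sides.

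Your plan for $\bisim\Rightarrow\forkeq$, however, does not close. You propose to establish (a)~$\forkeq\Rightarrow\text{hash}$, (b)~$\text{hash}\Rightarrow\bisim$, and (c)~$\text{hash}\Rightarrow\forkeq$. Together with the first half this yields $\forkeq\Leftrightarrow\text{hash}$ and $\text{hash}\Rightarrow\bisim$ and $\forkeq\Rightarrow\bisim$, but there is no arrow \emph{into} $\text{hash}$ or into $\forkeq$ starting from $\bisim$; the ``cycle'' has a missing edge. To route through the algorithm you would need $\bisim\Rightarrow\text{hash}$, which you never list --- and in the paper that implication is obtained precisely by first proving $\bisim\Rightarrow\forkeq$, so it cannot be used here without circularity. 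Your item~(c), which you already flag as ``the principal obstacle,'' is not proved anywhere in the paper and would be at least as hard as the direct route.

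The paper's actual argument for $\bisim\Rightarrow\forkeq$ (Theorem~\ref{thm:bisim-to-fork}) is direct, but the induction is not on the subject term: it is strong induction on the \emph{position} $p_1$. One splits $p_1=p_{1,0}\lamdown p_{1,1}$ with $p_{1,1}$ locally closed in $t_1[p_{1,0}]$ and a free variable of the subject pointing to the binder at $p_{1,0}$; the bisimulation then forces a matching split $p_2=p_{2,0}\lamdown p_{2,1}$ with $\tn t_1[[p_{1,0}]]\bisim\tn t_2[[p_{2,0}]]$, to which the induction hypothesis applies. The remaining single fork $\tn t_1[[p_{1,0}\lamdown p_{1,1}]]\sfork\tn t_1[[p_{1,0}\lamdown p_{2,1}]]$ is supplied by a technical lemma (Lemma~\ref{lem:bisim-locally-closed-equal}) showing that bisimilarity plus local closure forces $\lift t[p]<q_1>=\lift t[p]<q_2>$. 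So the ``$\varup$-chains into the context'' that worried you are handled not by the algorithm, but by inducting on the context depth itself.
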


\begin{definition}[context-sensitive $\alpha$-equivalence]
  Since the main equivalence notion is captured both by $\forkeq$ and $\bisim$,
  we will use $\tn t_1[[p_1]] \calphaeq \tn t_2[[p_2]]$ to denote context-sensitive $\alpha$-equivalence and
  switch between the two interpretations at will.
\end{definition}

\section{Deciding Context-Sensitive
  $\alpha$-Equivalence Through Globalization}\label{sec:algorithms}
As we saw in the introduction, using syntactic equality on $\lambda$-terms with
de Bruijn indices is problematic in the presence of free variables. Such
variables need to be interpreted within a context in order to be meaningful. Our
approach to deciding whether or not two terms are $\alpha$-equivalent in a given
context is to \textit{globalize} the variables. We replace all de Bruijn indices
in a term with \textit{global variables}, which are structures that contain
exactly the required information to capture the context that is relevant to the
variable. After globalization, we can indeed compare two subterms syntactically
without having to consider the context in which they exist, because that context
has been internalized into the variables.

As it happens, the structure associated to a global variable is itself a
$\lambda$-term that may contain de Bruijn indices or further global variables.
This leads us to extend the grammar of $\lambda$-terms into that of $g$-terms.

\begin{definition}[$g$-terms] A $g$-term is generated by the grammar
  \[ t \Coloneqq \debruijn{i} \mid t\ t \mid \lambda\ t \mid \gvar t\] where a
  term of the form $\gvar t$ represents a \d{global variable} labeled by a
  $g$-term $t$. We consider any $\lambda$-term to also be a $g$-term, and
  trivially lift all operations and relations defined on $\lambda$-terms:
  \begin{itemize}
  \item Substitution is extended such that $\gvar{t}[\debruijn i
    \coloneq u] = \gvar{t}$, without traversing into $t$.
  \item Term indexing $t[p]$ behaves identical to $\lambda$-terms. Indexing does
    not extend into the structure of a global variable. The functions $\lift
    t<p>$, $\valid{t}$, $\vars{t}$, $\free{t}$ and $\bound{t})$ remain defined
    as before. Global variables are not part of the set $\vars{t}$. A global
    variable $\gvar{u}$ is always closed.
  \item The definition for $\forkeq$ remains as before. The definition of
    $\bisim$ is extended, but we postpone this until Section~\ref{sec:proof-sketch}.
  \end{itemize}
\end{definition}

We are now ready to describe algorithms that transform a $\lambda$-term into a
$g$-term that respects context-sensitive $\alpha$-equivalence. We start with a
slow, naive algorithm which is then made more efficient. Finally, we present a
practical OCaml program that processes a term and attaches an appropriate hash
to every subterm.

\subsection{A Naive Globalization Procedure}
\label{sec:naive-alg}
Contrary to the relation $\tn t_1[[p_1]]\calphaeq \tn t_2[[p_2]]$, the
globalization procedure does not operate on term nodes but rather on
closed $g$-terms. This works, because closed terms do not require a context.

\begin{definition}[naive globalization] Recursively define $\globnaive(t)$ from
  closed $g$-terms to closed $g$-terms as follows.
\begin{alignat*}{2}
  &\globnaive(\lambda\ t)&&=\lambda\ \globnaive(t[\debruijn 0 \coloneqq \gvar{\lambda\ t}])\\
  &\globnaive(t\ u)&&=\globnaive(t)\ \globnaive(u)\\
  &\globnaive(\gvar{t})&&=\gvar{t}
\end{alignat*}
\end{definition}
Due to the pre-condition on $\globnaive(t)$ that $t$ must be
closed, there is no need for a case for de Bruijn indices in the equations
above (a bare de Bruijn index is not closed). The pre-condition is maintained in
the recursion due to a substitution in case a $\lambda$ is encountered.

\begin{example}
  \label{ex:globalize-naive}
  The globalization of the term $\lambda\ \lambda\
  \debruijn 0\ \debruijn 1$ proceeds as follows:
  \[\globnaive(\lambda\ \lambda \ \debruijn 0\ \debruijn 1) =
  \lambda\ \globnaive(\lambda \ \debruijn 0\ g(\lambda\ \lambda \ \debruijn 0\ \debruijn 1)) =
  \lambda\ \lambda\ g(\lambda \ \debruijn 0\ \gvar{\lambda\ \lambda \ \debruijn 0\ \debruijn 1)}\ \gvar{\lambda\ \lambda \ \debruijn 0\ \debruijn 1}\]
\end{example}

In order
to understand how this algorithm works, we will first state the final theorem
that relates the algorithm to context-sensitive $\alpha$-equivalence.

\begin{theorem}[correctness of $\globnaive$]
  \label{thm:globalize-naive-correct}
  For $\lambda$-term nodes $\tn t_1[[p_1]]$ and $\tn t_2[[p_2]]$ we have $\tn
  t_1[[p_1]]\calphaeq \tn t_2[[p_2]]$ if and only if
  $\globnaive(t_1)[p_1] = \globnaive(t_2)[p_2]$.
\end{theorem}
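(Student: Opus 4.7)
Following Figure~\ref{fig:proof-circular}, since $\calphaeq$ is defined as either $\forkeq$ or $\bisim$ and Theorem~\ref{thm:fork-bisim} relates them, the plan is to close a loop by proving two implications: (a) $\tn t_1[[p_1]]\forkeq \tn t_2[[p_2]]$ implies $\globnaive(t_1)[p_1] = \globnaive(t_2)[p_2]$, and (b) $\globnaive(t_1)[p_1] = \globnaive(t_2)[p_2]$ implies $\tn t_1[[p_1]]\bisim \tn t_2[[p_2]]$. The key technical device underlying both directions is a characterization lemma, proved by induction on the length of $p$, stating roughly that $\globnaive(t)[p]$ equals a ``contextual'' globalization of the subject $t[p]$: indices of $t[p]$ that refer to binders above $p$ are uniformly replaced by specific $g$-terms determined by $t$ and $p$, while the inner structure is globalized normally.

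\textbf{Forward direction.} By transitivity of syntactic equality it suffices to handle a single fork. For the closed rule, $t_1[p_1] = t_2[p_2]$ has no free variables, so the outer substitution plays no role and the characterization lemma yields the equality directly. For the \textit{let}-abs rule with $q_1, q_2$ locally closed in $t[p]$ and $\lift t[p]<q_1> = \lift t[p]<q_2>$, local closure guarantees that every free index of $t[pq_j]$ refers to a binder above $p$; these indices therefore receive identical $g$-substitutions in both globalizations, and the remaining inner structure is identified by the hypothesis on the lifts. Extension by the common suffix $r$ follows because indexing into equal $g$-terms at the same sub-position yields equal results.

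\textbf{Backward direction.} I would show that the relation $R = \{(\tn t_1[[p_1]], \tn t_2[[p_2]]) \mid \globnaive(t_1)[p_1] = \globnaive(t_2)[p_2]\}$ is a bisimulation. Structural transitions in $\{\lamdown, \appleft, \appright\}$ are immediate: syntactically equal $g$-terms share top-level shape, so both sides step in lockstep and the sub-terms remain equal. The $\varup$ transition is the substantive case: if $\globnaive(t_i)[p_i] = \gvar{u}$ for $i\in\{1,2\}$, then each $t_i[p_i]$ is a variable bound at some binder position $p_i'$, and unwinding the algorithm shows that $u$ is precisely the partially-globalized form of that binder, depending only on the shared $\gvar{u}$; hence $\globnaive(t_1)[p_1']$ and $\globnaive(t_2)[p_2']$ are equal, placing the binder nodes back in $R$.

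\textbf{Main obstacle.} The principal difficulty is formulating the characterization lemma so that the interaction between capture-avoiding substitution of $g$-variables, the de Bruijn shifts it induces, and the recursive descent of $\globnaive$ along the spine of $t$ is captured precisely. With a sharp statement, both the \textit{let}-abs case in the forward direction and the $\varup$ case in the backward direction reduce to straightforward unwindings; with a sloppy statement, neither does, and one risks obscuring why local closure is exactly the right hypothesis and why $\gvar{u}$ uniquely identifies a binder.
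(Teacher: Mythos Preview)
Your plan follows the same circular scheme the paper uses (Figure~\ref{fig:proof-circular}): fork implies algorithmic equality, and algorithmic equality implies bisimilarity. Note, however, that the paper never spells out a standalone proof of Theorem~\ref{thm:globalize-naive-correct}; it only remarks on the intuition and then proves the analogous statements for the \emph{efficient} algorithm (Theorems~\ref{thm:fork-to-alg} and~\ref{thm:alg-to-bisim}), with a brief remark in Appendix~\ref{sec:alg-to-bisim} on how the $g$-term transition system would be adjusted for $\globnaive$.

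Where your approach genuinely diverges from the paper is the backward direction. The paper does \emph{not} build a bisimulation directly between $\lambda$-term nodes. Instead it conservatively extends the transition system to $g$-term nodes (adding a $\varup$ edge out of every $\gvar{\cdot}$), proves $\tn t[[\rootpos]]\bisim\tn\glob(t)[[\rootpos]]$ on that extended system, and concludes via the chain
\[
\tn t_1[[p_1]]\bisim\tn\glob(t_1)[p_1][[\rootpos]]=\tn\glob(t_2)[p_2][[\rootpos]]\bisim\tn t_2[[p_2]].
\]
Your proposal instead takes $R=\{(\tn t_1[[p_1]],\tn t_2[[p_2]])\mid\globnaive(t_1)[p_1]=\globnaive(t_2)[p_2]\}$ and checks it is a bisimulation on $\lambda$-term nodes directly. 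This works for $\globnaive$ precisely because the $g$-var produced at a variable position has the simple form $\gvar{\lambda\,s}$, where $\lambda\,s$ is the closed term that the algorithm held at the binder position; your characterization lemma then gives $\globnaive(t_i)[q_i]=\globnaive(\lambda\,s)$ for both $i$, closing the $\varup$ case. So your route is more elementary here, but the paper's detour through $g$-term bisimilarity is what lets the same proof architecture scale to the efficient algorithm, where the $g$-var payload is $r\ t$ rather than the binder itself and your direct argument would no longer go through cleanly.

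For the forward direction your sketch is fine and essentially matches the paper's Lemma~\ref{lem:closed-to-alg} and Lemma~\ref{lem:same-fork-to-alg}; the ``characterization lemma'' you posit is the $\globnaive$ analogue of what the paper gets from tracking the accumulated substitution list $\sigma$ in $\globscc$. Your identification of this lemma as the main technical hinge is accurate: once it is stated sharply (e.g.\ $\globnaive(t)[p]=\globnaive(t[p]\sigma_p)$ for the canonical list $\sigma_p$ of $g$-vars built from the binders along $p$), both the \textit{let}-abs case and the $\varup$ case are mechanical.
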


The full proof of this theorem is postponed until Section~\ref{sec:proof-sketch}. Here,
we present a intuitive argument for why the algorithm works. The crux of the
algorithm lies in the property that closed $\lambda$-terms encoded with de Bruijn
indices are $\alpha$-equivalent if and only if they are syntactically equal.

\begin{lemma}[correctness of closed terms]
  For closed terms $t_1$, $t_2$ we have
  $\tn t_1[[\rootpos]] \calphaeq \tn t_2[[\rootpos]]$ iff $t_1 =
  t_2$.
\end{lemma}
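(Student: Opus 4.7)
The plan is to handle the two directions separately, using the bisimulation interpretation of $\calphaeq$ (equivalent to $\forkeq$ by the forthcoming Theorem~\ref{thm:fork-bisim}). The backward direction is essentially free: if $t_1 = t_2$, then $\tn t_1[[\rootpos]]$ and $\tn t_2[[\rootpos]]$ are literally the same term node, so the diagonal relation is a bisimulation. Equivalently, the \textbf{closed} rule of $\sfork$ applies directly with $p_1 = p_2 = r = \rootpos$.

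For the forward direction, assume $\tn t_1[[\rootpos]] \bisim \tn t_2[[\rootpos]]$. I would first establish a \emph{tree-alignment} lemma by induction on $|p|$: for every $p \in \valid{t_1}$ we have $p \in \valid{t_2}$, $\tn t_1[[p]] \bisim \tn t_2[[p]]$, and $t_1[p]$ and $t_2[p]$ share the same top-level symbol. The inductive step works because each transition $\tn t_1[[p']] \gstep{x} \tn t_1[[p'x]]$ with $x \in \{\lamdown, \appleft, \appright\}$ must be matched deterministically by $\tn t_2[[p']] \gstep{x} \tn t_2[[p'x]]$, and the set of available outgoing labels dictates whether the subterm is a $\lambda$, an application, or a variable. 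Next I would prove a \emph{size-preservation} lemma: for a closed $t$ and $q, q' \in \valid{t}$, $\tn t[[q]] \bisim \tn t[[q']]$ implies $|t[q]| = |t[q']|$. This is a straightforward induction on $|t[q]|$: peel off the unique non-$\varup$ outgoing transition (if any), recurse on the immediate subterms via bisimulation, and use the variable case (size one on both sides) as the base.

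To match the de Bruijn indices, suppose $t_1[p] = \debruijn i$ and $t_2[p] = \debruijn j$ at some $p \in \vars{t_1}$, and let $q_1, q_2$ be the $\varup$-targets of $\tn t_1[[p]]$ and $\tn t_2[[p]]$ respectively, so $p = q_k \lamdown r_k$ with $\lamsize{r_1} = i$ and $\lamsize{r_2} = j$. The $\varup$-matching gives $\tn t_1[[q_1]] \bisim \tn t_2[[q_2]]$; combined with tree-alignment and transitivity of $\bisim$, this yields $\tn t_1[[q_1]] \bisim \tn t_1[[q_2]]$. Assuming $i \neq j$, WLOG $i < j$, comparing $\lamdown$-counts in the two decompositions $p = q_k \lamdown r_k$ forces $q_2 \subsetneq q_1$ as prefixes of $p$, making $t_1[q_1]$ a proper and hence strictly smaller subterm of $t_1[q_2]$; this contradicts the size-preservation lemma. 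Hence $i = j$, and together with tree-alignment we conclude $t_1[p] = t_2[p]$ for every $p$, so $t_1 = t_2$. The hard part is the size-preservation lemma: although its own proof is short, it plays a crucial role by ruling out the pathological case where a position could be bisimilar to one of its strict syntactic ancestors within the same closed term, which is precisely what lets us pin down de Bruijn indices in the face of the cyclic $\varup$ back-edges.
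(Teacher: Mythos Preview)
Your argument is correct, but it takes a different route from the paper. The paper does not argue directly: it packages the two closed terms into a single application $t_1\ t_2$, uses Lemma~\ref{lem:bisim-closed} to obtain $\tn (t_1\ t_2)[[\appleft]] \bisim \tn (t_1\ t_2)[[\appright]]$, and then invokes the general workhorse Lemma~\ref{lem:bisim-locally-closed-equal} (bisimilar locally closed subterms have equal lifts) with $p=\rootpos$, $q_1=\appleft$, $q_2=\appright$ to conclude $t_1=t_2$ in one stroke.

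Your approach instead unfolds the argument from scratch. The pieces you isolate already appear in the paper's build-up to Lemma~\ref{lem:bisim-locally-closed-equal}: your tree-alignment lemma is the combination of Lemmas~\ref{lem:bisim-skeleton} and~\ref{lem:bisim-subpath}, and your size-preservation lemma is the content of Lemma~\ref{lem:bisim-skeleton} together with Lemma~\ref{lem:valid-empty-subterm} (equal $\valid{\cdot}$ forces $s=\rootpos$). Your de~Bruijn matching step---transporting the $\varup$-target across to $t_1$ and then deriving a contradiction from a node being bisimilar to a strict ancestor---is precisely the bound-variable case of Lemma~\ref{lem:bisim-locally-closed-equal}. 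So your proof is a self-contained specialization of that lemma, avoiding the $t_1\ t_2$ packaging trick at the cost of re-deriving machinery the paper needs anyway for the bisimulation-to-fork direction. The paper's route is shorter because it amortizes the work; yours is more elementary and would stand alone if one only cared about closed terms.
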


See Lemma~\ref{lem:bisim-closed-equal} for a proof. Because the input of
$\globnaive$ is always closed, this lemma guarantees that the input term can
already be correctly compared. When a binder is encountered, we simply embed
this known-correct structure into a global variable and substitute it for any de
Bruijn index that references the binder. After the substitution, the subterm of
the binder is again closed and correct with respect to $\alpha$-equivalence. By
processing the entire term, all de Bruijn indices are replaced with a
global variable. At this point, every subterm is closed and can therefore be
compared syntactically with other (properly globalized) terms in order to
determine equality.

\subsection{Efficient Globalization}
\label{sec:alg-efficient}
The speed of $\globnaive(t)$ is dominated by the substitution we must perform
when we encounter a binder. Performing a substitution takes a linear amount of
time for a given term. Furthermore, a term of size $n$ may contain up to $O(n)$
binders. Therefore, in the worst case, $\globnaive(t)$ takes quadratic time.

\begin{example}
  \label{ex:naive-pathological}
  Consider the following pathological term of size $3n - 1$, containing $n$
  binders.
  \[\lambda x_1.\ \lambda x_2.\ \lambda x_3.\ \ldots\ \lambda x_n.\ x_n\ \ldots\ x_3\
    x_2\ x_1.\]
  The algorithm performs a substitution every time it encounters
  one of the $n$ binders. Further, each substitution must traverse a term whose
  size is at least $n$, resulting in at least $n^2$ steps.
\end{example}

To speed this
up, we would like to perform substitutions more lazily. If we accumulate
substitutions in a list $\sigma$, we can delay performing the substitution
$t\sigma$ until it is absolutely necessary.

How do we determine when $\sigma$ needs to be substituted? In the naive
algorithm, we rely on the property that the input term is always closed. Due to
lazy substitutions, we can no longer guarantee this. However, if a term is known
to not be $\alpha$-equivalent to any other term we might be interested in
comparing it to, even without the globalization substitutions, we can postpone
the substitution. Speeding up the algorithm relies on finding sufficiently many
subterms where we can skip substitution. Indeed, there are numerous simple
summaries that can be used to determine when a term is ``unique enough'' among a
set of other terms to skip the substitution step.

\begin{definition}[term summary]
  \label{def:term-summary}
  Let $|\cdot|$ be a function on $g$-terms to an arbitrary co-domain such that
  $\tn t_1[[p_1]] \forkeq \tn t_2[[p_2]]$ implies $|t_1[p_1]| = |t_2[p_2]|$.
\end{definition}

We will use term summaries in the contrapositive. That is, if the summary
$|t[p]|$ of a subterm is unique among the summaries of any other relevant subterm,
then it is not $\alpha$-equivalent to any of these subterms. We use the notation
$|\cdot|$ for term summaries because a rather useful example of a summary is the
size of the term: Any two $\alpha$-equivalent subterms are guaranteed
to have the same size. A stronger example of a term summary is the set of paths
$\valid{\cdot}$ of a term. Conversely, a rather weak example is the constant
function that maps every term to the same object. We need a summary that is
cheap to compute and compare, while distinguishing as many terms as possible.
The constant function is cheap but clearly distinguishes nothing. On the other
hand, $\valid{\cdot}$ distinguishes many terms but is expensive to compute and
compare. The size of a term is a good middle ground. It is cheap to compute,
cache, and compare while still distinguishing many terms.

\begin{lemma}
  The constant function, $\valid{\cdot}$, and the size of a term are valid term
  summaries.
\end{lemma}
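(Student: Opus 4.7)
The plan is to verify each of the three summary candidates against Definition~\ref{def:term-summary}, i.e.\ to check invariance under $\forkeq$. The constant function is immediate: by definition its outputs are all equal, so the implication holds vacuously. For the remaining two cases, since $\forkeq$ is defined as the transitive closure of $\sfork$, it suffices to check invariance for a single fork; transitivity of equality then lifts the result to the whole of $\forkeq$.

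So I would then analyse the two clauses of Definition~\ref{def:single-fork} separately. In the \emph{closed} rule, $t_1[p_1] = t_2[p_2]$ implies the syntactic equality $t_1[p_1r] = t_2[p_2r]$ on subjects, so any function of the subject is obviously preserved. In the \emph{\textit{let}-abs} rule we have $\tn t[[pq_1r]] \sfork \tn t[[pq_2r]]$ whenever $\lift t[p]<q_1> = \lift t[p]<q_2>$. The key small observation is that $\lift t[p]<q_i>$ differs from $t[pq_i]$ only in the numerical values stored at free-variable leaves (Definition~\ref{def:lift} only shifts de Bruijn indices by $\lamsize{q_i}$), and so the two have exactly the same tree skeleton. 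Consequently, further indexing at $r$ commutes with the lift on positions, and both $\valid{\cdot}$ and the leaf-count $|\cdot|$ of $t[pq_1r]$ and $t[pq_2r]$ are determined by that shared skeleton. Since the lifted terms are literally equal, the skeletons, sizes and valid-position sets of the two subjects therefore coincide.

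Combining the two cases, $\sfork$ preserves both $\valid{\cdot}$ and size, and transitive closure extends this to $\forkeq$. The only mild obstacle is making the lifting argument precise: one must verify that $\valid{\lift u<q>} = \valid{u[q]}$ (and similarly for size) whenever $q$ is locally closed, which follows directly from unfolding Definition~\ref{def:lift} by induction on $q$, since the lift only rewrites leaves labelled with free variables and never changes the underlying application/abstraction structure. Once this is in place, the lemma follows by a one-line argument for each of the three summaries.
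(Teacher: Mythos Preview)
Your proposal is correct and follows essentially the same approach as the paper, which dismisses the lemma in one line as ``Straightforward from the definition of fork equivalence and Definition~\ref{def:lift}.'' You have simply unpacked that one-liner: reduce to single forks via transitivity, handle the \emph{closed} rule by direct syntactic equality of subjects, and handle the \textit{let}-abs rule by observing that the lift of Definition~\ref{def:lift} only perturbs de~Bruijn indices at free-variable leaves and hence preserves both $\valid{\cdot}$ and term size.
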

\begin{proof}
  Straightforward from the definition of fork equivalence and Definition~\ref{def:lift}.
\end{proof}

We will use the term summary to find unique and non-unique terms. However, we
have not yet specified the background set to which a term should be compared for
uniqueness. Initially, one might think that we need to compare against the
entire infinite universe of potential $\lambda$-terms. Fortunately, that is not
the case. We can limit the set among which we need to compare to the
\textit{strongly connected component} of a term node.

\begin{definition}[strongly connected component]
  For a closed $g$-term $t$, define the \d{strongly connected component}
  $\SCC(t)$ as the set of all positions $p \in \valid{t}$ where for every
  nonempty prefix $p'$ of $p$, $t[p']$ is not closed.

\end{definition}

\begin{wrapfigure}[10]{r}{0.25\textwidth}
  \centerline{\includegraphics{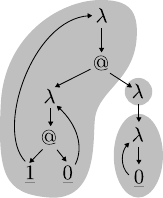}}
\end{wrapfigure}
We borrow the name ``strongly connected component'' from graph theory. In
particular, given a closed term $t$, the set \[\{ \tn t[[q]] \mid q \in \SCC(t)\}\]
forms the strongly connected sub-graph of nodes rooted in $t$.
\begin{example}
  The term $t = \lambda x. (\lambda y.\ x\ y)\ (\lambda y.\lambda z. z)$ is
  closed, and therefore we can ask what its strongly connected component is.
  Also note that $t$ contains two other closed subterms, for which we also
  have a strongly connected component. As such, there are three SCCs associated
  to $t$ and its subterms, as visualized on the right. Strongly connected
  components are disjoint and form a tree. SCCs may be singletons if the
  children of the root are closed.
\end{example}

Notice that because SCCs form a tree of closed terms, they can be processed
independently. If we know a procedure to globalize a single SCC, then we can
invoke this procedure recursively, either starting from the top-most SCC and
working our way down or the other way around. As such, we have reduced the
problem of globalization to individual strongly connected components. (This is a
common reduction for bisimulation algorithms.)

To efficiently globalize a SCC, we will calculate the set of subterms in the SCC
whose summary has one or more duplicate in the SCC. We will only need to
perform globalization substitutions in duplicate subterms because syntactic equality is
already strong enough to properly distinguish a non-duplicate term from all
other terms in the SCC.

\begin{definition}[duplicate SCC subterms]\label{def:duplicates}
  Let $\text{duplicates}(t)$ denote
  all strict subterms in the strongly connected component of $t$ whose term summary is not
  unique within the SCC.
  \[\text{duplicates}(t) = \{ t[p] \mid p,q \in \SCC(t) \wedge p,q \neq \rootpos
    \wedge p\neq q \wedge
    |t[p]| = |t[q]| \}\]
\end{definition}

One might object that this only guarantees globalization to give correct results
when comparing two subterms within the same SCC. What about two non-equivalent
subterms that do not share the same SCC? How do we guarantee that such terms are
not syntactically equal? For this, notice that two subterms can only be
$\alpha$-equivalent if all terms in their strongly connected component are also
pairwise $\alpha$-equivalent. In particular, the root of their respective SCCs
must be $\alpha$-equivalent. Because the root of a SCC is closed, it may be
safely syntactically compared. As such, where the naive algorithm would
substitute a $g$-var $\gvar{t}$, we can safely amend this to $\gvar{r\ t}$,
where $r$ is the root of the strongly connected component of $t$. This will
prevent us from inappropriately declaring two terms with different SCCs to be
$\alpha$-equivalent. We can now state an efficient globalization procedure.

\begin{definition}[efficient globalization]
  \label{def:eff-globalization}
  Recursively define $\glob$, $\globscc$ and
  $\globstep$:
  \begin{alignat*}{2}
    &\glob(r)&&=\globscc(r, [], r)\\[6pt]
    &\globscc(r, \sigma, \lambda\ t)&&=\lambda\ \globstep(r, (\gvar{r\ t} : \sigma), t) \\
    &\globscc(r, \sigma, t\ u)&&=\globstep(r, \sigma, t)\ \globstep(r, \sigma, u) \\
    &\globscc(r, \sigma, \gvar{t})&&=\gvar{t} \\
    &\globscc(r, \sigma, \debruijn i)&&=\debruijn i \sigma \\[6pt]
    &\globstep(r, \sigma, t)&&=
                               \begin{cases}
                                 \glob(t\sigma) & \text{if $t$ is closed or } t \in
                                                  \text{duplicates}(r)\\
                                 \globscc(r,\sigma,t) & \text{otherwise}
                               \end{cases}
  \end{alignat*}
  For $\glob(r)$, we maintain the precondition that $r$ is closed, while for
  $\globscc(r, \sigma, t)$ and $\globstep(r, \sigma, t)$ we maintain the
  precondition that $t\sigma$ is closed. Furthermore, it holds that there exists
  a position $p \in \SCC(r)$ such that $r[p] = t$. That is, $r$ is the
  root of the SCC in which $t$ resides. Finally, for $\globscc(r, \sigma, t)$, we expect that
  $t \not \in \text{duplicates}(r)$.
\end{definition}

Notice how the algorithm is defined mutually recursively between $\glob$,
$\globscc$ and
$\globstep$. There are two possible recursive paths, either back and forth
between $\globscc$ and $\globstep$, or with an detour through $\glob$.
Every time the algorithm calls $\glob(t\sigma)$, it crosses from one SCC to another. This
either happens because $t$ was already closed (and hence the start of a new SCC), or a new
SCC was created by performing the substitution $t\sigma$ because a duplicate was
found. The subsitution closes the term, creating a new SCC.

Similar to $\globnaive(t)$, we now claim that $\glob(t)$ behaves correctly with respect to
context-sensitive $\alpha$-equivalence.

\begin{theorem}[correctness of $\glob$]
  \label{thm:globalize-correct}
  For $\lambda$-term nodes $\tn t_1[[p_1]]$ and $\tn t_2[[p_2]]$ we have $\tn
  t_1[[p_1]]\calphaeq \tn t_2[[p_2]]$ if and only if
  $\glob(t_1)[p_1] = \glob(t_2)[p_2]$.
\end{theorem}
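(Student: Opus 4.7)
My plan is to reduce Theorem~\ref{thm:globalize-correct} to Theorem~\ref{thm:globalize-naive-correct} via a bridge lemma: for closed $\lambda$-terms $t_1, t_2$ and valid positions $p_1 \in \valid{t_1}$, $p_2 \in \valid{t_2}$, I would show that $\glob(t_1)[p_1] = \glob(t_2)[p_2]$ if and only if $\globnaive(t_1)[p_1] = \globnaive(t_2)[p_2]$. Chaining this bridge with the correctness of $\globnaive$ then yields both directions of the desired equivalence at once, and inherits all the semantic content already established for the naive algorithm.

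I would prove the bridge by induction on the mutually recursive structure of $\glob$, $\globscc$, and $\globstep$, strengthening the statement to carry the current SCC root $r$ and the accumulated substitution $\sigma$. The key invariant is that the triggering condition ``$t$ is closed or $t \in \text{duplicates}(r)$'' in $\globstep$ is compatible with context-sensitive $\alpha$-equivalence: because the summary $|\cdot|$ respects $\forkeq$ by Definition~\ref{def:term-summary}, two $\alpha$-equivalent subterms that live in $\alpha$-equivalent SCCs either both lie in $\text{duplicates}(r)$ or both escape it. In the escaping case both algorithms preserve whatever structural distinctness remains in $t$; in the duplicate case both recursively invoke $\glob$ on the substituted term, which matches what $\globnaive$ eagerly produces at every binder, so the two outputs agree on syntactic equality of subterms.

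The main obstacle is the \textit{let}-abs case of fork equivalence: two syntactically distinct positions $pq_1 r'$ and $pq_2 r'$ inside a single term $t$ must be collapsed by $\glob$ because $\lift t[p]<q_1> = \lift t[p]<q_2>$. I would handle this by showing that when $\globstep$ visits $t[pq_1]$ and $t[pq_2]$, both are genuine duplicates inside the same SCC and both are routed through the substitution arm; the resulting inputs to $\glob$ are $t[pq_1]\sigma$ and $t[pq_2]\sigma$, which reconstruct the common lifted subterm and therefore agree syntactically. When the positions reside in separate SCCs along the way, the $\gvar{r\ t}$ tokens installed by $\globscc$ at each binder ensure that only $\alpha$-equivalent SCC roots produce identical $\gvar$s, which is enough to force identical processing of the descendants by a straightforward induction on depth. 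The tedious part throughout is the bookkeeping around $r$ and $\sigma$ across mutually recursive calls; the real content is carried by the summary condition and the SCC decomposition of closed $\lambda$-terms.
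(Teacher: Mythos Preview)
Your overall route differs from the paper's. The paper does not reduce $\glob$ to $\globnaive$; it proves Theorem~\ref{thm:globalize-correct} directly via the cycle of Figure~\ref{fig:proof-circular}: Theorem~\ref{thm:bisim-to-fork} (bisimilarity $\Rightarrow$ fork), Theorem~\ref{thm:fork-to-alg} (fork $\Rightarrow$ equal $\glob$-outputs), and Theorem~\ref{thm:alg-to-bisim} (equal $\glob$-outputs $\Rightarrow$ bisimilarity). Theorem~\ref{thm:globalize-naive-correct} then falls out as the special case of a constant summary; it is not an independently available lemma you can lean on.

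More importantly, your bridge lemma has a real gap in the direction $\glob(t_1)[p_1]=\glob(t_2)[p_2]\Rightarrow\globnaive(t_1)[p_1]=\globnaive(t_2)[p_2]$. The invariant you state---that $\alpha$-equivalent subterms in $\alpha$-equivalent SCCs take the same branch in $\globstep$---only fires once you \emph{already know} the relevant subterms are $\alpha$-equivalent. That is available in the backward direction of the bridge (via naive correctness), but in the forward direction your only hypothesis is syntactic equality of two $g$-terms produced by $\glob$. From equal tokens $\gvar{r\ t}$ you recover $r_1=r_2$ and $t_1=t_2$ as raw de~Bruijn terms, but the $t$ inside the token is the \emph{unsubstituted} body under the binder, and there is no direct route from that to equality of the corresponding $\globnaive$ tokens $\gvar{\lambda\ t'}$, whose payload is an eagerly substituted, structurally different term. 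The paper closes precisely this direction with an idea absent from your sketch: it extends bisimilarity to $g$-terms by equipping each $\gvar{r\ t}$ with a $\varup$-transition back to the node it encodes, and then proves $\tn t[[p]]\bisim\tn\glob(t)[p][[\rootpos]]$ (Lemma~\ref{lem:globalize-bisim2}, Corollary~\ref{corr:globalize-bisim}). Without some analogue of this extension your induction on the mutually recursive structure of $\glob$ has nothing to anchor the forward implication.

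Your treatment of the \textit{let}-abs case---both $t[pq_1]$ and $t[pq_2]$ land in $\text{duplicates}(r)$ because their summaries agree, get routed through the substitution arm, and the substituted inputs coincide via the common lift---is essentially the content of the paper's Lemma~\ref{lem:fork0-to-alg}, so that piece is on target for the fork $\Rightarrow$ algorithm half.
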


We again postpone the full proof of this theorem to Section~\ref{sec:proof-sketch}.
Although the explanations around the algorithm in this section
should provide a solid intuition, an airtight correctness proof requires more extended
reasoning. To further build intuition about this more elaborate algorithm, the
following observation shows that it is (nearly) a generalization of the naive
algorithm.

\begin{observation}
  When one instantiates the term summary $|\cdot|$ with a constant function, the
  $\glob(t)$ function reduces to a function that is very similar to
  $\globnaive(t)$. The only difference is the precise substitution being
  performed when a binder is encountered:
  \begin{align*}
    \globnaive(\lambda\ t) &= \lambda\ \globnaive(t[\debruijn 0 \coloneq \gvar{\lambda\ t}])\\
    \glob(\lambda\ t) &= \lambda\ \glob(t[\debruijn 0 \coloneq \gvar{t\ t}])
  \end{align*}
  Both substitutions lead to correct results. In fact, it is sufficient to
  simply substitute the $g$-var $\gvar{t}$. The variations $\gvar{\lambda\ t}$
  and $\gvar{t\ t}$ do not change the distinguishing power of the $g$-var.
\end{observation}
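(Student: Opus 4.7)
The plan is to verify two sub-claims: first, that under a constant summary $|\cdot|$ the recursion of $\glob$ collapses to a form structurally identical to $\globnaive$, differing only in the payload attached to each inserted $g$-var; second, that the precise shape of this payload is immaterial for soundness.

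For the structural claim, I would unfold $\glob(\lambda\ t) = \globscc(\lambda\ t,\ [\,],\ \lambda\ t) = \lambda\ \globstep(\lambda\ t,\ [\gvar{(\lambda\ t)\ t}],\ t)$. With $|\cdot|$ constant, every two distinct strict positions in $\SCC(\lambda\ t)$ have matching summaries, so in the generic case where $\SCC(\lambda\ t)$ contains at least two strict positions the body $t$ is a duplicate and $\globstep$ takes its first branch, reducing to $\glob(t[\debruijn 0 \coloneqq \gvar{(\lambda\ t)\ t}])$ and matching the displayed equation up to the payload. The two degenerate cases---$t$ already closed (handled by the ``closed'' sub-branch of $\globstep$) and $\SCC(\lambda\ t) = \{\rootpos, \lamdown\}$ (so $t$ is not a duplicate and $\globstep$ stays in $\globscc$)---produce the same closed term by a direct computation, since in the latter case $\globscc$ simply carries $\sigma$ down to a variable leaf and applies it there. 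A short induction on the term then propagates the identity through all nested abstractions.

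For the correctness claim, I would appeal to Theorem~\ref{thm:globalize-correct} and inspect the structure of its forthcoming proof in Section~\ref{sec:proof-sketch}. The $g$-var at each binder serves purely as an injective identifier: the induction should require only that the payload function separates $\alpha$-inequivalent binder-bodies, and all three candidate payloads $\gvar{\lambda\ t}$, $\gvar{(\lambda\ t)\ t}$, and $\gvar{t}$ satisfy this, since the body $t$ alone already determines the enclosing abstraction. The argument therefore carries over unchanged for each choice. I expect the hardest step to be formalising this ``distinguishing power'' claim for the bare $\gvar{t}$ payload; I plan to handle it by parameterising the correctness proof of $\glob$ over an abstract payload function and then checking, for each variant, that its image separates non-equivalent binders---routine verifications, but the parameterisation itself requires careful bookkeeping.
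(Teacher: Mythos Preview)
The paper does not prove this Observation at all; it is stated as an informal remark meant to build intuition, with no accompanying argument. So there is nothing in the paper to compare your proposal against directly.

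Your proposal is essentially sound as a justification of the remark, and in fact more careful than the paper itself. Your unfolding of $\glob(\lambda\ t)$ is correct, and you are right that the payload one actually obtains is $\gvar{(\lambda\ t)\ t}$, not the $\gvar{t\ t}$ displayed in the Observation---the paper's displayed equation appears to be slightly loose on this point, though as the Observation itself stresses, all such payloads have equal distinguishing power, so the discrepancy is harmless. Your identification of the two degenerate cases is also apt; note that the second one ($\SCC(\lambda\ t)=\{\rootpos,\lamdown\}$ with $t$ open) forces $t=\debruijn 0$, so it is truly a boundary case and your direct computation handles it.

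For the second sub-claim (``distinguishing power''), your plan to parameterise the correctness proof over an abstract payload function is more machinery than the paper intends: the Observation is meant as a sanity check, not a theorem, and the paper simply asserts the sufficiency of $\gvar{t}$ without argument. If you want to substantiate it, the cleanest route is the one the Appendix implicitly suggests: in the extended bisimulation on $g$-terms (Section~\ref{sec:alg-to-bisim}), the $\varup$-transition out of a $g$-var depends only on recovering the binder node, and any of $t$, $\lambda\ t$, $(\lambda\ t)\ t$ encodes enough to do that. So rather than re-running the whole correctness proof, you can localise the argument to checking that each payload variant supports a suitable $\varup$-edge; the rest of Lemma~\ref{lem:globalize-bisim2} is then unchanged.
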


Now, given that the algorithm is known to behave correctly, we must ask the
question whether we have actually gained a substantial speed improvement. In the
beginning of this section, we attributed the source of inefficiency for the
naive algorithm to excessive substitutions. Interestingly, assuming that we
instantiate $|\cdot|$ to be term-size, the worst-case scenario presented in
Example~\ref{ex:naive-pathological} has now become a best-case scenario. This
is because the tree-structure of the example is almost entirely linear. The only
subterms with equal size are the variables (which all have size 1). Therefore, a
non-trivial duplicate is never encountered and substitution is only trivially
triggered when reaching a variable. Now, the substitutions take $O(n)$ time
instead of $O(n^2)$.\footnote{If substitution lists $\sigma$ are implemented
  using arrays, lookup and push operations take $O(1)$ amortized time.}

Conversely, the best-case (non-trivial) scenario for $\globnaive$ has now become
the worst-case scenario. Such a scenario involves a $\lambda$-term that forms a
perfectly balanced tree, where (almost) all subterms have a direct sibling that
is equal in size. This would cause the efficient algorithm to trigger a
substitution on every step. However, because the tree is now balanced, most
substitutions are performed on a small subterm. The substitutions then take at
most $O(n\log n)$ time.

To formalize this worst-case bound, we will show that each node in the syntax tree
is visited at most $O(\log n)$ times by the substitution function.\footnote{We
  analyze the cost of other functions, such as $\text{duplicates}(\cdot)$ in
  Section~\ref{sec:complexity}.} This is facilitated by assuming that every
$g$-term $t$ is annotated with a counter that is increased whenever the
substitution function traverses it. The visit count is retrieved using
$\text{sv}(t)$ and reset to zero (for all subterms) using $\text{sr}(t)$. We can
then prove the following efficiency lemma.

\begin{lemma}
  \label{lem:subst-logn}
  Let $n = \text{sv}(\glob(\text{sr}(t))[p])$. Then $|t| \geq 2^n$.
\end{lemma}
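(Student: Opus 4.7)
The plan is strong induction on $n$. For $n = 0$ the bound $|t| \geq 1 = 2^0$ is immediate, so all of the work is in the inductive step. Three ingredients will be used: (i) the substitution list $\sigma$ always consists of $g$-variables, each of which is a single indexable node, so $|c\sigma| = |c|$ for every $g$-term $c$; (ii) when $c$ is closed, $c\sigma = c$ is the identity and can be arranged not to traverse any node, so closed-triggered invocations of $\glob$ in Definition~\ref{def:eff-globalization} do not contribute to $\text{sv}$; and (iii) if $c$ is a duplicate at position $q \in \SCC(r)$ witnessed by some $q' \neq q$ with $|r[q']| = |r[q]|$, then neither $q$ nor $q'$ can be a prefix of the other, because proper sub-$g$-terms are strictly smaller under the size summary, so the duplicate and its sibling occupy disjoint subtrees of $r$.

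For the inductive step with $n \geq 1$, I inspect the sequence of substitutions that visit the node $v$ at position $p$. By Definition~\ref{def:eff-globalization}, substitution is performed only at the variable base case of $\globscc$ (which touches only $v$ itself) and at $\globstep$'s call $\glob(c\sigma)$ for closed or duplicate $c$; by (ii), only the duplicate case and the variable base case contribute actual visits. If no duplicate ancestor of $v$ is triggered, the only visit must be the variable lookup at $v$, so $v$ is a variable in the closed $\lambda$-term $t$, which must then contain a binder, giving $|t| \geq 2 = 2^n$. Otherwise, let $d$ be the outermost duplicate ancestor of $v$; since $d$ is outermost, its duplicate check occurs in $\SCC(t)$ itself, so (iii) provides a disjoint sibling of the same size inside $t$, and $|t| \geq 2|d|$. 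The recursive call $\glob(d\sigma_1)$ is a fresh invocation on the closed $g$-term $d\sigma_1$, accounting for the remaining $n-1$ visits to $v$; by (i), $|d\sigma_1| = |d|$. Applying the induction hypothesis to this sub-call yields $|d\sigma_1| \geq 2^{n-1}$, whence $|t| \geq 2|d| = 2|d\sigma_1| \geq 2^n$.

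The main obstacle will be pinning down ingredient (ii) rigorously: the lemma counts ``traversals by the substitution function'', and one must either exhibit an implementation of substitution that detects and shortcuts closed arguments, or re-interpret the counter as tracking only non-trivial substitutions. A secondary subtlety is ensuring that the notion of size used for the duplicate test (the chosen summary $|\cdot|$) matches the size appearing in the conclusion $|t| \geq 2^n$; this is essentially baked into the standing assumption of the complexity analysis that the summary is instantiated to term size. Once those bookkeeping points are settled, the inductive doubling argument rests only on the elementary structural observations (i) and (iii).
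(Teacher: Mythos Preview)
Your proof is correct and follows the same inductive doubling argument as the paper: isolate the first non-trivial substitution, use the duplicate's disjoint sibling to double, and apply the induction hypothesis to the fresh $\glob$ call. One minor slip: the outermost duplicate ancestor $d$ need not have its duplicate check occur in $\SCC(t)$ --- the algorithm may already have crossed into a nested SCC rooted at some $r$ (via a closed, non-duplicate subterm) before any duplicate is encountered, so the active root when $d$ is tested can be strictly smaller than $t$. The paper accounts for this by producing $r$, $\sigma$, $q$, $s$ with $|t|\geq|r|$ and $r[q]\in\text{duplicates}(r)$; your conclusion $|t|\geq 2|d|$ survives because $|t|\geq|r|\geq 2|d|$, so the fix is purely bookkeeping.
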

\begin{proof}
  By induction on $n$. The base case is trivial. For $n > 0$, we must unfold the
  algorithm until we reach the point where the first substitution occurs.
  Indeed, assuming that $u\sigma$ does not traverse into $u$ when $u$ is closed,
  a simple helper lemma can show the existence of $r$, $\sigma$, $q$ and $s$
  such that
  \begin{mathpar}
    \glob(t)[p] = \globstep(r, \sigma, r[q])[s] = \glob(r[q]\sigma)[s] \\
    |t| \geq |r| \and
    r[q] \in \text{duplicates}(r) \and
    \text{sv}(\glob(\text{sr}(r[q]\sigma))[s]) = n - 1.
  \end{mathpar}
  From the induction hypothesis, we then know that
  $|r[q]| = |r[q]\sigma| \geq 2^{n-1}$. Furthermore, we know there exists $q'$
  different from $q$ such that $|r[q']| = |r[q]|$. It then follows easily that
  $|t| \geq |r| \geq 2^n$.
\end{proof}

\begin{observation}
  The function $\globstep(r, \sigma, t)$ may sometimes substitute $\sigma$ even
  in cases where $t$ is not a binder. This is unnecessary. We can amend the
  algorithm to only perform a substitution when a binder is encountered. This
  will speed up the algorithm, but not asymptotically so. This optimization does
  somewhat complicate the proof of correctness. We omit these details.
\end{observation}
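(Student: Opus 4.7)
The plan is to formalize the amended algorithm as $\glob'$, $\globscc'$, $\globstep'$, where the only change is that $\globstep'(r, \sigma, t)$ triggers the recursive call $\glob'(t\sigma)$ only when $t$ is closed or when $t$ is a duplicate whose root symbol is a binder. For duplicates that are applications or variables, we fall through to $\globscc'(r, \sigma, t)$ rather than starting a new SCC analysis via substitution. The correctness goal is the analog of Theorem~\ref{thm:globalize-correct}: $\tn t_1[[p_1]] \calphaeq \tn t_2[[p_2]]$ iff $\glob'(t_1)[p_1] = \glob'(t_2)[p_2]$.

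The backward direction (syntactic equality of outputs implies $\calphaeq$ of inputs) should go through essentially unchanged from the proof of Theorem~\ref{thm:globalize-correct}. That argument shows that the output of $\glob$ is bisimilar to its input under the extended bisimulation for $g$-terms from Section~\ref{sec:proof-sketch}; skipping substitutions at non-binder duplicates removes some recursion but introduces no spurious structure, so the same bisimulation relating the input to the output still qualifies.

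The forward direction is where the optimization adds real subtlety. To replace the role played by the ``fresh closed term'' created by substitution in the original algorithm, I would prove a strengthened invariant by induction on SCC structure: whenever $\tn r[[q_1]] \calphaeq \tn r[[q_2]]$ for two positions within the same SCC rooted at $r$, the accumulated substitution lists $\sigma_1, \sigma_2$ at these positions produce syntactically equal $g$-vars on each pair of corresponding free de Bruijn indices of $r[q_1]$ and $r[q_2]$. With this invariant in hand, the non-substituting rules of $\globscc'$ propagate syntactic equality across $\alpha$-equivalent subterms by straightforward structural induction: the application case is compositional, the variable case reduces to the invariant itself, and only the binder and new-SCC cases require further argument.

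The main obstacle is maintaining this invariant across the binder rule $\globscc'(r, \sigma, \lambda\ t) = \lambda\ \globstep'(r, (\gvar{r\ t} : \sigma), t)$: the pushed labels $\gvar{r\ t_1}$ and $\gvar{r\ t_2}$ for corresponding binders at two $\alpha$-equivalent positions must themselves be syntactically equal. This is not automatic, since $\lambda\ t_1$ and $\lambda\ t_2$ need only be $\alpha$-equivalent, not syntactically equal. The crucial point is that the amended algorithm still substitutes at binder duplicates, so any pair of $\alpha$-equivalent binder subterms is either handled by the binder-duplicate case (producing identical pushed $g$-vars via the correctness of $\glob'$ on the resulting closed terms) or consists of binders that are unique up to their summary within the SCC, forcing syntactic equality directly. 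Threading these cases through the induction is where the proof ``somewhat complicates'' compared to Theorem~\ref{thm:globalize-correct}, since the invariant must be carefully coupled with the binder-duplicate behaviour of $\globstep'$ that is inherited unchanged from the unoptimized algorithm.
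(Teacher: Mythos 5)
The paper gives no proof of this observation---it explicitly ends with ``We omit these details''---so there is no official argument to compare yours against; what follows is an assessment of your plan on its own terms. Your formalization of the amendment (transition to $\glob'(t\sigma)$ only when $t$ is closed or is a duplicate whose head symbol is $\lambda$) is the natural reading, and your backward direction is right: the bisimulation argument of Appendix~\ref{sec:alg-to-bisim} is insensitive to which branch $\globstep$ takes, since Lemma~\ref{lem:globalize-bisim2} establishes bisimilarity with the input along both branches; the amendment merely changes which branch fires at non-binder duplicates.

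For the forward direction your plan is workable but misaligned with the paper's machinery in one respect, and your ``main obstacle'' is less of an obstacle than you suggest. First, the invariant should not be phrased via $\calphaeq$ between positions: Theorem~\ref{thm:fork-to-alg} reduces everything to a single fork, so what actually needs re-proving is the analogue of Lemma~\ref{lem:fork0-to-alg}, whose hypothesis is the \emph{syntactic} equation $\lift r[p]<q_1> = \lift r[p]<q_2>$. From that equation one argues by an inner induction on the common lifted term: applications are compositional; a variable reached without the recursion having crossed a binder below the fork point is necessarily free in $r[pq_i]$ and resolves into the \emph{shared} suffix $\sigma$ at matching offsets (since $\lamsize{q_1s}-\lamsize{q_1}=\lamsize{q_2s}-\lamsize{q_2}$); and a binder at corresponding positions $pq_1s$, $pq_2s$ either sits under a closed prefix (Lemma~\ref{lem:closed-to-alg}) or occupies two \emph{distinct} positions of $\SCC(r)$ with equal summaries, hence is a duplicate and still triggers the substitution, collapsing both sides to $\glob'$ of one and the same closed term. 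This last point is exactly why the pushed labels $\gvar{r\ t_1}$ and $\gvar{r\ t_2}$ never need to be compared for syntactically distinct $t_1,t_2$: the non-substituting $\lambda$-case of $\globscc'$ is only ever reached, on both sides of a fork simultaneously, at a pair of identical positions. You do land on this resolution, but it should be isolated as the key lemma rather than buried inside the invariant. Finally, the observation also asserts that the speed-up is not asymptotic; your sketch is silent on this, though it is immediate---the amendment only removes work, so the $O(n\log n)$ upper bound of Lemma~\ref{lem:subst-logn} persists, and the balanced-tree worst case still forces a substitution at every binder.
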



\subsection{A Concrete Hashing Implementation}
\label{sec:alg-concrete}

Although the efficient algorithm from the previous section can be shown to be
correct, there are some practical and theoretical shortcomings:
\begin{itemize}
\item The algorithm is not concrete enough to fully analyze its asymptotic
  complexity. In particular, the function $\text{duplicates}(t)$ is too
  abstract.
\item The use of $g$-terms to compare subterms for $\alpha$-equivalence is
  unsatisfactory because equality checking on $g$-terms takes $O(n)$ time.
  Instead, we would like to calculate a hash that can be compared in $O(1)$ time
  (at the expense of potential collisions).
\item The globalization process transforms $\lambda$-terms into $g$-terms,
  destroying any de Bruijn indices. This makes it difficult to further use the
  term as a normal $\lambda$-term. (Even though it is technically possible to recover
  the original $\lambda$-term from a globalized term, this is a non-trivial
  operation). We would like a globalization function that assigns appropriate
  hashes to each node of an $\lambda$-term, without modifying the term itself.
\end{itemize}

Here we present a more concrete algorithm implemented in the OCaml programming
language. A complete, executable reference implementation is
available~\cite{lasse_blaauwbroek_2023_10421517}.

\subsubsection{Datastructures}
\label{sec:datastructures}
We start with the definition of terms. We will need several variants
of $\lambda$-terms. To easily define them in a common framework, we define them
using a \textit{term functor}:
\begin{minted}{ocaml}
type 'a termf = Lam of 'a | Var of int | App of 'a * 'a [@@deriving map, fold]
\end{minted}
Instead of defining a term through direct recursion, we rather ``tie the knot'' in
this term functor. This allows us to decorate a term with additional information
when we need it by ``adding it to the knot.'' As an example, the simplest
recursive knot we can tie represents a pure, ordinary $\lambda$-term with no
additional information:
\begin{minted}{ocaml}
type pure_term = pure_term termf
\end{minted}
By adding an extra constructor \texttt{GVar} to the knot, we can also define a
structure that is isomorphic to $g$-terms:
\begin{minted}{ocaml}
type gterm = Term of gterm termf | GVar of gterm
\end{minted}
For our algorithm, we must efficiently calculate quite a few properties of
terms, including whether they are closed, their size and a hash. Information
related to this must be stored in each node of a term. Instead of providing a
concrete implementation for this, we rather posit the existence of an abstract
type \texttt{term} that is assumed to store all the required information. A
concrete implementation of this type can be found in supplementary
material~\cite{lasse_blaauwbroek_2023_10421517}. It comes with functions
\texttt{lift} and \texttt{case} that allows us to convert it to and from the
term functor, so that we can pattern match on it.
\begin{minted}{ocaml}
type term    val lift : term termf -> term    val case : term -> term termf
\end{minted}
The function \texttt{case} is the left inverse of \texttt{lift}, that is
\texttt{case (lift t) = t}. We do not have \texttt{lift (case t) = t}, because
information stored in \texttt{t} may be thrown away by \texttt{case}.
To illustrate how \texttt{lift} and \texttt{case} are used, we will write the
functions \texttt{from\_pure} and \texttt{to\_pure} that convert a
\texttt{pure\_term} into a \texttt{term} and vice versa.
For this, we will need the \texttt{map\_termf} function that has been automatically
derived for the term functor along with a \texttt{fold\_termf} function. They have the
following signature.
\begin{minted}{ocaml}
val map_termf  : ('a -> 'b) -> 'a termf -> 'b termf
val fold_termf : ('a -> 'b -> 'a) -> 'a -> 'b termf -> 'a
\end{minted}
We can use \texttt{map\_termf}, \texttt{lift} and \texttt{case} to write the
following recursive conversion functions.
\begin{minted}{ocaml}
let rec from_pure (t : pure_term) : term = lift (map_termf from_pure t)
let rec to_pure   (t : term) : pure_term = map_termf to_pure (case t)
\end{minted}
The \texttt{from\_pure} takes an ordinary $\lambda$-term, and lifts it into a
\texttt{term} decorated with information about term size, closedness and more.
Calculating the required information for this decoration happens in \texttt{lift}.
The \texttt{to\_pure} function does the opposite, because \texttt{case} will
forget any decorations that may be stored in the term.

The most important decoration of \texttt{term} is the hash we will assign to
each node through globalization. We consider two possible datatypes for a
\texttt{hash}. We can use integers as a hash if we want a datatype that is fast
to compare, but with the risk of encountering a collisions. When a collision is
not acceptable, we can use \texttt{gterm} as a hash. Here, we keep the datatype
for \texttt{hash} abstract (but keeping in mind our two target implementations):
\begin{minted}{ocaml}
type hash    val lift_hash : hash termf -> hash   val hash_gvar : hash -> hash
\end{minted}
In case \texttt{hash} is instantiated to be a \texttt{gterm}, we implement
\texttt{lift\_hash} and \texttt{hash\_gvar} as follows.
\begin{minted}{ocaml}
let lift_hash h = Term h and hash_gvar h = GVar h
\end{minted}
We assume a hash can be retrieved from any \texttt{term}
via function \texttt{hash}, with the following contract.
\begin{minted}{ocaml}
val hash : term -> hash
hash (lift t) = lift_hash (map_termf hash) t
\end{minted}
This means that when we convert a \texttt{pure\_term} into a \texttt{term}, the
hash for that term corresponds to the Merkle-style hash of its syntactic
structure (including de Bruijn indices). The idea of the globalization algorithm is to adjust these hashes by
annotating de Bruijn indices with a corrected, globalized hash.
To this end, we stipulate an alternative function for building a variable term with
a custom hash.
\begin{minted}{ocaml}
val gvar : hash -> int -> term
case (gvar h i) = Var i    &&    hash (gvar h i) = hash_gvar h
\end{minted}
Finally, we assume that we can retrieve the size of a term, and check if a term
is \texttt{gclosed}. This function returns false if and only if the given term
contains any free variable which was not built with \texttt{gvar}.
\begin{minted}{ocaml}
val size : term -> int        val gclosed : term -> bool
\end{minted}
\begin{wrapfigure}{r}{0.45\textwidth}
  \centerline{\includegraphics{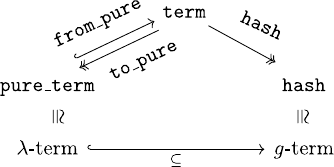}}
  \caption{OCaml datastructures and their mathematical counterparts as a commutative diagram.}
  \label{fig:datastructures}
\end{wrapfigure}
Figure~\ref{fig:datastructures} summarizes the datastructures we have built. A
\texttt{pure\_term} is isomorphic to a mathematical $\lambda$-term, and a
\texttt{hash} is isomorphic to a $g$-term (if the hash is instantiated as a
\texttt{gterm} and not an \texttt{int}). One can see a \texttt{term} as a pair
that contains a \texttt{pure\_term} and a \texttt{hash}.

\subsubsection{Calculating Duplicates}
We now turn our attention to the efficient calculation of $\text{duplicates}(r)$
from Definition~\ref{def:duplicates}. Note that
this set actually contains more terms than we need. In particular, for any $t
\in \text{duplicates}(r)$, we have no need for further sub-terms of $t$ to be
included in the set. This is because the algorithm is guaranteed
to transition from $\globscc$ to $\glob$ once it encounters $t$, which means
that a new SCC with different duplicates will become active. It is not difficult
to show that the algorithm behaves identically when we omit these irrelevant terms.

To efficiently calculate this reduced set of duplicate node terms, we require
the property that $|t| > |t[p]|$ for any $p \in \valid{t}$. This is satisfied by
instantiating the term summary with term size: The size of a subterm of $t$ is
smaller than the size of $t$ itself. We can now find duplicates by inserting
terms into a priority queue keyed to the size of the terms. We start with a
singleton queue that only contains the root of an SCC. Then, we retrieve all
terms whose key is equal to the largest key in the queue. Initially, this is
only the root of the SCC. If we have retrieved multiple terms, we know that they
are duplicates of each other. If we have retrieved only a single term, we know
that it cannot have a duplicate because all other terms in the queue are
smaller. We then insert the children of that unique term into the queue. We
iteratively retrieve and re-insert items into the queue until we have exhausted
all terms in the SCC. In OCaml code, this procedure is as follows.

\begin{minted}{ocaml}
val Heap.pop_multiple : Heap.t -> int * term list * Heap.t

let calc_duplicates (t : term) : IntSet.t =
  let step q t = if gclosed t then q else Heap.insert t q in
  let rec aux queue =
    match Heap.pop_multiple queue with
    | None -> IntSet.empty
    | Some (_,  [t], queue) -> aux (fold_termf step queue (case t))
    | Some (size, _, queue) -> IntSet.add size (aux queue)
  in aux (Heap.insert t Heap.empty)
\end{minted}

Note that unlike $\text{duplicates}(t)$, \texttt{calc\_duplicates(t)} does not
output a set of duplicate terms. Instead, it outputs a set of duplicate sizes.
To check if a term is duplicated in a SCC, one can simply check if the size of
that term is duplicated.

\subsubsection{Globalization}
\label{sec:fast-hashing}
Before we can define our \texttt{globalize} function, we must first define an
OCaml equivalent to substitutions. On the mathematical level, we substitute
$g$-vars for de Bruijn indices. The corresponding concept on the OCaml level is
to decorate a de Bruijn index with a hash. This is done through the function
\texttt{set\_hash}:
\begin{minted}{ocaml}
let rec set_hash (n : int) (h : hash) (t : term) : term =
  if gclosed t then t (* do not modify existing g-vars *)
  else match case t with
    | Lam t -> lift (Lam (set_hash (n+1) h t))
    | Var i -> if n = i then gvar h i else t
    | t     -> lift (map_termf (set_hash n h) t)
\end{minted}
A substitution $t[\debruijn i \coloneq \gvar{u}]$ can be seen as roughly
equivalent to \texttt{set\_hash i u t}. In addition to setting a single hash,
we must have the ability to set a sequence of hashes, similar to a substitution
$t\sigma$. For this, we have a datatype \texttt{hashes}, which is morally just a
list of hashes. However, a naive linked list would be too inefficient for
lookup. A more efficient implementation based on sets is out of scope of this
text. Instead, we specify \texttt{hashes} as an abstract datatype with the
following functions.

\begin{minted}{ocaml}
type hashes                  val push_hash  : hashes -> hash -> hashes
val empty_hashes : hashes    val set_hashes : hashes -> term -> term
\end{minted}
A simultaneous substitution $t\sigma$ can be seen as roughly equivalent to
\texttt{set\_hashes sigma t}.

We are now ready to write our globalization function in OCaml. The following is
essentially a direct transliteration of the equations from
Definition~\ref{def:eff-globalization} to OCaml.
\begin{minted}{ocaml}
let rec globalize (r : term) : term =
  let duplicates = calc_duplicates r in
  let rec globalize_scc (s : hashes) (t : term) =
    match case t with
    | Lam t' ->
      let s = push_hash s (hash (lift (App (r, t)))) in
      lift (Lam (globalize_step s t'))
    | Var _ -> set_hashes s t
    | t -> lift (map_termf (globalize_step s) t)
  and globalize_step s t =
    let t = if IntSet.mem (size t) duplicates then set_hashes s t else t in
    if gclosed t then globalize t else globalize_scc s t
  in globalize_scc empty_hashes r
\end{minted}
Following the correctness statement for the mathematical version of the
algorithm in Theorem~\ref{thm:globalize-correct}, we can state the correctness
of the OCaml version as follows. Note that this theorem relies on extending term
indexing $t[p]$ to the OCaml realm.
\begin{theorem}
  \label{thm:alg-correct}
  Let $\tn t_1[[p_1]]$, $\tn t_2[[p_2]]$ be two term nodes, and
  \texttt{t1}, \texttt{t2} the canonical embeddings of $t_1$, $t_2$ as
  an OCaml \texttt{term}.
  If $\tn t_1[[p_1]] \bisim \tn t_2[[p_2]]$, then
  \[
  \texttt{hash ((globalize t1)}[p_1]\texttt{)} =
  \texttt{hash ((globalize t2)}[p_2]\texttt{)}
  \]
  The reverse implication is true if \texttt{lift\_hash} and
  \texttt{hash\_gvar} are injective, and have disjoint images.
\end{theorem}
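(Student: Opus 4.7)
The plan is to reduce Theorem~\ref{thm:alg-correct} to the already-established Theorem~\ref{thm:globalize-correct} by showing that the OCaml implementation faithfully realizes the mathematical $\glob$. I would introduce a semantic interpretation $\llbracket\cdot\rrbracket$ sending each OCaml \texttt{term} to its underlying $g$-term: a node built with \texttt{lift} is interpreted compositionally via the term functor, while a node \texttt{gvar h i} is interpreted as the global variable $\gvar{\llbracket h\rrbracket}$ (using the reference instantiation in which \texttt{hash} is \texttt{gterm}). By construction, the canonical embedding of a $\lambda$-term $t$ as an OCaml \texttt{term} \texttt{t} satisfies $\llbracket\texttt{t}\rrbracket = t$.

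First I would prove a Merkle-style lemma: by induction on the two constructors of \texttt{term}, using the contracts for \texttt{hash (lift \_)} and \texttt{hash (gvar \_ \_)}, the value \texttt{hash t} is entirely determined by $\llbracket t\rrbracket$. Under the injectivity and disjoint-image hypothesis on \texttt{lift\_hash} and \texttt{hash\_gvar}, the same induction yields the converse: $\texttt{hash t1} = \texttt{hash t2}$ implies $\llbracket t_1\rrbracket = \llbracket t_2\rrbracket$ as $g$-terms.

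Next I would prove, by induction along the mutual recursion of Definition~\ref{def:eff-globalization}, that the OCaml functions \texttt{globalize}, \texttt{globalize\_scc}, and \texttt{globalize\_step} simulate the mathematical $\glob$, $\globscc$, and $\globstep$ through $\llbracket\cdot\rrbracket$, with the term summary instantiated as size. Two sub-lemmas are needed. First, \texttt{set\_hashes s t} corresponds to the simultaneous substitution $\llbracket t\rrbracket\sigma$ where $\sigma$ is the sequence of $g$-vars carried by \texttt{s}; because \texttt{set\_hash} stops at \texttt{gclosed} subterms, this exactly matches the convention that substitution does not traverse into $\gvar{\cdot}$. Second, \texttt{calc\_duplicates r} returns precisely the set of sizes appearing more than once in the SCC of $r$, with descendants of already-found duplicates pruned, and the remark at the start of Section~\ref{sec:fast-hashing} justifies that this reduced form suffices; since the summary is size, membership of \texttt{size t} in the computed set is equivalent to $t \in \text{duplicates}(r)$ for the non-root SCC positions that the algorithm actually tests.

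Combining these ingredients, the forward direction is immediate: if $\tn t_1[[p_1]] \bisim \tn t_2[[p_2]]$ then $\glob(t_1)[p_1] = \glob(t_2)[p_2]$ by Theorem~\ref{thm:globalize-correct}, and the Merkle lemma yields equality of the OCaml hashes at $p_1$ and $p_2$. The converse uses the injectivity hypothesis to push hash equality back through $\llbracket\cdot\rrbracket$, recovering $g$-term equality of the \texttt{globalize} outputs at those positions and hence bisimilarity via Theorem~\ref{thm:globalize-correct}. The main obstacle will be the simulation argument: one must verify that the lazy substitution scheme---where \texttt{set\_hashes} is only materialized upon reaching a \texttt{Var}, and \texttt{globalize\_scc} pushes \texttt{hash (lift (App (r, t)))} to represent a fresh $\gvar{r\ t}$---produces pointwise the same $g$-term as the eager mathematical $\globscc$, including at intermediate nodes whose hashes are built from the Merkle structure of already-globalized children rather than from an explicit substitution pass.
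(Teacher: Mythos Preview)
Your proposal is essentially correct and is precisely the approach the paper gestures at. The paper itself states this theorem ``without further proof,'' merely remarking that one can verify
\[
\texttt{hash ((globalize (from\_pure t))}[p]\texttt{)} = \glob(t)[p]
\]
when \texttt{hash} is instantiated as \texttt{gterm}; your plan---a semantic interpretation $\llbracket\cdot\rrbracket$ into $g$-terms, a Merkle-style determinacy lemma for \texttt{hash}, and a simulation of the OCaml recursion against Definition~\ref{def:eff-globalization}---is exactly the natural fleshing-out of that remark, after which Theorem~\ref{thm:globalize-correct} closes both directions.

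One concrete wrinkle to anticipate in the simulation argument: in the OCaml \texttt{Lam} case the code pushes \texttt{hash (lift (App (r, t)))} where \texttt{t} is the \emph{whole} $\lambda$-term, whereas Definition~\ref{def:eff-globalization} pushes $\gvar{r\ t}$ with $t$ the \emph{body}. As the Observation following that definition notes, such variations in the $g$-var payload do not affect distinguishing power, but they do mean the correspondence $\llbracket\texttt{globalize t}\rrbracket = \glob(t)$ is not literal on the nose; you will need either to parametrize the mathematical definition over the choice of payload, or to argue that the correctness proof in Section~\ref{sec:proof-sketch} is insensitive to this slack. This is exactly the kind of detail your final paragraph already flags as the ``main obstacle.''
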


We state this theorem without further proof. However, it is straightforward to verify that
\[\texttt{hash ((globalize (from\_pure t))[p])} = \text{globalize}(t)[p]\]
if one instantiates the type \texttt{hash} with \texttt{gterm}. This provides a
clear link between the mathematical algorithm and the OCaml algorithm.

\subsubsection{Algorithmic Complexity}\label{sec:complexity}

We will now show that the algorithm presented in Section~\ref{sec:fast-hashing}
runs in $O(n\log n)$ time, where $n$ is the size of the term being globalized.
In Lemma~\ref{lem:subst-logn} we already showed that the \texttt{set\_hashes} function
touches each node at most $O(\log n)$ times. Furthermore, when a variable is
encountered for which a hash should be set, the lookup for the correct hash can
be done in $O(\log n)$ time. There are at most $n$ variables, and each variable
is assigned a hash exactly once. This demonstrates that the total cost of \texttt{set\_hashes}
remains within the budget.

To analyze the remaining functions, note that the traversal performed by the
mutually recursive functions \texttt{globalize}, \texttt{globalize\_scc} and
\texttt{globalize\_step} visits every node of a term exactly once. As such, it
suffices to verify that each invoked helper function spends no more than
$O(\log n)$ time per node. For most helper functions, like \texttt{gclosed},
\texttt{size} and \texttt{IntSet.mem} this is easy to verify.

The function \texttt{calc\_duplicates} is a bit more tricky. This function is
invoked once each time \texttt{globalize} is called with a fresh SCC. Its goal
is to calculate the set of nodes where we transition back from
\texttt{globalize\_scc} to \texttt{globalize}. As such, it touches exactly the
same set of nodes as the subsequent call to \texttt{globalize\_scc}. Therefore,
we can attribute the time taken for each node by \texttt{calc\_duplicates} to
this function call.
Processing a node entails inserting it into a queue in $O(\log n)$ time and
eventually retrieving it from the queue in $O(\log n)$ time. Therefore, we
stay within the available $O(\log n)$ time budget.

\section{Sketch of Correctness Proofs}
\label{sec:proof-sketch}
\begin{wrapfigure}[14]{r}{0.36\textwidth}
  \vspace{-2em}
  \includegraphics{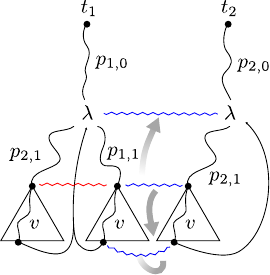}
  \caption{Schematic proof of Theorem~\ref{thm:bisim-to-fork}. Blue connections denote bisimilarity, red a single fork.}
  \label{fig:bisim-to-fork}
\end{wrapfigure}
The next three theorems give a sketch how to prove that bisimulation is equal to
fork-equivalence, and that the globalization algorithm is correct.
More fleshed out versions of these theorems can be found in
Appendix~\ref{sec:proofs}.
Each theorem is responsible for one of the implication arrows in
Figure~\ref{fig:proof-circular}.
We assume that $\tn t_1[[p_1]]$, $\tn t_2[[p_2]]$ are two $\lambda$-term nodes.

\begin{restatable}{theorem}{BisimToForkThm}
  \label{thm:bisim-to-fork}
  If $\tn t_1[[p_1]] \bisim \tn t_2[[p_2]]$, then
  $\tn t_1[[p_1]] \forkeq \tn t_2[[p_2]]$.
\end{restatable}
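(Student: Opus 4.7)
The plan is to produce an explicit chain of single forks witnessing $\tn t_1[[p_1]] \forkeq \tn t_2[[p_2]]$, guided by a bisimulation relation $R$ that contains $(\tn t_1[[p_1]], \tn t_2[[p_2]])$. Figure~\ref{fig:bisim-to-fork} depicts this as a ladder that alternates between bisimilarity (blue) and single forks (red), terminating in a configuration where the two sides collapse to a syntactic equality.

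The base case I would handle first is when $t_1[p_1]$ is closed. Because bisimulation preserves closedness of the subject (a closed subject has no $\varup$ transitions escaping its own downward reach, so any bisimilar subject must enjoy the same property), and because two closed bisimilar de Bruijn terms are syntactically identical by the counterpart of Lemma~\ref{lem:bisim-closed-equal}, the ``closed'' rule of Definition~\ref{def:single-fork} applied with suffix $r = \rootpos$ immediately yields $\tn t_1[[p_1]] \sfork \tn t_2[[p_2]]$. More generally, whenever $p_1, p_2$ share prefixes $p_1', p_2'$ with $t_1[p_1'] = t_2[p_2']$ closed, the same rule discharges the goal with the common suffix as $r$.

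For the inductive step in which no such closed common ancestor is available, I would climb along $\varup$ transitions out of the free variables of the subject, using the binder-correspondence enforced by $R$ to build, or at least exhibit abstractly, a common ``unfolded'' term $u$ in which pairs of bisimilar subterms of $t_1$ (respectively of $t_2$) are identified into a single occurrence. Each such identification is exactly a let-abstraction, i.e.\ an instance of the \textit{let}-abs rule of a single fork read in reverse. Applying this construction on each side and chaining the resulting single forks through transitivity of $\forkeq$ delivers $\tn t_1[[p_1]] \forkeq \tn u[[q]] \forkeq \tn t_2[[p_2]]$ for an appropriate position $q$. Termination of the construction is controlled by a well-founded measure counting pairs of bisimilar positions in $t_1$ and $t_2$ that are not yet syntactically identified; each non-trivial fork strictly reduces this count.

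The main obstacle will be discharging the local-closedness side condition of the \textit{let}-abs rule throughout the iterated construction: the context split at position $p$ must dominate every free variable of the subject being shared, while the bisimulation may identify subterms whose free variables point to binders at genuinely different depths. Ensuring that at every rung of the ladder the chosen split in the let-abs rule is locally closed, while simultaneously keeping the measure decreasing, is the delicate bookkeeping that underlies the otherwise intuitive ladder picture in Figure~\ref{fig:bisim-to-fork}.
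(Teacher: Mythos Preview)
Your inductive step has a genuine gap. The ``common unfolded term $u$'' is never constructed, and the single-fork rules do not obviously support it: the \textit{let}-abs rule only relates two positions inside the \emph{same} ambient term, so any chain of forks must stay within $t_1$ (or within $t_2$) until a closed subject licenses the \textit{closed} rule to cross over. You assume the subjects are open and never explain how your process manufactures a closed intermediary, nor how a fork from $\tn t_1[[\cdot]]$ to $\tn u[[\cdot]]$ could arise when $u \neq t_1$. Your proposed well-founded measure is also too coarse to see that it decreases.

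More critically, you flag the local-closedness side condition as the main obstacle but offer no strategy. The paper handles it \emph{by construction}: it runs strong induction on $p_1$ under the prefix order and, at each step, splits $p_1 = p_{1,0}'\,p_{1,1}$ by taking the \emph{shortest} prefix $p_{1,0}'$ for which $p_{1,1}$ is locally closed in $t_1[p_{1,0}']$; minimality forces the last symbol of $p_{1,0}'$ to be $\lamdown$, so $p_1 = p_{1,0}\lamdown p_{1,1}$ with the side condition already in hand. A free variable of $t_1[p_1]$ pointing to that binder, together with the bisimulation, yields a matching split $p_2 = p_{2,0}\lamdown p_{2,1}$ with $\tn t_1[[p_{1,0}]] \bisim \tn t_2[[p_{2,0}]]$. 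The induction hypothesis on the strict prefix $p_{1,0}$ plus sub-fork then gives $\tn t_1[[p_{1,0}\lamdown p_{2,1}]] \forkeq \tn t_2[[p_2]]$, and a single remaining fork inside $t_1$, from $p_{1,0}\lamdown p_{1,1}$ to $p_{1,0}\lamdown p_{2,1}$, closes the chain. That last fork is discharged by a technical lemma your outline is missing: if $\tn t[[pq_1]] \bisim \tn t[[pq_2]]$ and $q_1$ is locally closed in $t[p]$, then $\lift t[p]<q_1> = \lift t[p]<q_2>$. Without this lemma and the prefix-minimal split, the local-closedness bookkeeping cannot be discharged.
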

\begin{proofsketch}
  When two term nodes are bisimilar, we
  must find a sequence of places in their contexts where
  \textit{let}-abstractions can be introduced until the subjects become equal.
  Each \textit{let}-abstraction represents a single fork. Then, through
  transitivity, this sequence of single forks demonstrates fork equivalence.
  Finding this sequence of single forks proceeds by strong induction on the path
  $p_1$. That is, we assume the theorem holds for all strict prefixes of $p_1$.
  Then we make a split $p_1 = p_{1,0}\lamdown p_{1,1}$ such that $p_{1,1}$ is
  locally closed in $t_1[p_{1,0}]$ and there exists a free variable $v \in
  \free{t_1[p_1]}$ that references the binder at $t_1[p_{1,0}]$.\footnote{If no
    such split exists, $t_1[p_1]$ is closed, making the theorem trivial.} This
  situation is illustrated in Figure~\ref{fig:bisim-to-fork}. The bisimulation
  relation then guarantees that a similar split $p_2 = p_{2,0}\lamdown p_{2,1}$
  can be made such that $\tn t_1[[p_{1,0}]] \bisim \tn t_2[[p_{2,0}]]$. These
  two splits represent the bottom-most location where we introduce a
  \textit{let}-abstraction. All the remaining \textit{let}-abstractions that
  need to be introduced along the paths $p_{1,0}$ and $p_{2,0}$ are established
  through the induction hypothesis, which allows us to obtain $\tn
  t_1[[p_{1,0}]] \forkeq \tn t_2[[p_{2,0}]]$ and hence also $\tn
  t_1[[p_{1,0}\lamdown p_{2,1}]] \forkeq \tn t_2[[p_{2,0}\lamdown p_{2,1}]]$. We
  now only need to establish the single fork $\tn t_1[[p_{1,0}\lamdown p_{1,1}]]
  \sfork \tn t_1[[p_{1,0}\lamdown p_{2,1}]]$ (illustrated by a red connection in
  Figure~\ref{fig:bisim-to-fork}). This fork can be established using a technical
  lemma that relies on the fact that $p_{1,1}$ is locally closed in
  $t_1[p_{1,0}]$.\qedhere
  \par
\end{proofsketch}

\begin{restatable}{theorem}{ForkToAlgThm}
  \label{thm:fork-to-alg}
  If
  $\tn t_1[[p_1]] \forkeq \tn t_2[[p_2]]$, then
  $\glob(t_1)[p_1] = \glob(t_2)[p_2]$.
\end{restatable}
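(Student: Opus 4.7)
\begin{proofsketch}
The plan is to reduce to single forks and then appeal to transitivity of equality. Since $\forkeq$ is by definition the transitive closure of $\sfork$, it suffices to prove the implication for $\tn t_1[[p_1]] \sfork \tn t_2[[p_2]]$. Moreover, both rules of Definition~\ref{def:single-fork} carry an arbitrary common suffix $r$ into their conclusions, and ordinary $g$-term indexing preserves equality, so it is enough to handle $r = \rootpos$ and then apply $[\cdot\, r]$ to both sides of the derived equality (using that $\glob$ preserves the tree skeleton, so $r \in \valid{\glob(t_i)[p_i]}$).

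The closed-fork case rests on an algorithmic lemma I would prove first: if $t[p]$ is closed, then $\glob(t)[p] = \glob(t[p])$. This is verified by tracing the mutual recursion of Definition~\ref{def:eff-globalization}: along the path from the root of $t$ to $p$, whenever $\globstep$ encounters a closed subterm it dispatches to $\glob$ on that subterm, and the accumulated substitution $\sigma$ acts trivially on a closed term. With this lemma in hand the closed case is immediate, since $t_1[p_1] = t_2[p_2]$ gives $\glob(t_1)[p_1] = \glob(t_1[p_1]) = \glob(t_2[p_2]) = \glob(t_2)[p_2]$.

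The \textit{let}-abs case is the main obstacle. I must show $\glob(t)[pq_1] = \glob(t)[pq_2]$ given that $q_1, q_2$ are locally closed in $t[p]$ with $\lift t[p]<q_1> = \lift t[p]<q_2>$. The core lemma I would prove is an invariance statement: if $q$ is locally closed in $t[p]$, then the $g$-term produced at position $pq$ depends only on the globalization state reached at $p$ and on $\lift t[p]<q>$, not on the concrete chain of $\lambda$-binders between $p$ and $pq$. The reason is that, by local closedness, every free de Bruijn index in $t[pq]$ skips past those intermediate binders, so the corresponding entries of the accumulated substitution are never consumed. Packaging this as an equation relating $\globstep(r, \sigma, t[pq])$ to $\globstep$ applied to $\lift t[p]<q>$ under a realigned substitution, and then exploiting $\lift t[p]<q_1> = \lift t[p]<q_2>$, yields the desired equality at both $pq_1$ and $pq_2$.

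The technical work will live almost entirely in that invariance lemma, because one must reconcile the algorithm's blind accumulation of substitution entries at each $\lambda$ with the semantic fact that local closedness makes some of those entries irrelevant. A secondary subtlety is that the algorithm may either continue in $\globscc$ or jump into $\glob(t[pq]\sigma)$ depending on whether the subterm's summary lies in $\text{duplicates}(r)$; I would verify that these two branches agree on the $g$-term ultimately produced at $pq$, reusing the closed-subterm lemma above to show that substituting and restarting in a new SCC yields the same output as continuing in the current one.
\end{proofsketch}
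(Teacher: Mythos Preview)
Your overall architecture---reduce to single forks, dispatch the closed rule via a lemma $\glob(t)[p]=\glob(t[p])$ for closed $t[p]$, then handle the suffix $r$ by indexing---matches the paper exactly, and your closed-rule argument is correct.

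The gap is in the \textit{let}-abs case. Your proposed invariance lemma (``the output at $pq$ depends only on the state reached at $p$ and on $\lift t[p]<q>$'') and the supporting claim that ``these two branches agree on the $g$-term ultimately produced at $pq$'' are not true as stated. When $\globstep$ restarts at an intermediate position because a duplicate is detected, the SCC root $r$ changes, and every $g$-var pushed for a binder \emph{below} that point has the form $\gvar{r'\ \cdot}$ with the new root $r'$ rather than $\gvar{r\ \cdot}$. Your argument that ``entries of the accumulated substitution are never consumed'' correctly handles the irrelevant binders between $p$ and $pq$, but it says nothing about binders \emph{inside} $t[pq]$, whose $g$-var labels depend on whichever root is current when they are reached. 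So restarting versus continuing can produce syntactically different $g$-terms, and the closed-subterm lemma does not bridge that difference.

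The paper's route avoids this entirely by exploiting the definition of a term summary (Definition~\ref{def:term-summary}): the single fork $\tn t[[pq_1]]\sfork\tn t[[pq_2]]$ forces $|t[pq_1]|=|t[pq_2]|$, and when the common lift is open both $pq_1,pq_2$ lie in the current $\SCC$, so both subterms are in $\text{duplicates}(r)$ by construction. Hence the algorithm is \emph{guaranteed} to restart at both positions, and after the restart the two inputs to $\glob$ are syntactically equal because $r[p]\sigma[q_1]=\lift r[p]\sigma<q_1>=\lift r[p]\sigma<q_2>=r[p]\sigma[q_2]$. That is the missing idea: the duplicate mechanism is not an efficiency device to be argued away, it is precisely what makes the two outputs coincide.
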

\begin{proofsketch}
  It suffices to show that two term nodes
  related through a single fork become equal after globalization. The full
  theorem then follows from transitivity of Leibniz equality. Hence, we need to
  show the conclusion assuming that the term nodes follow one of
  the two rules in Definition~\ref{def:single-fork}. For the second rule, where
  both subjects $t_1[p_1]$ and $t_2[p_2]$ are closed and equal, the conclusion
  follows readily because
  \[\glob(t_1)[p_1] = \glob(t_1[p_1]) = \glob(t_2[p_2]) = \glob(t_2)[p_2].\]
  This holds, because the globalization procedure only modifies the terms
  through substitutions, which cannot influence the closed subjects.

  Proving correctness for the first rule is more technical, but ultimately
  relies on the same strategy, where we move the indexing of positions $p_1$ and
  $p_2$ from outside $\glob$ to inside $\glob$.
\end{proofsketch}
\begin{restatable}{theorem}{AlgToBisimThm}
  \label{thm:alg-to-bisim}
  If
  $\glob(t_1)[p_1] = \glob(t_2)[p_2]$, then
  $\tn t_1[[p_1]] \bisim \tn t_2[[p_2]]$.
\end{restatable}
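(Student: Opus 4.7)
\begin{proofsketch}
The plan is to close the circle in Figure~\ref{fig:proof-circular} by extending bisimilarity to $g$-terms and then showing that every $\lambda$-term node is bisimilar to its globalization. This reduces a hypothesis that is merely syntactic equality of globalized subjects to a bisimilarity statement that can be transported back to the original $\lambda$-terms.

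The first ingredient is an extension of Definition~\ref{def:bisim} that endows $g$-terms with transitions for global variables. The natural choice is that a node whose subject is $\gvar{u}$ has a single $\varup$ transition entering $u$ at a canonical root-like position, so that the content of a $g$-var plays the role of the binder it replaced; all other transitions are inherited from Definition~\ref{def:node-transitions}. With this extension, the main technical lemma to establish is that for every closed $\lambda$-term $t$ and every $p \in \valid{t}$,
\[
\tn t[[p]] \bisim \tn \glob(t)[[p]].
\]
This is proved by exhibiting an explicit bisimulation containing all such pairs (including those crossing sub-SCCs visited by the recursive calls of $\glob$) and checking each transition. The only delicate case is the $\varup$ step: on the left, a variable moves to its enclosing binder, while on the right, that variable has been replaced by a $g$-var of the form $\gvar{r\ t'}$, whose extended $\varup$ transition must land on a node bisimilar to the same binder. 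The careful content chosen by $\globscc$ at each binder crossing is precisely what makes these targets line up.

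The second ingredient is a context-independence lemma: for closed $g$-terms $u_1, u_2$ in the image of $\glob$, if $u_1[p_1] = u_2[p_2]$ syntactically, then $\tn u_1[[p_1]] \bisim \tn u_2[[p_2]]$. This holds because after globalization every transition reachable from a given position, including $\varup$ through $g$-vars, is determined by the subtree at that position rather than by any surrounding context, since $g$-vars carry their referents internally. Combining both lemmas with the hypothesis yields the chain
\[
\tn t_1[[p_1]] \bisim \tn \glob(t_1)[[p_1]] \bisim \tn \glob(t_2)[[p_2]] \bisim \tn t_2[[p_2]],
\]
and transitivity of $\bisim$ concludes the proof. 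The main obstacle is defining the extended bisimulation and the exact content of $g$-vars coherently: the accumulated substitution $\sigma$ and the SCC root $r$ appearing in $\gvar{r\ t'}$ must be handled so that both lemmas go through simultaneously, a subtlety that does not arise for the simpler naive algorithm $\globnaive$.
\end{proofsketch}
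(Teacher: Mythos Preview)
Your proposal is correct and follows essentially the same route as the paper: extend bisimilarity to $g$-terms via a $\varup$ transition out of global variables, prove that each term node is bisimilar to its globalization, and then use the fact that globalized subterms are closed (your ``context-independence lemma'') to bridge the syntactic hypothesis to a bisimilarity chain. The paper packages the pieces slightly differently---it first proves $\tn r[[\rootpos]] \bisim \tn \glob(r)[[\rootpos]]$ via an inductive lemma about $\globscc$ and then extends to arbitrary positions using the sub-path and closed-subject observations, and its $\varup$ transition for $\gvar{r\ r[p]}$ targets $\tn r[[p]]$ rather than a position inside the $g$-var content---but the substance and the final chain are the same as yours.
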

\begin{proofsketch}
  Here, we rely on a conservative extension of
  the bisimulation relation to $g$-terms such that we can show
  \[\tn t[[p]] \bisim \tn \glob(t)[p][[\rootpos]].\]
  That is, modulo this new bisimulation relation, the globalization algorithm
  does not modify the term at all. The final theorem then follows trivially. To
  make this work, we add an extra transition from nodes whose subject are a
  $g$-var to their corresponding binder. This transition does not use the
  context, but rather the knowledge about the context that has been stored
  inside the $g$-var by the globalization procedure.
  \end{proofsketch}

\section{Experimental Evaluation}
\label{sec:experimental-evaluation}

We evaluate and compare the runtime of our algorithm with three kinds of
synthetic $\lambda$-term. First, we uniformly sample closed terms of a fixed
size~\cite{DBLP:journals/corr/abs-2005-08856, DBLP:journals/jfp/GrygielL13}.
Second, we generate the unbalanced terms from
Example~\ref{ex:naive-pathological}. Third, are perfectly balanced terms such
that binders and applications are alternated. These latter two represent two
extreme cases an algorithm must handle.

The left plot of Figure~\ref{fig:evaluation} compares the naive algorithm of
Section~\ref{sec:naive-alg} to the efficient algorithm of
Section~\ref{sec:alg-efficient}. The trend shows that all terms can be
globalized in roughly $O(n\log n)$ time. The naive algorithm takes $O(n^2)$
time, with the exception of the best-case scenario of balanced terms.

The right plot of Figure~\ref{fig:evaluation} compares our algorithm with
Valmari's deterministic finite automaton minimization
algorithm~\cite{DBLP:journals/ipl/Valmari12} and the hashing algorithm of
Maziarz et al.~\cite{DBLP:conf/pldi/MaziarzELFJ21}. One should note that
these comparisons are not apples-to-apples, see Section~\ref{sec:versus-bisim}
and \ref{sec:versus}. The plot includes a version of our algorithm with and
without hash-consing. The hash-consing version should be compared to Valmari's
algorithm, as it can be used to assign equivalence classes to term nodes
without collisions.

We see that hash-consing $g$-terms imposes a significant performance overhead.
Nevertheless, it is still competitive with Valmari's algorithm. The performance
of our algorithm is close to Maziarz'. For linear terms, we can see the extra
$O(\log n)$ runtime factor emerge in Maziarz' algorithm.

\begin{figure}
\begin{tikzpicture}
  \begin{groupplot}[
    group style={
      group size=2 by 1,
      y descriptions at=edge left,
      horizontal sep=5pt,
    },
    width=0.555\textwidth,
    xmode=log,
    log basis x={2},
    ymajorgrids,
    xmajorgrids,
    xmin=16,
    xlabel={term size},
    ylabel={processing time (s)},
    legend pos = south east,
    legend cell align={left},
    ]
    \nextgroupplot[
    ymode=log,
    log basis y={10},
    ]
    \addlegendimage{white, mark=none}
    \addlegendentry{\underline{Algorithm}}
    \addlegendimage{black, dashed, mark=none}
    \addlegendentry{Naive}
    \addlegendimage{black, mark=none}
    \addlegendentry{Efficient}
    \addplot+[Dark2-A, mark = o, mark size=0.5pt] table [x=x, y=q2, col sep=tab] {data/linear_efficient.tsv};
    \label{plot:linear}
    \addplot+[Dark2-A, mark = o, mark size=0.5pt, dashed] table [x=x, y=q2, col sep=tab] {data/linear_naive.tsv};
    \addplot+[Dark2-B, mark = o, mark size=0.5pt] table [x=x, y=q2, col sep=tab] {data/random_efficient.tsv};
    \label{plot:random}
    \addplot+[Dark2-B, mark = o, mark size=0.5pt, dashed] table [x=x, y=q2, col sep=tab] {data/random_naive.tsv};
    \addplot+[Dark2-C, mark = o, mark size=0.5pt] table [x=x, y=q2, col sep=tab] {data/balanced_efficient.tsv};
    \label{plot:balanced}
    \addplot+[Dark2-C, mark = o, mark size=0.5pt, dashed] table [x=x, y=q2, col sep=tab] {data/balanced_naive.tsv};

    \addplot+[domain=1:4096, dotted, black, mark = .] {0.00000002 * x^2} node[above] {$n^2$};
    \addplot+[domain=1:4096, dotted, black, mark = .] {0.00000006 * x * ln(x)/ln(2)} node[above] {$n\log n$};
    \addplot+[domain=1:4096, dotted, black, mark = .] {0.00000003 * x} node [above] {$n$};

    \node [draw,fill=white] at (rel axis cs: 0.21,0.82) {\shortstack[l]{
        \quad\quad \underline{Term type} \\
        \ref{plot:balanced} Balanced \\
        \ref{plot:random} Random \\
        \ref{plot:linear} Linear}};

    \nextgroupplot[
    ymode=log,
    log basis y={10},
    legend pos = south east,
    legend cell align={left},
    ]
    \addlegendimage{white, mark=none}
    \addlegendentry{\underline{Term type}}
    \addlegendimage{black, dashed, mark=none}
    \addlegendentry{Balanced}
    \addlegendimage{black, mark=none}
    \addlegendentry{Linear}

    \addplot+[Dark2-A, mark = o, mark size=0.5pt] table [x=x, y=oursint, col sep=tab] {data/compare_linear.tsv};
    \label{plot:oursint}
    \addplot+[Dark2-B, mark = o, mark size=0.5pt] table [x=x, y=ourscons, col sep=tab] {data/compare_linear.tsv};
    \label{plot:ourscons}
    \addplot+[Dark2-C, mark = o, mark size=0.5pt] table [x=x, y=mariarz, col sep=tab] {data/compare_linear.tsv};
    \label{plot:mariarz}
    \addplot+[Dark2-D, mark = o, mark size=0.5pt] table [x=x, y=valmari, col sep=tab] {data/compare_linear.tsv};
    \label{plot:valmari}

    \addplot+[Dark2-A, mark = o, mark size=0.5pt, dashed] table [x=x, y=oursint, col sep=tab] {data/compare_balanced.tsv};
    \addplot+[Dark2-B, mark = o, mark size=0.5pt, dashed] table [x=x, y=ourscons, col sep=tab] {data/compare_balanced.tsv};
    \addplot+[Dark2-C, mark = o, mark size=0.5pt, dashed] table [x=x, y=mariarz, col sep=tab] {data/compare_balanced.tsv};
    \addplot+[Dark2-D, mark = o, mark size=0.5pt, dashed] table [x=x, y=valmari, col sep=tab] {data/compare_balanced.tsv};

    \node [draw,fill=white] at (rel axis cs: 0.22,0.78) {\shortstack[l]{
        \quad\quad \underline{Algorithm} \\
        \ref{plot:oursint} Ours Hash \\
        \ref{plot:ourscons} Ours Cons \\
        \ref{plot:mariarz} Maziarz\\
        \ref{plot:valmari} Valmari}};
  \end{groupplot}
\end{tikzpicture}
\caption{Performance of several algorithms on synthetic $\lambda$-terms.}
\label{fig:evaluation}
\end{figure}
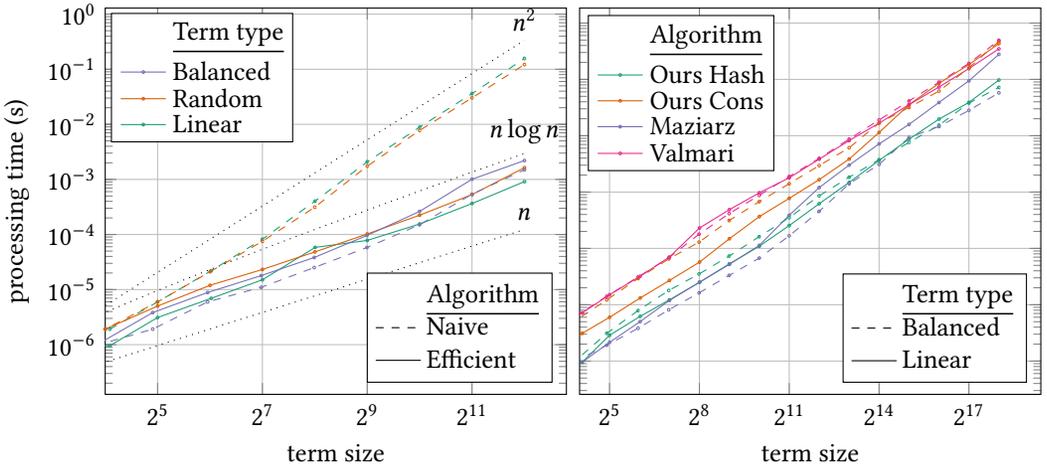

\section{Related and Future Work}

Our work should primarily be compared to previous work by Maziarz et
al.~\cite{DBLP:conf/pldi/MaziarzELFJ21} and bisimulation algorithms~\cite{DBLP:journals/siamcomp/PaigeT87, DBLP:journals/tcs/DovierPP04,
  hopcroft1971n, DBLP:journals/ipl/Valmari12}. This comparison can be found in
Section \ref{sec:versus-bisim} and \ref{sec:versus}. Here we give an overview of further related work
and future research.

\paragraph{Term Sharing Algorithms}
Term sharing is a common approach as a means of memory saving.
However, in most cases, these techniques do not take into account
$\alpha$-equality. In compilers, sharing
the structure of a languages AST is often achieved using
hash-consing~\cite{DBLP:conf/ml/FilliatreC06}. Hash-consing allows for
sub-structure sharing between terms, but shared terms are not guaranteed to be
``equal'' according to any reasonable equivalence relation. The FLINT
compiler~\cite{DBLP:conf/icfp/ShaoLM98} is an example where hash-consing is
employed aggressively to save space.

The literature is rather sparse with respect to term-sharing modulo
$\alpha$-equivalence. Condoluci, Accattoli and Coen present a decision procedure
to check $\alpha$-equivalence of two terms in which sub-terms may be shared in
linear time~\cite{DBLP:conf/ppdp/CondoluciAC19}. This is an important result
that may be used for efficient convertibility checking in dependently typed
proof assistants such as Coq,
LEAN and Agda~\cite{the_coq_development_team_2020_4021912,
  DBLP:conf/cade/MouraKADR15, DBLP:conf/tldi/Norell09} in
combination with efficient reduction algorithms that employ
sharing~\cite{DBLP:conf/fpca/BlellochG95, DBLP:journals/corr/AccattoliL16}.
However, their algorithm only allows pairwise comparisons of terms. It does not
show how to efficiently find all $\alpha$-equivalent subterms.

\paragraph{Hashing of Graphs}
We are not aware of existing work in labeled transition systems that calculates
a bsimimulation-respecting hash for each node. Such a hash would be useful in
the analysis of large-scale graphs, in which calculating the entire bisimulation
relation at once may not be feasible.
As such, an interesting open question is how far our algorithm can be
generalized for arbitrary graphs. The graphs induced by $\lambda$-calculus
are only a subset of the set of di-graphs. It is guaranteed that
during a traversal of a graph from the root, any binder is reached before a variable
that refers to that binder. Extending $\lambda$-calculus with mutually recursive
fixpoints eliminates this property. In such an extension, variables can no
longer be represented with de Bruijn indices, invalidating our algorithm. An
algorithm capable of hashing such terms is future work, as is extending the algorithm
to arbitrary (non-)deterministic transition systems.



\begin{acks}
  This work was partially supported by the Amazon Research Awards, EU ICT-48
  2020 project TAILOR no. 952215, and the European Regional Development Fund
  under the Czech project AI\&Reasoning with identifier
  CZ.02.1.01/0.0/0.0/15\_003/0000466. Lasse Blaauwbroek acknowledges travel
  support from the European Union’s Horizon 2020 research and innovation
  programme under grant agreement No 951847.
\end{acks}
\bibliographystyle{ACM-Reference-Format}
\bibliography{bibliography}


\begin{thebibliography}{28}


\ifx \showCODEN    \undefined \def \showCODEN     #1{\unskip}     \fi
\ifx \showDOI      \undefined \def \showDOI       #1{#1}\fi
\ifx \showISBNx    \undefined \def \showISBNx     #1{\unskip}     \fi
\ifx \showISBNxiii \undefined \def \showISBNxiii  #1{\unskip}     \fi
\ifx \showISSN     \undefined \def \showISSN      #1{\unskip}     \fi
\ifx \showLCCN     \undefined \def \showLCCN      #1{\unskip}     \fi
\ifx \shownote     \undefined \def \shownote      #1{#1}          \fi
\ifx \showarticletitle \undefined \def \showarticletitle #1{#1}   \fi
\ifx \showURL      \undefined \def \showURL       {\relax}        \fi
\providecommand\bibfield[2]{#2}
\providecommand\bibinfo[2]{#2}
\providecommand\natexlab[1]{#1}
\providecommand\showeprint[2][]{arXiv:#2}

\bibitem[Accattoli and Lago(2016)]%
        {DBLP:journals/corr/AccattoliL16}
\bibfield{author}{\bibinfo{person}{Beniamino Accattoli} {and}
  \bibinfo{person}{Ugo~Dal Lago}.} \bibinfo{year}{2016}\natexlab{}.
\newblock \showarticletitle{(Leftmost-Outermost) Beta Reduction is Invariant,
  Indeed}.
\newblock \bibinfo{journal}{\emph{Log. Methods Comput. Sci.}}
  \bibinfo{volume}{12}, \bibinfo{number}{1} (\bibinfo{year}{2016}).
\newblock
\urldef\tempurl%
\url{https://doi.org/10.2168/LMCS-12(1:4)2016}
\showDOI{\tempurl}


\bibitem[Baier and Katoen(2008)]%
        {DBLP:books/daglib/0020348}
\bibfield{author}{\bibinfo{person}{Christel Baier} {and}
  \bibinfo{person}{Joost{-}Pieter Katoen}.} \bibinfo{year}{2008}\natexlab{}.
\newblock \bibinfo{booktitle}{\emph{Principles of model checking}}.
\newblock \bibinfo{publisher}{{MIT} Press}.
\newblock
\showISBNx{978-0-262-02649-9}


\bibitem[Barendregt(1985)]%
        {BarendregtLambdaCalc}
\bibfield{author}{\bibinfo{person}{Hendrik~Pieter Barendregt}.}
  \bibinfo{year}{1985}\natexlab{}.
\newblock \bibinfo{booktitle}{\emph{The lambda calculus - its syntax and
  semantics}}. \bibinfo{series}{Studies in logic and the foundations of
  mathematics}, Vol.~\bibinfo{volume}{103}.
\newblock \bibinfo{publisher}{Elsevier, North-Holland}.
\newblock


\bibitem[Bendkowski(2020)]%
        {DBLP:journals/corr/abs-2005-08856}
\bibfield{author}{\bibinfo{person}{Maciej Bendkowski}.}
  \bibinfo{year}{2020}\natexlab{}.
\newblock \showarticletitle{How to generate random lambda terms?}
\newblock \bibinfo{journal}{\emph{CoRR}}  \bibinfo{volume}{abs/2005.08856}
  (\bibinfo{year}{2020}).
\newblock
\showeprint[arXiv]{2005.08856}
\urldef\tempurl%
\url{https://arxiv.org/abs/2005.08856}
\showURL{%
\tempurl}


\bibitem[Berghofer and Urban(2006)]%
        {DBLP:journals/entcs/BerghoferU07}
\bibfield{author}{\bibinfo{person}{Stefan Berghofer} {and}
  \bibinfo{person}{Christian Urban}.} \bibinfo{year}{2006}\natexlab{}.
\newblock \showarticletitle{A Head-to-Head Comparison of de Bruijn Indices and
  Names}. In \bibinfo{booktitle}{\emph{Proceedings of the First International
  Workshop on Logical Frameworks and Meta-Languages: Theory and Practice,
  LFMTP@FLoC 2006, Seattle, WA, USA, August 16, 2006}}
  \emph{(\bibinfo{series}{Electronic Notes in Theoretical Computer Science},
  Vol.~\bibinfo{volume}{174})}, \bibfield{editor}{\bibinfo{person}{Alberto
  Momigliano} {and} \bibinfo{person}{Brigitte Pientka}} (Eds.).
  \bibinfo{publisher}{Elsevier}, \bibinfo{pages}{53--67}.
\newblock
\urldef\tempurl%
\url{https://doi.org/10.1016/j.entcs.2007.01.018}
\showDOI{\tempurl}


\bibitem[Blaauwbroek(2023)]%
        {lasse_blaauwbroek_2023_10421517}
\bibfield{author}{\bibinfo{person}{Lasse Blaauwbroek}.}
  \bibinfo{year}{2023}\natexlab{}.
\newblock \bibinfo{booktitle}{\emph{{Reference Implementation for Hashing
  Modulo Context-Sensitive Alpha-Equivalence}}}.
\newblock
\urldef\tempurl%
\url{https://doi.org/10.5281/zenodo.11097757}
\showDOI{\tempurl}


\bibitem[Blaauwbroek(2024)]%
        {web-paper}
\bibfield{author}{\bibinfo{person}{Lasse Blaauwbroek}.}
  \bibinfo{year}{2024}\natexlab{}.
\newblock \showarticletitle{The {T}actician's Web of Large-Scale Formal
  Knowledge}.
\newblock \bibinfo{journal}{\emph{arXiv preprint}} (\bibinfo{date}{Jan.}
  \bibinfo{year}{2024}).
\newblock
\urldef\tempurl%
\url{https://doi.org/10.48550/arXiv.2401.02950}
\showDOI{\tempurl}
\showeprint[arxiv]{2401.02950}~[cs.LO]


\bibitem[Blelloch and Greiner(1995)]%
        {DBLP:conf/fpca/BlellochG95}
\bibfield{author}{\bibinfo{person}{Guy~E. Blelloch} {and} \bibinfo{person}{John
  Greiner}.} \bibinfo{year}{1995}\natexlab{}.
\newblock \showarticletitle{Parallelism in Sequential Functional Languages}. In
  \bibinfo{booktitle}{\emph{Proceedings of the seventh international conference
  on Functional programming languages and computer architecture, {FPCA} 1995,
  La Jolla, California, USA, June 25-28, 1995}},
  \bibfield{editor}{\bibinfo{person}{John Williams}} (Ed.).
  \bibinfo{publisher}{{ACM}}, \bibinfo{pages}{226--237}.
\newblock
\urldef\tempurl%
\url{https://doi.org/10.1145/224164.224210}
\showDOI{\tempurl}


\bibitem[Chargu{\'{e}}raud(2012)]%
        {DBLP:journals/jar/Chargueraud12}
\bibfield{author}{\bibinfo{person}{Arthur Chargu{\'{e}}raud}.}
  \bibinfo{year}{2012}\natexlab{}.
\newblock \showarticletitle{The Locally Nameless Representation}.
\newblock \bibinfo{journal}{\emph{J. Autom. Reason.}} \bibinfo{volume}{49},
  \bibinfo{number}{3} (\bibinfo{year}{2012}), \bibinfo{pages}{363--408}.
\newblock
\urldef\tempurl%
\url{https://doi.org/10.1007/S10817-011-9225-2}
\showDOI{\tempurl}


\bibitem[Chiusano et~al\mbox{.}({[n.\,d.]})]%
        {Chiusano_Bjarnason_Irani}
\bibfield{author}{\bibinfo{person}{Paul Chiusano}, \bibinfo{person}{Rúnar
  Bjarnason}, {and} \bibinfo{person}{Arya Irani}.}
  \bibinfo{year}{[n.\,d.]}\natexlab{}.
\newblock \bibinfo{booktitle}{\emph{Unison: A friendly, statically-typed,
  functional programming language from the future · UNISON programming
  language}}.
\newblock
\urldef\tempurl%
\url{https://www.unison-lang.org/}
\showURL{%
\tempurl}


\bibitem[Church(1941)]%
        {Church}
\bibfield{author}{\bibinfo{person}{Alonzo Church}.}
  \bibinfo{year}{1941}\natexlab{}.
\newblock \bibinfo{booktitle}{\emph{The Calculi of Lambda-Conversion}}.
\newblock \bibinfo{publisher}{Princeton: Princeton University Press}.
\newblock


\bibitem[Cocke(1970)]%
        {DBLP:conf/comop/Cocke70}
\bibfield{author}{\bibinfo{person}{John Cocke}.}
  \bibinfo{year}{1970}\natexlab{}.
\newblock \showarticletitle{Global common subexpression elimination}. In
  \bibinfo{booktitle}{\emph{Proceedings of a Symposium on Compiler
  Optimization, Urbana-Champaign, Illinois, USA, July 27-28, 1970}},
  \bibfield{editor}{\bibinfo{person}{Robert~S. Northcote}} (Ed.).
  \bibinfo{publisher}{{ACM}}, \bibinfo{pages}{20--24}.
\newblock
\urldef\tempurl%
\url{https://doi.org/10.1145/800028.808480}
\showDOI{\tempurl}


\bibitem[Condoluci et~al\mbox{.}(2019)]%
        {DBLP:conf/ppdp/CondoluciAC19}
\bibfield{author}{\bibinfo{person}{Andrea Condoluci},
  \bibinfo{person}{Beniamino Accattoli}, {and}
  \bibinfo{person}{Claudio~Sacerdoti Coen}.} \bibinfo{year}{2019}\natexlab{}.
\newblock \showarticletitle{Sharing Equality is Linear}. In
  \bibinfo{booktitle}{\emph{Proceedings of the 21st International Symposium on
  Principles and Practice of Programming Languages, {PPDP} 2019, Porto,
  Portugal, October 7-9, 2019}}, \bibfield{editor}{\bibinfo{person}{Ekaterina
  Komendantskaya}} (Ed.). \bibinfo{publisher}{{ACM}},
  \bibinfo{pages}{9:1--9:14}.
\newblock
\urldef\tempurl%
\url{https://doi.org/10.1145/3354166.3354174}
\showDOI{\tempurl}


\bibitem[de~Moura et~al\mbox{.}(2015)]%
        {DBLP:conf/cade/MouraKADR15}
\bibfield{author}{\bibinfo{person}{Leonardo~Mendon{\c{c}}a de Moura},
  \bibinfo{person}{Soonho Kong}, \bibinfo{person}{Jeremy Avigad},
  \bibinfo{person}{Floris van Doorn}, {and} \bibinfo{person}{Jakob von
  Raumer}.} \bibinfo{year}{2015}\natexlab{}.
\newblock \showarticletitle{The Lean Theorem Prover (System Description)}. In
  \bibinfo{booktitle}{\emph{Automated Deduction - {CADE-25} - 25th
  International Conference on Automated Deduction, Berlin, Germany, August 1-7,
  2015, Proceedings}} \emph{(\bibinfo{series}{Lecture Notes in Computer
  Science}, Vol.~\bibinfo{volume}{9195})},
  \bibfield{editor}{\bibinfo{person}{Amy~P. Felty} {and} \bibinfo{person}{Aart
  Middeldorp}} (Eds.). \bibinfo{publisher}{Springer},
  \bibinfo{pages}{378--388}.
\newblock
\urldef\tempurl%
\url{https://doi.org/10.1007/978-3-319-21401-6\_26}
\showDOI{\tempurl}


\bibitem[Dovier et~al\mbox{.}(2004)]%
        {DBLP:journals/tcs/DovierPP04}
\bibfield{author}{\bibinfo{person}{Agostino Dovier}, \bibinfo{person}{Carla
  Piazza}, {and} \bibinfo{person}{Alberto Policriti}.}
  \bibinfo{year}{2004}\natexlab{}.
\newblock \showarticletitle{An efficient algorithm for computing bisimulation
  equivalence}.
\newblock \bibinfo{journal}{\emph{Theor. Comput. Sci.}} \bibinfo{volume}{311},
  \bibinfo{number}{1-3} (\bibinfo{year}{2004}), \bibinfo{pages}{221--256}.
\newblock
\urldef\tempurl%
\url{https://doi.org/10.1016/S0304-3975(03)00361-X}
\showDOI{\tempurl}


\bibitem[Filli{\^{a}}tre and Conchon(2006)]%
        {DBLP:conf/ml/FilliatreC06}
\bibfield{author}{\bibinfo{person}{Jean{-}Christophe Filli{\^{a}}tre} {and}
  \bibinfo{person}{Sylvain Conchon}.} \bibinfo{year}{2006}\natexlab{}.
\newblock \showarticletitle{Type-safe modular hash-consing}. In
  \bibinfo{booktitle}{\emph{Proceedings of the {ACM} Workshop on ML, 2006,
  Portland, Oregon, USA, September 16, 2006}},
  \bibfield{editor}{\bibinfo{person}{Andrew Kennedy} {and}
  \bibinfo{person}{Fran{\c{c}}ois Pottier}} (Eds.). \bibinfo{publisher}{{ACM}},
  \bibinfo{pages}{12--19}.
\newblock
\urldef\tempurl%
\url{https://doi.org/10.1145/1159876.1159880}
\showDOI{\tempurl}


\bibitem[Grabmayer and Rochel(2014)]%
        {DBLP:conf/icfp/GrabmayerR14}
\bibfield{author}{\bibinfo{person}{Clemens Grabmayer} {and}
  \bibinfo{person}{Jan Rochel}.} \bibinfo{year}{2014}\natexlab{}.
\newblock \showarticletitle{Maximal sharing in the Lambda calculus with
  letrec}. In \bibinfo{booktitle}{\emph{Proceedings of the 19th {ACM} {SIGPLAN}
  international conference on Functional programming, Gothenburg, Sweden,
  September 1-3, 2014}}, \bibfield{editor}{\bibinfo{person}{Johan Jeuring}
  {and} \bibinfo{person}{Manuel M.~T. Chakravarty}} (Eds.).
  \bibinfo{publisher}{{ACM}}, \bibinfo{pages}{67--80}.
\newblock
\urldef\tempurl%
\url{https://doi.org/10.1145/2628136.2628148}
\showDOI{\tempurl}


\bibitem[Grygiel and Lescanne(2013)]%
        {DBLP:journals/jfp/GrygielL13}
\bibfield{author}{\bibinfo{person}{Katarzyna Grygiel} {and}
  \bibinfo{person}{Pierre Lescanne}.} \bibinfo{year}{2013}\natexlab{}.
\newblock \showarticletitle{Counting and generating lambda terms}.
\newblock \bibinfo{journal}{\emph{J. Funct. Program.}} \bibinfo{volume}{23},
  \bibinfo{number}{5} (\bibinfo{year}{2013}), \bibinfo{pages}{594--628}.
\newblock
\urldef\tempurl%
\url{https://doi.org/10.1017/S0956796813000178}
\showDOI{\tempurl}


\bibitem[Hopcroft(1971)]%
        {hopcroft1971n}
\bibfield{author}{\bibinfo{person}{John Hopcroft}.}
  \bibinfo{year}{1971}\natexlab{}.
\newblock \showarticletitle{An n log n algorithm for minimizing states in a
  finite automaton}.
\newblock In \bibinfo{booktitle}{\emph{Theory of machines and computations}}.
  \bibinfo{publisher}{Elsevier}, \bibinfo{pages}{189--196}.
\newblock


\bibitem[Maziarz et~al\mbox{.}(2021)]%
        {DBLP:conf/pldi/MaziarzELFJ21}
\bibfield{author}{\bibinfo{person}{Krzysztof Maziarz}, \bibinfo{person}{Tom
  Ellis}, \bibinfo{person}{Alan Lawrence}, \bibinfo{person}{Andrew~W.
  Fitzgibbon}, {and} \bibinfo{person}{Simon~Peyton Jones}.}
  \bibinfo{year}{2021}\natexlab{}.
\newblock \showarticletitle{Hashing modulo alpha-equivalence}. In
  \bibinfo{booktitle}{\emph{{PLDI} '21: 42nd {ACM} {SIGPLAN} International
  Conference on Programming Language Design and Implementation, Virtual Event,
  Canada, June 20-25, 2021}}, \bibfield{editor}{\bibinfo{person}{Stephen~N.
  Freund} {and} \bibinfo{person}{Eran Yahav}} (Eds.).
  \bibinfo{publisher}{{ACM}}, \bibinfo{pages}{960--973}.
\newblock
\urldef\tempurl%
\url{https://doi.org/10.1145/3453483.3454088}
\showDOI{\tempurl}


\bibitem[Norell(2009)]%
        {DBLP:conf/tldi/Norell09}
\bibfield{author}{\bibinfo{person}{Ulf Norell}.}
  \bibinfo{year}{2009}\natexlab{}.
\newblock \showarticletitle{Dependently typed programming in Agda}. In
  \bibinfo{booktitle}{\emph{Proceedings of TLDI'09: 2009 {ACM} {SIGPLAN}
  International Workshop on Types in Languages Design and Implementation,
  Savannah, GA, USA, January 24, 2009}},
  \bibfield{editor}{\bibinfo{person}{Andrew Kennedy} {and}
  \bibinfo{person}{Amal Ahmed}} (Eds.). \bibinfo{publisher}{{ACM}},
  \bibinfo{pages}{1--2}.
\newblock
\urldef\tempurl%
\url{https://doi.org/10.1145/1481861.1481862}
\showDOI{\tempurl}


\bibitem[Paige and Tarjan(1987)]%
        {DBLP:journals/siamcomp/PaigeT87}
\bibfield{author}{\bibinfo{person}{Robert Paige} {and}
  \bibinfo{person}{Robert~Endre Tarjan}.} \bibinfo{year}{1987}\natexlab{}.
\newblock \showarticletitle{Three Partition Refinement Algorithms}.
\newblock \bibinfo{journal}{\emph{{SIAM} J. Comput.}} \bibinfo{volume}{16},
  \bibinfo{number}{6} (\bibinfo{year}{1987}), \bibinfo{pages}{973--989}.
\newblock
\urldef\tempurl%
\url{https://doi.org/10.1137/0216062}
\showDOI{\tempurl}


\bibitem[Rute et~al\mbox{.}(2024)]%
        {graph2tac}
\bibfield{author}{\bibinfo{person}{Jason Rute}, \bibinfo{person}{Miroslav
  Olšák}, \bibinfo{person}{Lasse Blaauwbroek}, \bibinfo{person}{Fidel
  Ivan~Schaposnik Massolo}, \bibinfo{person}{Jelle Piepenbrock}, {and}
  \bibinfo{person}{Vasily Pestun}.} \bibinfo{year}{2024}\natexlab{}.
\newblock \showarticletitle{Graph2Tac: Learning Hierarchical Representations of
  Math Concepts in Theorem proving}.
\newblock \bibinfo{journal}{\emph{arXiv preprint}} (\bibinfo{date}{Jan.}
  \bibinfo{year}{2024}).
\newblock
\urldef\tempurl%
\url{https://doi.org/10.48550/arXiv.2401.02949}
\showDOI{\tempurl}
\showeprint[arxiv]{2401.02949}~[cs.LG]


\bibitem[Shao et~al\mbox{.}(1998)]%
        {DBLP:conf/icfp/ShaoLM98}
\bibfield{author}{\bibinfo{person}{Zhong Shao}, \bibinfo{person}{Christopher
  League}, {and} \bibinfo{person}{Stefan Monnier}.}
  \bibinfo{year}{1998}\natexlab{}.
\newblock \showarticletitle{Implementing Typed Intermediate Languages}. In
  \bibinfo{booktitle}{\emph{Proceedings of the third {ACM} {SIGPLAN}
  International Conference on Functional Programming {(ICFP} '98), Baltimore,
  Maryland, USA, September 27-29, 1998}},
  \bibfield{editor}{\bibinfo{person}{Matthias Felleisen}, \bibinfo{person}{Paul
  Hudak}, {and} \bibinfo{person}{Christian Queinnec}} (Eds.).
  \bibinfo{publisher}{{ACM}}, \bibinfo{pages}{313--323}.
\newblock
\urldef\tempurl%
\url{https://doi.org/10.1145/289423.289460}
\showDOI{\tempurl}


\bibitem[{The Coq Development Team}(2020)]%
        {the_coq_development_team_2020_4021912}
\bibfield{author}{\bibinfo{person}{{The Coq Development Team}}.}
  \bibinfo{year}{2020}\natexlab{}.
\newblock \bibinfo{booktitle}{\emph{The Coq Proof Assistant}}.
\newblock
\urldef\tempurl%
\url{https://doi.org/10.5281/zenodo.4021912}
\showDOI{\tempurl}


\bibitem[Thomsen and Henglein(2012)]%
        {DBLP:conf/iwsc/ThomsenH12}
\bibfield{author}{\bibinfo{person}{Mikkel~Jonsson Thomsen} {and}
  \bibinfo{person}{Fritz Henglein}.} \bibinfo{year}{2012}\natexlab{}.
\newblock \showarticletitle{Clone detection using rolling hashing, suffix trees
  and dagification: {A} case study}. In \bibinfo{booktitle}{\emph{Proceeding of
  the 6th International Workshop on Software Clones, {IWSC} 2012, Zurich,
  Switzerland, June 4, 2012}}, \bibfield{editor}{\bibinfo{person}{James~R.
  Cordy}, \bibinfo{person}{Katsuro Inoue}, \bibinfo{person}{Rainer Koschke},
  \bibinfo{person}{Jens Krinke}, {and} \bibinfo{person}{Chanchal~K. Roy}}
  (Eds.). \bibinfo{publisher}{{IEEE} Computer Society},
  \bibinfo{pages}{22--28}.
\newblock
\urldef\tempurl%
\url{https://doi.org/10.1109/IWSC.2012.6227862}
\showDOI{\tempurl}


\bibitem[Valmari(2012)]%
        {DBLP:journals/ipl/Valmari12}
\bibfield{author}{\bibinfo{person}{Antti Valmari}.}
  \bibinfo{year}{2012}\natexlab{}.
\newblock \showarticletitle{Fast brief practical {DFA} minimization}.
\newblock \bibinfo{journal}{\emph{Inf. Process. Lett.}} \bibinfo{volume}{112},
  \bibinfo{number}{6} (\bibinfo{year}{2012}), \bibinfo{pages}{213--217}.
\newblock
\urldef\tempurl%
\url{https://doi.org/10.1016/J.IPL.2011.12.004}
\showDOI{\tempurl}


\bibitem[Zhao et~al\mbox{.}(2015)]%
        {DBLP:conf/bwcca/ZhaoXFC15}
\bibfield{author}{\bibinfo{person}{Jingling Zhao}, \bibinfo{person}{Kunfeng
  Xia}, \bibinfo{person}{Yilun Fu}, {and} \bibinfo{person}{Baojiang Cui}.}
  \bibinfo{year}{2015}\natexlab{}.
\newblock \showarticletitle{An AST-based Code Plagiarism Detection Algorithm}.
  In \bibinfo{booktitle}{\emph{10th International Conference on Broadband and
  Wireless Computing, Communication and Applications, {BWCCA} 2015, Krakow,
  Poland, November 4-6, 2015}}, \bibfield{editor}{\bibinfo{person}{Leonard
  Barolli}, \bibinfo{person}{Fatos Xhafa}, \bibinfo{person}{Marek~R. Ogiela},
  {and} \bibinfo{person}{Lidia Ogiela}} (Eds.). \bibinfo{publisher}{{IEEE}
  Computer Society}, \bibinfo{pages}{178--182}.
\newblock
\urldef\tempurl%
\url{https://doi.org/10.1109/BWCCA.2015.52}
\showDOI{\tempurl}


\end{thebibliography}

\newpage
\appendix

\section{Proofs}
\label{sec:proofs}
The three sections that follow give a more detailed proof of each of the
theorems sketched in Section~\ref{sec:proof-sketch}.

\subsection{Bisimilarity implies Fork Equivalence}
\label{sec:bisim-to-fork}

We start with some preliminary observations and lemmas about the sets
$\valid{\cdot}$, $\vars{\cdot}$ and $\free{\cdot}$ and how they relate to term
indexing.

\begin{observation}
  \label{obs:valid-subterms}
  Various subsets of term paths can be derived from the paths of its subterms:
  \begin{alignat*}{5}
    &\valid{\lambda\ t} &&= \{ \lamdown p \mid p \in \valid{t}\} \cup \{\rootpos\}
    &&\valid{t\ u} &&= \{ \appleft p \mid p \in \valid{t}\}
    &&\cup \{ \appright p \mid p \in \valid{u}\} \cup \{\rootpos\} \\
    &\vars{\lambda\ t} &&= \{ \lamdown p \mid p \in \vars{t}\}
    &&\vars{t\ u} &&= \{ \appleft p \mid p \in \vars{t}\} &&\cup \{ \appright p \mid p \in \vars{u}\}\\
    &\free{\lambda\ t} &&= \{ \lamdown p \mid p \in \free{t} \wedge t[p] \neq \debruijn{\lamsize{p}}\}
    \quad&&\free{t\ u} &&= \{ \appleft p \mid p \in \free{t}\} &&\cup \{ \appright p \mid p \in \free{u}\}
  \end{alignat*}
\end{observation}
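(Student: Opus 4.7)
The plan is to verify each of the six identities directly from the defining equations of $t[p]$ and the set-builder definitions of $\valid{\cdot}$, $\vars{\cdot}$, and $\free{\cdot}$. In every case I would split the position $q$ appearing on the left-hand side into either $q = \rootpos$ or $q$ beginning with a constructor-specific symbol: $\lamdown$ for $\lambda\ t$, and $\appleft$ or $\appright$ for $t\ u$. The defining equations of indexing, $(\lambda\ t)[\lamdown p] = t[p]$, $(t\ u)[\appleft p] = t[p]$, $(t\ u)[\appright p] = u[p]$, and $t[\rootpos] = t$, then put me in direct correspondence with the indexing of the immediate subterm.

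For the two $\valid{\cdot}$ identities this is immediate: $q \in \valid{\lambda\ t}$ iff $q = \rootpos$ or $q = \lamdown p$ with $t[p]$ defined, and analogously for $t\ u$. For the $\vars{\cdot}$ identities I would additionally observe that the root terms $\lambda\ t$ and $t\ u$ are never themselves a de Bruijn index, so $\rootpos$ never contributes; the remaining positions inherit variablehood from their immediate subterms. For the application case of $\free{\cdot}$, the key remark is that $\lamsize{\appleft p} = \lamsize{\appright p} = \lamsize{p}$, so the condition $t[p] \geq \debruijn{\lamsize{p}}$ transfers unchanged from a subterm to the whole application.

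The only nontrivial case is the fifth identity, the abstraction case of $\free{\cdot}$. Here I would use $\lamsize{\lamdown p} = \lamsize{p} + 1$, so that $\lamdown p \in \free{\lambda\ t}$ unfolds to $p \in \vars{t}$ together with $t[p] \geq \debruijn{\lamsize{p} + 1}$. The latter inequality is equivalent to the conjunction of $t[p] \geq \debruijn{\lamsize{p}}$ (i.e. $p \in \free{t}$) with the strict inequality $t[p] \neq \debruijn{\lamsize{p}}$. Intuitively, the excluded case $t[p] = \debruijn{\lamsize{p}}$ is exactly when the variable free in $t$ gets captured by the newly introduced outermost $\lambda$.

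No step presents a real obstacle; the only care required is the arithmetic bookkeeping on $\lamsize{\cdot}$ in the abstraction case, where one must remember that prepending $\lamdown$ shifts the threshold for ``free'' up by one, producing the extra exclusion term $t[p] \neq \debruijn{\lamsize{p}}$ that distinguishes the $\free{\cdot}$ identity for $\lambda\ t$ from the analogous identities for the other operators.
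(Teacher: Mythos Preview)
Your proposal is correct and complete. The paper states this as an Observation without proof, treating the identities as immediate from the definitions; your direct verification from the defining equations of indexing and the set-builder definitions is exactly the natural way to justify it, and your handling of the only nontrivial case (the $\lamsize{\cdot}$ shift in $\free{\lambda\ t}$) is right.
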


\begin{lemma}
  \label{lem:valid-empty-subterm}
  If $p \in \valid{t}$ then $\valid{t} = \valid{t[p]}$ implies $p = \rootpos$.
\end{lemma}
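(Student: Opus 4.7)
The plan is to prove the contrapositive: if $p \neq \rootpos$, then $\valid{t} \neq \valid{t[p]}$. I will derive a contradiction from the assumption that $p \neq \rootpos$ yet $\valid{t} = \valid{t[p]}$, by showing the valid-position set would have to be infinite.

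First I would establish the (essentially definitional) correspondence
\[
  q \in \valid{t[p]} \iff pq \in \valid{t},
\]
which follows by an easy induction on $p$ using the clauses $(t\ u)[\appleft p] = t[p]$, $(t\ u)[\appright p] = u[p]$, $(\lambda\ t)[\lamdown p] = t[p]$, and $t[\rootpos] = t$ from the definition of term indexing. Given this correspondence, the hypothesis $\valid{t} = \valid{t[p]}$ becomes the statement
\[
  q \in \valid{t} \iff pq \in \valid{t}.
\]

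Next, starting from $\rootpos \in \valid{t}$ and iterating the forward direction, I obtain $p^n \in \valid{t}$ for every $n \geq 0$, where $p^n$ denotes the $n$-fold concatenation of $p$ with itself. If $p \neq \rootpos$, then $|p^n|$ grows without bound as $n$ grows, so the positions $\rootpos, p, p^2, p^3, \ldots$ are pairwise distinct. Hence $\valid{t}$ contains infinitely many elements.

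The contradiction comes from the fact that $t$ is a finite syntax tree, so $\valid{t}$ is finite — formally verifiable by a trivial structural induction on $t$ using Observation~\ref{obs:valid-subterms}. This forces $p = \rootpos$. The only step requiring any care is the correspondence lemma, which is really just an unfolding of the grammar of positions, so I do not expect any serious obstacle.
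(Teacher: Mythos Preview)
Your argument is correct. The correspondence $q \in \valid{t[p]} \iff pq \in \valid{t}$ is indeed a routine induction on $p$, the pumping step $p^n \in \valid{t}$ follows, and finiteness of $\valid{t}$ is a clean structural induction on $t$ via Observation~\ref{obs:valid-subterms}. No gaps.

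Your route, however, differs from the paper's. The paper's one-line proof is a direct induction on $p$ that amounts to a cardinality argument: using Observation~\ref{obs:valid-subterms} one sees that for any single step $x \in \{\lamdown,\appleft,\appright\}$ we have $|\valid{t[x]}| < |\valid{t}|$, and then by induction $|\valid{t[p]}| < |\valid{t}|$ whenever $p \neq \rootpos$, so the two sets cannot be equal. Your argument instead proves an auxiliary correspondence lemma and then runs a pumping/infinity argument. The paper's approach is shorter and entirely local; your approach is slightly longer but isolates the useful fact $\valid{t[p]} = \{q \mid pq \in \valid{t}\}$ and gives a more structural explanation of \emph{why} the hypothesis is absurd (it would force an infinite chain in a finite tree). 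Either is perfectly adequate here.
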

\begin{proof}
  By induction on $p$, using Observation~\ref{obs:valid-subterms}.
\end{proof}

The next three preliminary lemmas state that if two term nodes are bisimilar,
then the position sets $\valid{\cdot}$, $\vars{\cdot}$, $\free{\cdot}$ and
$\bound{\cdot}$ of their subjects must be equal. These lemmas are important,
because they show that bisimilar nodes have largely the same structure. One can
see the set $\valid{t}$ as the ``skeleton'' of $t$, where the contents of leafs
(variables) are ignored. If subjects do not have the same skeleton, there is no
hope of forming a bisimulation between them.

\begin{lemma}
  \label{lem:bisim-skeleton}
  If $\tn t_1[[p_1]] \bisim \tn t_2[[p_2]]$ then $\valid{t_1[p_1]} =
  \valid{t_2[p_2]}$.
\end{lemma}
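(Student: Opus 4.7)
The plan is to show both inclusions of the set equality by induction on the length of positions, exploiting the fact that the $\lamdown$, $\appleft$, $\appright$ transitions of the LTS on term nodes correspond exactly to descent into the subject term.

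Concretely, I would first record the following observation (reading off Definition~\ref{def:node-transitions}): for $x \in \{\lamdown, \appleft, \appright\}$, we have $\tn t[[p]] \gstep x n$ if and only if $x \in \valid{t[p]}$, in which case $n = \tn t[[px]]$. In other words, these three labels form a partial function whose domain captures exactly the outgoing ``downward'' edges, and they necessarily point to the child positions $px$. This lets us translate ``$q \in \valid{t[p]}$'' into the existence of a chain of $\{\lamdown,\appleft,\appright\}$-transitions $\tn t[[p]] \gstep{x_1} \tn t[[px_1]] \gstep{x_2} \cdots \gstep{x_n} \tn t[[p x_1 \cdots x_n]]$ where $q = x_1 \cdots x_n$.

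Then I would prove $\valid{t_1[p_1]} \subseteq \valid{t_2[p_2]}$ by induction on $|q|$, with $q \in \valid{t_1[p_1]}$ the inductive statement (under the assumption of bisimilarity). The base case $q = \rootpos$ is immediate since $\rootpos \in \valid{t_2[p_2]}$ always. For $q = xq'$ with $x \in \{\lamdown,\appleft,\appright\}$, the hypothesis $q \in \valid{t_1[p_1]}$ gives $\tn t_1[[p_1]] \gstep x \tn t_1[[p_1 x]]$ with $q' \in \valid{t_1[p_1 x]}$. Applying the bisimulation, we obtain $n_2$ with $\tn t_2[[p_2]] \gstep x n_2$ and $\tn t_1[[p_1 x]] \bisim n_2$. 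By the observation above, $n_2$ must be $\tn t_2[[p_2 x]]$, so the induction hypothesis applied to $\tn t_1[[p_1 x]] \bisim \tn t_2[[p_2 x]]$ yields $q' \in \valid{t_2[p_2 x]}$, whence $xq' \in \valid{t_2[p_2]}$.

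The reverse inclusion follows by the symmetric argument, which is available because $\bisim$ is itself a bisimulation (as noted after Definition~\ref{def:bisim}). I do not anticipate any real obstacle here; the only mildly delicate point is to notice that we must restrict attention to the three ``downward'' labels and ignore the $\varup$ transition, since $\varup$ does not contribute to $\valid{t[p]}$ but instead jumps back into the context. The determinism of the transition system (every label has at most one target) then ensures the matched node on the other side of the bisimulation is exactly the child we need, which is what makes the induction go through cleanly.
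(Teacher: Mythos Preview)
Your proposal is correct and essentially the same argument as the paper's. The paper organizes it as structural induction on the subject $t_1[p_1]$ (with cases for variable, $\lambda$, and application) and proves the set equality in one go using Observation~\ref{obs:valid-subterms}, whereas you induct on the length of the position $q$ and prove one inclusion, then appeal to symmetry of $\bisim$; both rely on the same core step of matching a downward transition via the bisimulation and using determinism to pin down the child node.
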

\begin{proof}
  We have a bisimulation $R$ with $(\tn t_1[[p_1]], \tn
  t_2[[p_2]]) \in R$. Proceed by induction on $t_1[p_1]$.
  \begin{description}
  \item[Case {$t_1[p_1] = \debruijn i$}:]
    Relation $R$ mandates that there exists $j$ such that $t_2[p_2] =
    \debruijn j$. The conclusion is then trivially true.
  \item[Case {$t_1[p_1] = \lambda\ u$}:]
    We have $\tn t_1[[p_1]] \gstep \lamdown
    \tn t_1[[p_1\lamdown]]$. Furthermore, from $R$, we have that $\tn t_2[[p_2]]
    \gstep \lamdown \tn t_2[[p_2\lamdown]]$. The induction hypothesis then gives
    us $\valid{t[p_1\lamdown]} = \valid{t[p_2\lamdown]}$. Finally, we conclude
    $\valid{t[p_1]} = \valid{t[p_2]}$
    with the help of Observation~\ref{obs:valid-subterms}.
  \item[Case {$t_1[p_1] = u\ v$}:] Analogous to the previous case.\qedhere
  \end{description}
\end{proof}

Whereas the previous lemma shows that bisimilar subjects must have equal
skeletons, the next lemma shows that their variables must also be related. In
the introduction, we showed that the de Bruijn indices of bisimilar subjects are
not always equal. Nevertheless, positions that represent free variables in one
subject must also be free positions in the other subject. (The same fact holds
for bound variables because $\bound{\cdot}$ is the complement of $\free{\cdot}$.)

\begin{lemma}
  \label{lem:bisim-free}
  If $\tn t_1[[p_1]] \bisim \tn t_2[[p_2]]$ then $\free{t_1[p_1]} =
  \free{t_2[p_2]}$.
\end{lemma}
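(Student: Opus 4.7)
The plan is to proceed by induction on the structure of the subject $t_1[p_1]$, using Lemma~\ref{lem:bisim-skeleton} at every step to transport the structure to $t_2[p_2]$. The variable base case is immediate, since whenever the subject is a de Bruijn index, $\free{\cdot}$ of the subject is $\{\rootpos\}$ on both sides. The application case $t_1[p_1] = u\ v$ splits into the two sub-bisimulations $\tn t_1[[p_1\appleft]] \bisim \tn t_2[[p_2\appleft]]$ and $\tn t_1[[p_1\appright]] \bisim \tn t_2[[p_2\appright]]$ given by the $\appleft,\appright$ transitions, so the induction hypothesis together with the second clause of Observation~\ref{obs:valid-subterms} closes this case routinely.

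The real content is in the abstraction case: $t_1[p_1] = \lambda\ t$ and (by skeleton matching) $t_2[p_2] = \lambda\ t'$. The induction hypothesis applied to $\tn t_1[[p_1\lamdown]] \bisim \tn t_2[[p_2\lamdown]]$ yields $\free{t} = \free{t'}$, but the third clause of Observation~\ref{obs:valid-subterms} requires me to further establish, for every $p$ in this common set, the biconditional
\[
t[p] = \debruijn{\lamsize{p}} \;\;\iff\;\; t'[p] = \debruijn{\lamsize{p}}.
\]
I would reformulate this condition as: the $\varup$-edge out of the variable node lands on the outermost binder of the subject. Concretely, $t[p] = \debruijn{\lamsize{p}}$ holds iff $\tn t_1[[p_1\lamdown p]] \gstep\varup \tn t_1[[p_1]]$, and analogously on the other side.

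To exploit this, I would follow the bisimulation down the path $\lamdown p$, obtaining $\tn t_1[[p_1\lamdown p]] \bisim \tn t_2[[p_2\lamdown p]]$ by repeated use of the transfer property for $\lamdown$, $\appleft$, and $\appright$ transitions. Assuming the left-hand equality, the $\varup$-transition forces the existence of some prefix $q_2$ of $p_2\lamdown p$ such that $\tn t_2[[p_2\lamdown p]] \gstep\varup \tn t_2[[q_2]]$ and $\tn t_1[[p_1]] \bisim \tn t_2[[q_2]]$. Combining this with the original hypothesis $\tn t_1[[p_1]] \bisim \tn t_2[[p_2]]$ and Lemma~\ref{lem:bisim-skeleton} yields $\valid{t_2[q_2]} = \valid{t_2[p_2]}$.

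The main obstacle, which is the only non-bookkeeping step, is showing $q_2 = p_2$ from this equality. Since both $q_2$ and $p_2$ are prefixes of the same path $p_2\lamdown p$, they are comparable: one is a prefix of the other. Writing the larger as the smaller extended by a suffix $s$ and applying Lemma~\ref{lem:valid-empty-subterm} inside the subterm at the smaller position forces $s = \rootpos$, hence $q_2 = p_2$. This yields $t'[p] = \debruijn{\lamsize{p}}$, and the reverse implication follows by the symmetric argument using the converse clause of the bisimulation. Combining with Observation~\ref{obs:valid-subterms} then gives $\free{\lambda\ t} = \free{\lambda\ t'}$, completing the induction.
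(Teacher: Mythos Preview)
Your proof is correct and follows essentially the same approach as the paper: induction on $t_1[p_1]$, with the only real work in the $\lambda$-case, where the $\varup$-transition from the variable node is transported across the bisimulation, and then Lemma~\ref{lem:bisim-skeleton} together with Lemma~\ref{lem:valid-empty-subterm} pins the landing position down. Your formulation of the key step---observing that both $q_2$ and $p_2$ are prefixes of $p_2\lamdown p$ and hence comparable, then applying Lemma~\ref{lem:valid-empty-subterm} to the suffix---is in fact slightly cleaner than the paper's version, which phrases it as a contradiction argument.
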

\begin{proof}
  We have a bisimulation $R$ with $(\tn t_1[[p_1]], \tn t_2[[p_2]]) \in R$.
  Proceed by induction on $t_1[p_1]$. Cases for variables and application
  proceed straightforward. The interesting case occurs when
  $t_1[p_1] = \lambda\ u$. From $R$ and the induction hypothesis, we obtain
  $\free{t_1[p_1\lamdown]} = \free{t_2[p_2\lamdown]}$.
  Observation~\ref{obs:valid-subterms} shows that it now suffices to prove
  \[\{ q \mid q \in \free{t_1[p_1\lamdown]} \wedge t_1[p_1\lamdown q] \neq \debruijn{\lamsize{q}}\}
    = \{ q \mid q \in \free{t_2[p_2\lamdown]} \wedge t_2[p_2\lamdown q] \neq
    \debruijn{\lamsize{q}\}}.\]
  In other words, it suffices to prove that if $q \in
  \free{t_1[p_1\lamdown]}$ and $t_1[p_1\lamdown q] = \debruijn{\lamsize{q}}$, then
  $t_2[p_2\lamdown q] = \debruijn{\lamsize{q}}$. Without loss of generality, we
  assume $t_2[p_2\lamdown q] < \debruijn{\lamsize{q}}$ to obtain a contradiction.
  We can then make a split $q = q_0 \lamdown q_1$ such that
  \[\tn t_2[[p_2]] \bisim \tn t_1[[p_1]] \bisim \tn t_2[[p_2\lamdown
    q_0\lamdown]].\]
  Then, from Lemma~\ref{lem:bisim-skeleton} we have $\valid{t_2[p_2]} =
  \valid{t_2[p_2\lamdown q_0 \lamdown]}$. Finally,
  Lemma~\ref{lem:valid-empty-subterm} gives a contradiction.
\end{proof}

Next is a lemma that shows that if two subjects are bisimilar, their
sub-structures must also be bisimilar. This is the analogous lemma to
Observation~\ref{obs:forkprop} for fork equivalence.

\begin{lemma}
  \label{lem:bisim-subpath}
  If $q \in \valid{t_1[p_1]}$ and $\tn t_1[[p_1]] \bisim \tn t_2[[p_2]]$ then $\tn t_1[[p_1q]] \bisim \tn t_2[[p_2q]]$.
\end{lemma}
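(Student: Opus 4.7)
\begin{proofsketch}
The plan is to proceed by induction on the length of the path $q$. The observation driving the proof is that the term-node transitions $\gstep x$ for $x \in \{\lamdown, \appleft, \appright\}$ correspond exactly to one step of descent into a subterm, so walking along $q$ is the same as firing these transitions one after another. Since bisimilarity is preserved along matching transitions, bisimilarity will propagate down to the endpoints.

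For the base case $q = \rootpos$, the conclusion is just the hypothesis. For the inductive step, suppose $q = x q'$ with $x \in \{\lamdown, \appleft, \appright\}$. By Definition~\ref{def:node-transitions}, since $xq' \in \valid{t_1[p_1]}$ we have $\tn t_1[[p_1]] \gstep x \tn t_1[[p_1 x]]$. Using the bisimulation hypothesis, there exists a node $n$ with $\tn t_2[[p_2]] \gstep x n$ and $\tn t_1[[p_1 x]] \bisim n$. The transition system on term nodes is deterministic (each position has at most one child along each label $x$), so $n = \tn t_2[[p_2 x]]$. The induction hypothesis, applied to $q'$, then delivers $\tn t_1[[p_1 x q']] \bisim \tn t_2[[p_2 x q']]$, which is the goal. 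The side condition $q' \in \valid{t_1[p_1 x]}$ needed to invoke the induction hypothesis follows from the assumption that $x q' \in \valid{t_1[p_1]}$, using Observation~\ref{obs:valid-subterms}.

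The proof is essentially routine once one notes this correspondence between paths and transition sequences; the mild subtlety to be careful about is that the bisimulation definition only guarantees existence of a matching successor, so we must explicitly appeal to determinism of the term-node transition system to identify that successor as $\tn t_2[[p_2 x]]$. Because $q$ ranges only over $\{\lamdown,\appleft,\appright\}^*$, we never need to consider the $\varup$ transitions, which keeps the induction uniform.
\end{proofsketch}
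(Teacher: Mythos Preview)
Your proof is correct and follows exactly the approach the paper takes: the paper's proof reads simply ``Straightforward by induction on $q$,'' and your proposal spells out precisely that induction, including the appeal to determinism of the transition system (which the paper notes in the discussion after Definition~\ref{def:bisim}).
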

\begin{proof}
  Straightforward by induction on $q$.
\end{proof}

As a final preliminary lemma, we note that if the subject of a term node is
closed, then its context is irrelevant. This is shown by establishing a
bisimulation relation between the node, and a modification such that the context
is thrown away.

\begin{lemma}
  \label{lem:bisim-closed}
  If $t$ and $t[p]$ are closed, then $\tn t[[p]] \bisim \tn t[p][[\rootpos]]$.
\end{lemma}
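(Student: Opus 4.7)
The plan is to exhibit an explicit bisimulation relation and verify the four transition cases, with all the nontrivial content concentrated in the $\varup$ case where closedness of $t[p]$ is essential.

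Concretely, I would define
\[
R \;=\; \bigl\{\,(\tn t[[pq]],\ \tn t[p][[q]]) \;\bigm|\; q \in \valid{t[p]}\,\bigr\}.
\]
Note that $R$ is well-formed as a relation on term nodes: $t$ is closed by assumption, so $\tn t[[pq]]$ is a valid term node whenever $pq \in \valid{t}$; and $t[p]$ is closed, so $\tn t[p][[q]]$ is a valid term node for any $q \in \valid{t[p]}$. The pair $(\tn t[[p]], \tn t[p][[\rootpos]])$ sits in $R$ by taking $q = \rootpos$, so it suffices to check that $R$ is a bisimulation.

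For the three ``downward'' labels $x \in \{\lamdown, \appleft, \appright\}$, the argument is immediate from Definition~\ref{def:node-transitions}: if $(\tn t[[pq]], \tn t[p][[q]]) \in R$ and $x \in \valid{t[pq]} = \valid{t[p][q]}$, then both sides step to $(\tn t[[pqx]], \tn t[p][[qx]])$, which lies in $R$ because $qx \in \valid{t[p]}$. The same correspondence gives the matching step in the other direction, and no step exists when $x \notin \valid{t[p][q]}$.

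The only real content is the $\varup$ case. Suppose $\tn t[[pq]] \gstep\varup \tn t[[s]]$ via a split $pq = s\lamdown r$ with $t[pq] = \debruijn{\lamsize r}$. The crucial step is to show $s$ is an extension of $p$, so that the target is still in $R$. Here closedness of $t[p]$ enters: since $t[p][q] = t[pq] = \debruijn{\lamsize r}$ and $t[p]$ is closed, this variable must be bound within $t[p]$, i.e.\ $q \in \bound{t[p]}$, giving $\lamsize r < \lamsize q$. Hence the splitting $\lambda$ lies inside the $q$-portion of the path, so $q$ admits the same split $q = q'\lamdown r$ with $s = pq'$, and $q' \in \valid{t[p]}$. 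Then on the right $\tn t[p][[q]] \gstep\varup \tn t[p][[q']]$ by the same rule, and $(\tn t[[pq']], \tn t[p][[q']]) \in R$. The symmetric direction is identical: any $\varup$ step from $\tn t[p][[q]]$ arises from a split of $q$, which lifts to a split of $pq$ on the left side.

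The main obstacle is precisely this $\varup$ argument, since every other case is just a skeleton-level bookkeeping check. Once one is comfortable that closedness of $t[p]$ forces every bound variable inside $t[p]$ to find its binder inside $t[p]$ (rather than in the discarded outer context), the proof assembles immediately.
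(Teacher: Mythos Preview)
Your proposal is correct and follows essentially the same approach as the paper: you construct the identical relation $R = \{(\tn t[[pq]], \tn t[p][[q]]) \mid q \in \valid{t[p]}\}$ and verify it is a bisimulation, with the only nontrivial case being $\varup$, where closedness of $t[p]$ forces the binder to lie within $t[p]$. Your write-up is in fact more detailed than the paper's, which dispatches the verification in two sentences.
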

\begin{proof}
  Construct the relation
  \[ R = \{(\tn t[[pq]], \tn t[p][[q]]) \mid q \in \valid{t[p]}\}.\]
  Verifying that $R$ is a bisimulation relation is straightforward. The only
  case of note is when $t[pq]$ is a variable. We know that the binder
  corresponding to the variable is a subterm of $t[p]$, because that term is
  closed. Hence, we can verify that this binder is bisimilar to itself under $R$.
\end{proof}

We are now ready to prove the main technical ``workhorse'' lemma for this section. The
following lemma extracts the required information for a bisimulation relation in
order to establish a single fork. The conclusion of this lemma
corresponds closely to the required conditions in
Definition~\ref{def:single-fork} to build a single fork.
Note that the addition of the index into position $r$ is a technical requirement
to make the induction hypothesis sufficiently strong. When the lemma is used, we
always set $r = \rootpos$.

\begin{lemma}
  \label{lem:bisim-locally-closed-equal}
  If $\tn t[[pq_1]] \bisim \tn t[[pq_2]]$ and $q_1$ is locally closed in $t[p]$,
  then $\lift t[p]<q_1>[r] = \lift t[p]<q_2>[r]$.
\end{lemma}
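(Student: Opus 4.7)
The plan is to prove the equality $\lift t[p]<q_1>[r] = \lift t[p]<q_2>[r]$ for every $r \in \valid{t[pq_1]}$ by structural induction on the subterm $t[pq_1 r]$. Lemma~\ref{lem:bisim-subpath} immediately lifts the hypothesized bisimilarity to $\tn t[[pq_1 r]] \bisim \tn t[[pq_2 r]]$, and then Lemmas~\ref{lem:bisim-skeleton} and~\ref{lem:bisim-free} tell me that the two subjects share the same skeleton and the same free-variable positions. Consequently the cases where $t[pq_1 r]$ is a $\lambda$ or an application are routine: the top-level constructor matches on both sides, the lift operation only rewrites leaves, and the induction hypothesis applied to the children $r\lamdown$, $r\appleft$, $r\appright$ closes those cases.

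The substantive case is when $t[pq_1 r]$ is a variable $\debruijn i$; Lemma~\ref{lem:bisim-free} then forces $t[pq_2 r]$ to be a variable $\debruijn j$, with $r$ bound on one side exactly when it is bound on the other. In the \textit{bound} subcase I would split $r = s_0 \lamdown s_1 = s_0' \lamdown s_1'$ with $\lamsize{s_1} = i$ and $\lamsize{s_1'} = j$, and take the $\varup$ transition on both sides. Bisimulation then gives $\tn t[[pq_1 s_0]] \bisim \tn t[[pq_2 s_0']]$, while Lemma~\ref{lem:bisim-subpath} gives $\tn t[[pq_1 s_0]] \bisim \tn t[[pq_2 s_0]]$, so by transitivity $\tn t[[pq_2 s_0]] \bisim \tn t[[pq_2 s_0']]$. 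Since $s_0$ and $s_0'$ are both prefixes of $r$, one is a prefix of the other; if $s_0 \neq s_0'$ then WLOG $s_0' = s_0 z$ with $z$ nonempty, and Lemma~\ref{lem:bisim-skeleton} combined with Lemma~\ref{lem:valid-empty-subterm} forces $z = \rootpos$, a contradiction. Hence $i = j$.

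The genuinely delicate subcase is when $r$ is \emph{free} in $t[pq_1]$, which is precisely where local closedness of $q_1$ in $t[p]$ earns its keep: a short $\lamdown$-counting argument shows that the binder of $\debruijn i$ sits at some $x \lamdown$ which is a prefix of $p$, rather than strictly inside $q_1 r$. Bisimulation produces a corresponding binder $x' \lamdown$ as a prefix of $pq_2 r$ for $\debruijn j$, and I must rule out that $x'$ strays into $q_2 r$. If $x'$ is itself a prefix of $p$, the same Lemma~\ref{lem:valid-empty-subterm} trick used in the bound subcase pins down $x = x'$. Otherwise $x'$ has the form $p q_2''$ with $q_2''$ nonempty, and transporting $\tn t[[x]] \bisim \tn t[[pq_2'']]$ along the suffix $p''$ defined by $p = x p''$ (via Lemma~\ref{lem:bisim-subpath}) yields $\tn t[[p]] \bisim \tn t[[p q_2'' p'']]$; this contradicts Lemma~\ref{lem:valid-empty-subterm} because $q_2'' p''$ is a non-root valid position in $t[p]$. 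Once $x = x'$ is established, a short $\lamdown$-count calculation delivers $i - \lamsize{q_1} = j - \lamsize{q_2}$, which is exactly the desired equality of lifted indices.

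The main obstacle I anticipate is this free-variable subcase: keeping careful track of whether the bisimilar binder lands back in the shared context $p$ or has escaped into $q_2$, and arranging the right transported bisimilarity to feed into Lemma~\ref{lem:valid-empty-subterm}. The compound cases and the bound subcase, by contrast, should go through mechanically once the induction is set up and the earlier lemmas have been deployed.
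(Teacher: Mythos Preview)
Your proposal is correct and follows essentially the same approach as the paper's proof: induction on $t[pq_1r]$, with the variable case split into bound and free subcases, each resolved by following the $\varup$ transitions and invoking Lemma~\ref{lem:bisim-skeleton} together with Lemma~\ref{lem:valid-empty-subterm}. The one organisational difference is in the free subcase: the paper dispatches the possibility that the $q_2$-side binder falls outside the shared prefix $p$ by a WLOG argument on the relative magnitudes of the two de~Bruijn indices, whereas you handle it via an explicit case split and a transported-bisimilarity contradiction; both routes are sound and arrive at the same $\lamdown$-count computation.
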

\begin{proof}
  Note that from Lemma~\ref{lem:bisim-subpath} we have
  \begin{equation}
    \label{eq:blce-bisim-r}
    \tn t[[pq_1r]] \bisim \tn t[[pq_2r]].
  \end{equation}
  Proceed by induction on $t[pq_1r]$. In case $t[pq_1r]$ is a variable, we perform additional
  case analysis on whether $r$ is bound or free.
  \begin{description}
  \item[Case {$t[pq_1r] = \debruijn i$} such that $r \in \free{t[pq_1]}$:]
    Because $r$ is free, and $q_1$ is locally closed in $t[p]$, we know that
    $q_1r \in \free{t[p]}$. Therefore, there exists a split $p = p_1\lamdown
    p_2$ such that
    \[\tn t[[p_1\lamdown p_2q_1r]] \gstep\varup \tn t[[p_1]].\]
    The bisimulation relation from
    Equation~\ref{eq:blce-bisim-r} additionally mandates that $t[pq_2r] =
    \debruijn j$ for some $j$. Without loss of generality, assume $i \leq j$. We
    then know that there exists $s$ such that
    \begin{mathpar}
      \tn t[[pq_2r]] \gstep\varup \tn t[[p_1s]] \and
      \tn t[[p_1]] \bisim \tn t[[p_1s]].
    \end{mathpar}
    From Lemma~\ref{lem:bisim-skeleton} and Lemma~\ref{lem:valid-empty-subterm}
    we then have $s = \rootpos$, and as such
    \begin{mathpar}
      t[pq_1r] = \debruijn{\lamsize{p_2q_1r}} \and
      t[pq_2r] = \debruijn{\lamsize{p_2q_2r}}.
    \end{mathpar}
    We then conclude that
    \[\lift t[p]<q_1>[r] = \debruijn{\lamsize{p_2q_1r} - \lamsize{q_1}} =
      \debruijn{\lamsize{p_2q_2r} - \lamsize{q_2}} = \lift t[p]<q_2>[r].\]
  \item[Case {$t[pq_1r] = \debruijn i$} such that $r \in \bound{t[pq_1]}$:]\sloppy
    From the main bisimulation hypothesis and Lemma~\ref{lem:bisim-free} we have
    $\free{t[pq_1]} = \free{t[pq_2]}$. Since $\bound{\cdot}$ is the complement
    of $\free{\cdot}$, we know that $r \in \bound{t[pq_2]}$.
    The bisimulation relation from
    Equation~\ref{eq:blce-bisim-r} then mandates that there exist two splits
    $r = r_{1A}\lamdown r_{1B} =
    r_{2A}\lamdown r_{2B}$ such that
    \begin{mathpar}
      t[pq_1r] = \debruijn{\lamsize{r_{1B}}} \and
      t[pq_2r] = \debruijn{\lamsize{r_{2B}}} \and
      \tn t[[pq_1r_{1A}]] \bisim \tn t[[pq_2r_{2A}]].
    \end{mathpar}
    Moreover, from Lemma~\ref{lem:bisim-skeleton} we have $\valid{t[pq_1r_{1A}]} =
    \valid{t[pq_2r_{2A}]}$. Now, without loss of generality, assume $|r_{1A}| \geq
    |r_{2A}|$. We can then make an additional split $r_{1A} = r_{2A} s$, giving us
    \begin{mathpar}
      \valid{t[pq_1r_{2A} s]} = \valid{t[pq_2r_{2A}]}\and
      r = r_{2A} s \lamdown r_{1B} = r_{2A} \lamdown r_{2B} \and
      s \lamdown r_{1B} = \lamdown r_{2B}.
    \end{mathpar}
    Now, using the hypothesis $\tn t[[pq_1]] \bisim \tn t[[pq_2]]$ and
    Lemma~\ref{lem:bisim-subpath} we also have
    \[\valid{t[pq_1r_{2A}]} = \valid{t[pq_2r_{2A}]}.\]
    Putting this together, we get
    \[\valid{t[pq_1r_{2A}]} = \valid{t[pq_1r_{2A} s]}.\]
    Lemma~\ref{lem:valid-empty-subterm} then mandates $s = \rootpos$. This concludes
    the case, because it implies $r_{1B} = r_{2B}$ and hence
    \[\lift t[p]<q_1>[r] = t[pq_1r] =
      \debruijn{\lamsize{r_{1B}}} = \debruijn{\lamsize{r_{2B}}} = t[pq_2r] = \lift t[p]<q_2>[r].\]
  \item[Case {$t[p] = \lambda\ u$} and {$t[p] = u\ v$}:] These cases follow
    by straightforward application of the induction hypothesis.\qedhere
  \end{description}
\end{proof}

In the informal discussion of the algorithm, we have repeatedly referenced the
fact that two closed subjects without a context are $\alpha$-equivalent if and
only the subjects are equal. This fact is a corollary of the technical lemma above.

\begin{lemma}
  \label{lem:bisim-closed-equal}
  For closed terms $t_1$, $t_2$ we have
  $\tn t_1[[\rootpos]] \calphaeq \tn t_2[[\rootpos]]$ iff $t_1 =
  t_2$.
\end{lemma}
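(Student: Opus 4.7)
\begin{proofsketch}
The backward direction is immediate: the \emph{closed} rule of Definition~\ref{def:single-fork}, instantiated with $p_1 = p_2 = r = \rootpos$ and $t_1 = t_2$, yields $\tn t_1[[\rootpos]] \sfork \tn t_2[[\rootpos]]$, hence $\forkeq$. So the real work is the forward direction, and the observation that drives the plan is that Lemma~\ref{lem:bisim-locally-closed-equal} already extracts a pointwise syntactic identity from bisimilarity---but only for two positions inside a single ambient term. The idea is therefore to fold $t_1$ and $t_2$ into one closed ambient term where that lemma can be applied.

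Concretely, I would take $t := t_1\ t_2$. Since $t_1$ and $t_2$ are closed, so is $t$, and both $\appleft$ and $\appright$ are valid term-node positions in $t$. By Lemma~\ref{lem:bisim-closed} we have $\tn t[[\appleft]] \bisim \tn t_1[[\rootpos]]$ and $\tn t[[\appright]] \bisim \tn t_2[[\rootpos]]$, and combining these with the assumed bisimilarity $\tn t_1[[\rootpos]] \bisim \tn t_2[[\rootpos]]$ and transitivity of $\bisim$, we obtain
\[
  \tn t[[\appleft]] \bisim \tn t[[\appright]].
\]

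Now I apply Lemma~\ref{lem:bisim-locally-closed-equal} with $p = \rootpos$, $q_1 = \appleft$, $q_2 = \appright$. The local-closedness hypothesis on $q_1$ is vacuous: since $t[\appleft] = t_1$ is closed, $\free{t[\appleft]} = \emptyset$, so there is nothing to check. The conclusion gives $\lift t<\appleft>[r] = \lift t<\appright>[r]$ for every valid $r$. Because the lifting operation only shifts indices of \emph{free} variables and $t_1, t_2$ are both closed, $\lift t<\appleft> = t_1$ and $\lift t<\appright> = t_2$. Lemma~\ref{lem:bisim-skeleton} guarantees $\valid{t_1} = \valid{t_2}$, so the pointwise equality $t_1[r] = t_2[r]$ at every valid $r$ forces $t_1 = t_2$.

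I do not expect a significant obstacle; the main things to verify carefully are (i) that the local-closedness hypothesis of Lemma~\ref{lem:bisim-locally-closed-equal} is automatic once the subject is closed, and (ii) that $\lift t<\cdot>$ degenerates to ordinary subterm extraction on closed subterms---both follow immediately from Definition~\ref{def:lift}.
\end{proofsketch}
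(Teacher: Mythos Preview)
Your proposal is correct and follows essentially the same route as the paper: fold $t_1,t_2$ into the single closed term $t_1\ t_2$, use Lemma~\ref{lem:bisim-closed} to transport the bisimilarity to $\tn (t_1\ t_2)[[\appleft]] \bisim \tn (t_1\ t_2)[[\appright]]$, and invoke Lemma~\ref{lem:bisim-locally-closed-equal}. The only difference is that the paper simply instantiates $r=\rootpos$ to obtain $t_1 = \lift (t_1\ t_2)<\appleft> = \lift (t_1\ t_2)<\appright> = t_2$ directly, so your detour through pointwise equality at every $r$ and the appeal to Lemma~\ref{lem:bisim-skeleton} are unnecessary.
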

\begin{proof}
  The right-to-left implication follows directly from the fact the bisimulation
  relation is reflexive.
  For the left-to-right implication, we will use
  Lemma~\ref{lem:bisim-locally-closed-equal} instantiated with
  \begin{mathpar}
    t \coloneq t_1\ t_2 \and
    p\coloneq \rootpos \and
    q_1 \coloneq\ \appleft\and
    q_2 \coloneq\ \appright\and
    r \coloneq \rootpos
  \end{mathpar}
  The bisimilarity precondition is obtained with the help of
  Lemma~\ref{lem:bisim-closed}:
  \[\tn (t_1\ t_2)[[\appleft]] \bisim \tn t_1[[\rootpos]] \bisim \tn
    t_2[[\rootpos]] \bisim \tn (t_1\ t_2)[[\appright]].\]
  Lemma~\ref{lem:bisim-locally-closed-equal} then lets us conclude
  \[t_1 = (t_1\ t_2)[\appleft] = \lift (t_1\ t_2)<\appleft> = \lift (t_1\
    t_2)<\appright> = (t_1\ t_2)[\appright] = t_2.\qedhere\]
\end{proof}

Now, we have all the basic ingredients to prove Theorem~\ref{thm:bisim-to-fork}.
As shown in Figure~\ref{fig:double-fork} in the introduction, sometimes, we need
multiple different subforks to establish a fork equivalence. In the proof of
Theorem~\ref{thm:bisim-to-fork}, we use strong induction to decompose
a bisimilar pair $\tn t_1[[p_1]]\bisim \tn t_2[[p_2]]$ into a sequence of
single forks. In the base case, the information required to form the single fork
comes from Lemma~\ref{lem:bisim-closed-equal}. In the step case, the required
information is extracted using Lemma~\ref{lem:bisim-locally-closed-equal}.

\BisimToForkThm*
\begin{proof}
  Proceed by strong induction on $p_1$. That is, we suppose that the claim is
  true for any strict prefix of $p_1$ and other arguments are changed
  arbitrarily. Let us split $p_1 = p'_{1,0}p_{1,1}$ so that $p_{1,1}$ is locally
  closed in $t[p'_{1,0}]$, and $p'_{1,0}$ is as short as possible.
  The proof proceeds differently whether $p'_{1,0}$ is empty or not.
  \begin{description}
  \item[Case {$p'_{1,0} = \rootpos$}:]
    We know that $t_1[p_1]$ is closed. Furthermore, from
    Lemma~\ref{lem:bisim-free} we know that $t_2[p_2]$ is also closed.
    Therefore, using Lemma~\ref{lem:bisim-closed} we have
    \[\tn t_1[p_1][[\rootpos]] \bisim \tn t_1[[p_1]] \bisim \tn t_2[[p_2]]
      \bisim \tn t_2[p_2][[\rootpos]].\]
    Using Lemma~\ref{lem:bisim-closed-equal} we then obtain $t_1[p_1] =
    t_2[p_2]$. We can then directly establish $\tn t_1[[p_1]] \sfork \tn t_2[[p_2]]$
    using the second rule of Definition~\ref{def:single-fork}.
  \item[Case {$p'_{q,0} \neq \rootpos$}:]
    By definition of the split $p'_{1,0}p_{1,1}$, if we move the last symbol from $p'_{1,0}$ to the beginning of $p_{1,1}$,
    $p_{1,1}$ stops being locally closed in $t[p'_{1,0}]$. Therefore, this
    symbol is $\lamdown$. Let us then denote $p_{1,0}'$
    without it as $p_{1,0}$, so that $p = p_{1,0}\lamdown p_{1,1}$.
    Since $\lamdown p_{1,1}$ is not locally closed in $t[p_{1,0}]$, there is a $v\in\free{t_1[p_1]}$ such that
    $\tn t_1[[p_1v]] \gstep\varup \tn t_1[[p_{1,0}]]$. By Lemma~\ref{lem:bisim-free}, also $v\in\free{t_2[p_2]}$, and
    there is a split $p_2 = p_{2,0}\lamdown p_{2,1}$ such that
    \begin{mathpar}
      \tn t_2[[p_2v]] \gstep\varup \tn t_2[[p_{2,0}]] \and
      \tn t_1[[p_{1,0}]]\bisim \tn t_2[[p_{2,0}]].
    \end{mathpar}

    We use induction assumption on $\tn t_1[[p_{1,0}]]\bisim \tn t_2[[p_{2,0}]]$,
    and together with Observation~\ref{obs:forkprop} obtain
    \[
      \tn t_1[[p_{1,0}\lamdown p_{2,1}]]\forkeq \tn t_2[[p_{2,0}\lamdown p_{2,1}]].
    \]
    To finish the proof, we need to prove
    that $\tn t_1[[p_{1,0}\lamdown p_{1,1}]] \forkeq \tn t_1[[p_{1,0}\lamdown p_{2,1}]]$.
    We know that these two term nodes are bisimilar by
    \[
      \tn t_1[[p_{1,0}\lamdown p_{1,1}]] \bisim \tn t_2[[p_{2,0}\lamdown p_{2,1}]] \bisim \tn t_1[[p_{1,0}\lamdown p_{2,1}]].
    \]
    Lemma~\ref{lem:bisim-locally-closed-equal} then gives us
    \[\lift t_1[p_{1,0}]<\lamdown p_{1,1}> = \lift t_1[p_{1,0}]<\lamdown
      p_{2,1}>.\]
    The required fork can then be established by using the first rule of
    Definition~\ref{def:single-fork}.
    A schematic overview of this case can be found in
    Figure~\ref{fig:bisim-to-fork}.
    \qedhere
  \end{description}
\end{proof}

\subsection{Fork Equivalence implies Algorithm}
\label{sec:fork-to-alg}


Before we start analyzing the behavior of the globalization algorithm, we first
make some preliminary observations about the interaction between term indexing,
closed terms, substitutions, locally closed positions, and strongly connected
components.

\begin{observation}
  \label{obs:locally-closed-closed}
  Let $p \in \valid{t}$ be locally closed in $t$. If either $t$ is closed or $\lift
  t<p>$ is closed, then $\lift t<p> = t[p]$.
\end{observation}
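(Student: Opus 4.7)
The plan is to reduce the observation to showing $\free{t[p]}=\emptyset$ under either hypothesis, since by Definition~\ref{def:lift} the terms $\lift t<p>$ and $t[p]$ have the same skeleton and agree on every position outside $\free{t[p]}$. So if no free positions of $t[p]$ exist, the two terms are equal on the nose.

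First I would use the local-closedness hypothesis to record the key inequality: for every $q\in\free{t[p]}$, local closedness gives $pq\in\free{t}$, which in turn yields $t[pq]\geq\lamsize{pq}=\lamsize{p}+\lamsize{q}$. This is the one arithmetic fact driving both cases.

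Then I would split on the disjunctive hypothesis. In the first case, if $t$ is closed then $\free{t}=\emptyset$, so by local closedness no $q\in\free{t[p]}$ can exist, hence $\free{t[p]}=\emptyset$. In the second case, suppose $\lift t<p>$ is closed and, for contradiction, take $q\in\free{t[p]}$. By Definition~\ref{def:lift}, $\lift t<p>[q]=t[pq]-\lamsize{p}$, and combining with the inequality above yields $\lift t<p>[q]\geq\lamsize{q}$, so $q\in\free{\lift t<p>}$, contradicting closedness. Hence again $\free{t[p]}=\emptyset$.

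There is no real obstacle here: the proof is a one-paragraph case split whose only content is the bookkeeping of de Bruijn indices, using that $\lift t<p>$ shifts free indices down by exactly $\lamsize{p}$, whereas local closedness forces them to start at least $\lamsize{p}+\lamsize{q}$. The mildly delicate point worth being explicit about is that ``$\lift t<p> = t[p]$'' is a statement about the actual term values, so one must invoke Definition~\ref{def:lift} to observe that the two terms can only disagree on $\free{t[p]}$ before concluding equality from emptiness of this set.
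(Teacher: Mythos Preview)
Your proof is correct: the reduction to $\free{t[p]}=\emptyset$ via the inequality $t[pq]\geq\lamsize{p}+\lamsize{q}$ obtained from local closedness, followed by the two-case argument, is exactly the intended bookkeeping. The paper states this result as an observation without proof, so there is nothing to compare against; you have simply made explicit what the paper treats as self-evident.
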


\begin{observation}
  \label{obs:lift-subst}
  If $p$ is locally closed in $t$ then $p$ is locally closed in $t\sigma$.
  Further, $\lift t\sigma<p> = \lift
  t<p>\sigma$.
\end{observation}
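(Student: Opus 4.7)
The plan is to prove both statements by induction on the structure of $t$ (equivalently, by induction on the length of $p$). The key underlying observation is that capture-avoiding simultaneous substitution does not disturb the binder skeleton of $t$ or any bound-variable position; it only expands the tree at free-variable positions of $t$ by inserting appropriately shifted copies of the $u_i$.

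For preservation of local closedness, I would argue that every position $q \in \free{t\sigma[p]}$ arises either from a position that was already free in $t[p]$, or from deep inside an inserted shifted copy of some $u_i$. In the first case, local closedness of $p$ in $t$ puts $pq$ in $\free{t}$, and either the substitution leaves the variable alone (its index lies outside $[0,|\sigma|)$) or replaces it by a copy of some $u_i$ whose free variables have been shifted past the $\lamsize{p}$ enclosing binders; either way $pq \in \free{t\sigma}$. In the second case, capture-avoidance has already shifted the free variable past those $\lamsize{p}$ binders, so again $pq \in \free{t\sigma}$.

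For the equation $\lift t\sigma<p> = \lift t<p>\sigma$, I would verify pointwise equality at every valid position of $\lift t<p>$. Constructors $\lambda$ and application, and bound-variable positions of $t[p]$, contribute identically on both sides. The only interesting case is a free-variable position $q \in \free{t[p]}$ with $t[pq] = \debruijn i$, where by local closedness $i \geq \lamsize{p} + \lamsize{q}$. On the right-hand side $\lift t<p>$ first replaces $\debruijn i$ by $\debruijn{i - \lamsize{p}}$, after which $\sigma$ substitutes $u_{i - \lamsize{pq}}$ with its free variables shifted by $\lamsize{q}$. On the left-hand side $\sigma$ acting on $t$ substitutes the very same $u_{i - \lamsize{pq}}$, now shifted by $\lamsize{pq}$, and then $\lift \cdot<p>$ subtracts $\lamsize{p}$ from each outer-pointing free variable of the inserted subterm. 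Since $\lamsize{pq} - \lamsize{p} = \lamsize{q}$, the two sides agree.

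The main obstacle is the arithmetic bookkeeping of de Bruijn shifts; it is very easy to drop or double-count a factor of $\lamsize{p}$ when combining capture-avoiding substitution with the lifting operation. A clean route is to first isolate a small auxiliary commutation lemma for how shifts interact with $\lift \cdot<p>$ on freshly inserted subterms, after which the pointwise identity above follows essentially by direct calculation.
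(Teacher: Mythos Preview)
The paper records this as an ``Observation'' and gives no proof at all; it is treated as a routine fact about the interaction of capture-avoiding substitution with the lifting operator $\lift{\cdot}<p>$. Your proposal therefore goes well beyond the paper, and your outline is essentially correct: the skeleton of $t$ is untouched by $\sigma$, bound-variable positions are untouched, and at each free-variable position the combination ``shift by $\lamsize{pq}$ then subtract $\lamsize p$'' on the left matches ``subtract $\lamsize p$ then shift by $\lamsize q$'' on the right.

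One small imprecision worth tightening: in your first case for preservation of local closedness you write ``either the substitution leaves the variable alone \ldots\ or replaces it by a copy of some $u_i$; either way $pq \in \free{t\sigma}$.'' But if the variable at $q$ is actually replaced, then $q$ is no longer a variable position in $(t\sigma)[p]$ at all, so this sub-case never arises under the hypothesis $q \in \free{(t\sigma)[p]}$; it belongs entirely to your second case. The clean dichotomy is: either $q \in \vars{t[p]}$ and the variable there was untouched by $\sigma$, or $q = q_0 r$ with $q_0 \in \free{t[p]}$ substituted and $r$ a position inside the shifted $u_j$. With that adjustment the argument goes through exactly as you describe, and your suggestion to isolate a small commutation lemma for shifts versus $\lift{\cdot}<p>$ is a sensible way to keep the de~Bruijn arithmetic under control.
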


\begin{observation}
  \label{obs:locally-closed-open}
  Let $t$ be a closed term such that $p \in \SCC(t)$ and $q$ is
  locally closed in $t[p]$. If $\lift t[p]<q>$ is open, then $pq \in \SCC(t)$.
\end{observation}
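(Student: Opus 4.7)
The plan is to argue by contradiction on the strongly-connected-component membership, after first isolating a useful translation of the hypothesis ``$\lift t[p]<q>$ is open.'' First, I would unfold the definition of $\lift$: since $q$ is locally closed in $t[p]$, every position $s \in \free{t[pq]}$ satisfies $t[pqs] \geq \lamsize{qs} = \lamsize{q} + \lamsize{s}$, so that $\lift t[p]<q>[s] = t[pqs] - \lamsize{q} \geq \lamsize{s}$. Combined with the observation that bound positions in $t[pq]$ remain bound in $\lift t[p]<q>$, this shows that the free positions of $\lift t[p]<q>$ coincide with $\free{t[pq]}$. In particular, ``$\lift t[p]<q>$ is open'' is equivalent to the existence of some $s \in \free{t[pq]}$; fix such an $s$ for the rest of the argument, noting that it gives the lower bound $t[pqs] \geq \lamsize{q} + \lamsize{s}$.

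Next, I would reduce the goal $pq \in \SCC(t)$ to checking the defining condition on each nonempty prefix of $pq$. Since $p \in \valid{t}$ and $q \in \valid{t[p]}$ (the latter coming from $q$ being locally closed), we already have $pq \in \valid{t}$. Every nonempty prefix of $pq$ is either (i) a nonempty prefix of $p$, which is handled immediately by the hypothesis $p \in \SCC(t)$, or (ii) of the form $pq'$ for a nonempty prefix $q'$ of $q$. So the only work is to show that $t[pq']$ is not closed whenever $q'$ is a nonempty prefix of $q$.

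The core step is to derive a contradiction assuming $t[pq']$ is closed. Write $q = q'q''$ and consider the variable position $q''s$ in $t[pq']$ (which is a valid variable position since $s \in \vars{t[pq]}$). Because $t[pq']$ is closed, $q''s$ must be bound in $t[pq']$, giving the upper bound $t[pqs] = t[pq'][q''s] < \lamsize{q''s} = \lamsize{q''} + \lamsize{s}$. Putting this together with the lower bound from step one,
\[
\lamsize{q'} + \lamsize{q''} + \lamsize{s} \;=\; \lamsize{q} + \lamsize{s} \;\leq\; t[pqs] \;<\; \lamsize{q''} + \lamsize{s},
\]
which forces $\lamsize{q'} < 0$, contradicting the fact that $\lamsize{\cdot}$ is a nonnegative count. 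Hence $t[pq']$ cannot be closed, completing the proof.

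I do not foresee any serious obstacle: the only subtle point is making the translation of the ``openness'' hypothesis into a concrete free position $s$ of $t[pq]$ explicit, so that the clean de Bruijn arithmetic in the contradiction step becomes available. Everything else is essentially bookkeeping about prefixes and the definitions of $\valid{\cdot}$, $\free{\cdot}$, and $\SCC(\cdot)$.
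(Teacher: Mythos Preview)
Your argument is correct. The paper states this result as an observation without proof, so there is nothing to compare against; your unpacking of ``$\lift t[p]<q>$ open'' via local closedness into a concrete witness $s\in\free{t[pq]}$ with $t[pqs]\geq \lamsize{q}+\lamsize{s}$, followed by the prefix-by-prefix contradiction on $\lamsize{q'}$, is exactly the kind of routine de Bruijn bookkeeping the authors presumably had in mind.
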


Now we can start analyzing the behavior of the globalization algorithm. We start
with a simple lemma that states that the context around a closed subject does
not influence the behavior of $\glob$ and its helper functions on that subject.
This lemma will later be used to show if there is a single fork between term
nodes formed through the second rule of Definition~\ref{def:single-fork}, then
the closed subjects of those nodes must be equal after globalization.

\begin{lemma}
  \label{lem:closed-to-alg}
  If $p \in \valid{t}$ and $t[p]$ is closed, then for all $r$ and $\sigma$,
  \[
    \glob(t)[p] = \globscc(r,\sigma,t)[p] = \globstep(r,\sigma,t)[p] = \glob(t[p]).
  \]
\end{lemma}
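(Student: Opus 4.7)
The plan is to proceed by induction on $|p|$, establishing the chain of equalities jointly. The guiding intuition is that substitution acts as the identity on closed terms (compare Observation~\ref{obs:lift-subst}), so once the algorithm's traversal reaches the closed subterm $t[p]$, the accumulated substitution $\sigma$ and the SCC root $r$ become irrelevant, and the mutual recursion between $\glob$, $\globscc$ and $\globstep$ collapses to a single $\glob$ call on that closed subterm.

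For the base case $p = \rootpos$, the term $t$ is itself closed. The equality $\glob(t)[\rootpos] = \glob(t)$ is immediate, and $\globstep(r,\sigma,t)[\rootpos] = \glob(t)$ follows because closedness of $t$ triggers the first branch of $\globstep$, reducing it to $\glob(t\sigma) = \glob(t)$. For the $\globscc$ equality I plan a small case analysis on the shape of $t$: for an application $t_1\ t_2$ both children are closed and the claim recurses through $\globstep$; for $\gvar{u}$ both sides evaluate directly to $\gvar{u}$; a bare de Bruijn index is impossible because $t$ is closed; and for a $\lambda$-abstraction the precondition of $\globscc$ (that $t$ occurs in $\SCC(r)$) forces $t = r$ and $\sigma = []$, whereupon $\globscc(r,[],r) = \glob(r)$ by the very definition of $\glob$.

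For the inductive step $p = xp'$, the nontriviality of $p$ forces $t$ to be either a $\lambda$-abstraction or an application, since a de Bruijn index or a $\gvar{\cdot}$ admits no proper subpositions. In each sub-case I would unfold one layer of the relevant defining equation — for instance, $\globscc(r, \sigma, \lambda u)[\lamdown p'] = \globstep(r, \gvar{r\ u} : \sigma, u)[p']$ — and then apply the induction hypothesis on the strictly shorter path $p'$ at the immediate subterm, using that $(\lambda u)[\lamdown p'] = u[p']$ remains closed. The $\globstep$ equality then splits further over its two branches: when $t$ is closed or lies in $\text{duplicates}(r)$, $\globstep(r,\sigma,t) = \glob(t\sigma)$ and the goal reduces via the already-established $\glob$-equality applied to $t\sigma$ at position $p$, using that $(t\sigma)[p] = t[p]$ because substitution cannot reach a closed target subterm; otherwise control falls through to $\globscc$ and we invoke the equality established there.

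I expect the main obstacle to be the bookkeeping across the mutual recursion — in particular tracking the substitution shifts under binders and the fresh $\gvar{r\ u}$ annotation pushed onto $\sigma$ when descending a $\lambda$. A secondary subtlety lies in the base case for $\globscc$, where the preconditions of Definition~\ref{def:eff-globalization} must be invoked to rule out a potentially problematic configuration of a closed subject inside $\globscc$ that is not itself the SCC root; without this appeal, one can construct $r$ and $\sigma$ for which $\globscc(r,\sigma,\lambda u) \neq \glob(\lambda u)$, indicating that the precondition is indispensable.
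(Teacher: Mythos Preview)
Your approach is essentially the paper's: induction on $p$, with the base case handled via $t\sigma = t$ for closed $t$ and the inductive step by a single unfolding of the mutual recursion followed by the induction hypothesis; you simply supply more detail than the paper's two-line sketch. One small correction to your base-case analysis: the preconditions of Definition~\ref{def:eff-globalization} do force $r = t$ when $t$ is closed (since a closed term can only sit at position $\rootpos$ in $\SCC(r)$), but they do not literally force $\sigma = []$ — what actually makes the $\lambda$ sub-case work is that, because $t$ is closed, any entries of $\sigma$ beyond those pushed during the traversal of $t$ itself are never consulted, so $\globscc(t,\sigma,t) = \globscc(t,[],t)$ regardless of $\sigma$.
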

\begin{proof}
  We proceed by induction on $p$.
  If $p = \rootpos$ then $t$ is closed,
  and hence $t\sigma = t$. The conclusion follows directly from the definitions.
  Otherwise, when $p = xp_0$, we can perform a single unfolding of the
  definitions. The conclusion then follows directly from the induction hypothesis.
\end{proof}

The remaining lemmas are meant to analyze the behavior of $\glob$ for term nodes
with a single fork formed through the first rule of
Definition~\ref{def:single-fork}. Eventually, we will argue that a this rule
gives rise to a term whose SCC contains a duplicate term. The following
lemma demonstrates that this allows us to move an indexing operation that selects this
duplicate term from outside $\globstep$ to inside $\glob$. This is a similar
idea to the previous lemma, but now following a different path of the algorithm,
were we know that a non-trivial substitution will occur.

\begin{lemma}
  \label{lem:dup-globstep-to-glob}
  Let $p \in \valid{t}$, $t\sigma[p]$ be closed, and $t[p] \in \text{duplicates}(r)$. Then
  \[\globstep(r, \sigma, t)[p] = \glob(t\sigma[p]).\]
\end{lemma}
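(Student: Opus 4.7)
The plan is to proceed by induction on $|p|$, splitting first on which branch of $\globstep$ is taken, and then, when the $\globscc$ branch fires, on the syntactic shape of $t$. In the base case $p = \rootpos$ the hypothesis $t[p] \in \text{duplicates}(r)$ says $t$ itself is a duplicate, so $\globstep(r, \sigma, t) = \glob(t\sigma) = \glob(t\sigma[\rootpos])$ directly. In the inductive step, if $\globstep$ still dispatches to $\glob(t\sigma)$ (because $t$ is closed or $t \in \text{duplicates}(r)$), I only need to move the indexing at $p$ through $\glob$; the hypothesis that $t\sigma[p]$ is closed lets me invoke Lemma~\ref{lem:closed-to-alg} to obtain $\glob(t\sigma)[p] = \glob(t\sigma[p])$.

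Otherwise $\globstep$ dispatches to $\globscc$; since $p \neq \rootpos$, the leaf cases $t = \debruijn i$ and $t = \gvar{v}$ are ruled out by $p \in \valid{t}$, so $t$ is either an application or a binder. The application case $t = u_1\, u_2$ is straightforward: with $p = \appleft p_0$ (or $\appright p_0$), $\globscc$ unfolds to $\globstep(r, \sigma, u_i)[p_0]$ for the relevant child, and the preconditions of the inductive hypothesis are immediate, since $u_i[p_0] = t[p] \in \text{duplicates}(r)$ and $u_i\sigma[p_0] = t\sigma[p]$ is closed.

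The main obstacle is the binder case $t = \lambda\ u$ with $p = \lamdown p_0$. Here $\globscc$ unfolds to $\globstep(r, \gvar{r\ u} : \sigma, u)[p_0]$, so to apply the inductive hypothesis I need $u(\gvar{r\ u}:\sigma)[p_0]$ to be closed; in fact I will establish the stronger identity $u(\gvar{r\ u}:\sigma)[p_0] = t\sigma[p]$ so the conclusion lines up. The subtlety is that the algorithm's extended substitution replaces the newly bound index $\debruijn 0$ by the closed global variable $\gvar{r\ u}$, whereas the ordinary capture-avoiding substitution behind $t\sigma = (\lambda\ u)\sigma$ leaves $\debruijn 0$ intact. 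I plan to verify, by a direct comparison of how the two substitutions propagate down to depth $\lamsize{p_0}$ inside $u$, that they agree everywhere on $u[p_0]$ except possibly at the ``just-below-threshold'' index $\debruijn{\lamsize{p_0}}$, and that the assumed closedness of $t\sigma[p]$ precisely rules out any occurrence of that index within $u[p_0]$. Consequently $u(\gvar{r\ u}:\sigma)[p_0] = t\sigma[p]$ is closed, the inductive hypothesis fires, and chaining the unfoldings yields $\globstep(r, \sigma, t)[p] = \glob(t\sigma[p])$.
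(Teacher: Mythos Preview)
Your proposal is correct and follows essentially the same route as the paper's own proof: induction on $p$, handling the $\glob$ branch of $\globstep$ via Lemma~\ref{lem:closed-to-alg}, and in the $\lamdown$ case establishing $u(\gvar{r\ u}:\sigma)[p_0] = t\sigma[p]$ by noting that closedness of $t\sigma[p]$ prevents any reference to the outermost binder. The paper is terser (it skips the application case and the ruled-out leaf cases, and summarizes the binder argument in one sentence), but the structure and the key observation are the same.
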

\begin{proof}
  By induction on $p$. When $p = \rootpos$, the equality holds trivially.
  Furthermore, if $t$ is closed or $t \in \text{duplicates}(r)$, we have
  \[\globstep(r, \sigma, t)[p] = \glob(t\sigma)[p].\]
  The problem then reduces to Lemma~\ref{lem:closed-to-alg}.
  Otherwise, the most interesting case is $p =\ \lamdown p_0$.
  From the induction hypothesis, we get
  \[\globstep(r, \sigma, t)[\lamdown p_0] =
    \globstep(r, (\gvar{r\ t} : \sigma), t[\lamdown])[p_0] =
    \glob(t[\lamdown](\gvar{r\ t} : \sigma)[p_0]).\]
  The proof is then completed by realizing that
  \[t[\lamdown](\gvar{r\ t} : \sigma)[p_0] =
    t\sigma[\lamdown p_0]\]
  This is true because $t\sigma[\lamdown p_0]$ is closed, and therefore the
  topmost $\lambda$ of $t$ is never referenced.
\end{proof}

The following lemma contains the core argument that allows us to conclude
correct behavior of $\glob$ on subjects of nodes with a single fork formed
through the first rule. Note how some of the assumptions of this lemma correspond
closely to the preconditions of this rule.

\begin{lemma}
  \label{lem:fork0-to-alg}
  Let $r$ be closed, $p \in \SCC(r)$ and $r[p]\sigma$ be closed.
  Let $q_1, q_2\in \valid{r[p]}$ be locally closed positions in $r[p]$ such that
  $\lift r[p]<q_1> = \lift r[p]<q_2>$. Then
  \[
  \globstep(r,\sigma,r[p])[q_1] = \globstep(r,\sigma,r[p])[q_2].
  \]
\end{lemma}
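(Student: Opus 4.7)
The plan is to split on whether the common lift $u \coloneqq \lift r[p]<q_1> = \lift r[p]<q_2>$ is closed or open. In both cases the goal reduces to pushing the indexing $[q_i]$ inside $\globstep$ via one of the two previous lemmas and then recognizing that the resulting expression is symmetric in $q_1$ and $q_2$.

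If $u$ is closed, then by Observation~\ref{obs:locally-closed-closed} we have $r[p][q_i] = u$ for both $i$, so each $r[p][q_i]$ is itself closed. Lemma~\ref{lem:closed-to-alg}, applied to $r[p]$ at position $q_i$, then gives $\globstep(r,\sigma,r[p])[q_i] = \glob(r[p][q_i]) = \glob(u)$, which is independent of $i$.

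The substantive case is when $u$ is open. I would first dispose of the trivial situation $q_1 = q_2$, and then observe that neither $pq_1$ nor $pq_2$ can equal $\rootpos$: if, say, $pq_1 = \rootpos$, then $r[p][q_1] = r$ would be closed, making $u$ closed and contradicting the case assumption. By Observation~\ref{obs:locally-closed-open} both $pq_1$ and $pq_2$ lie in $\SCC(r)$. The $\textit{let}$-abs rule of Definition~\ref{def:single-fork}, applied with the trailing position set to $\rootpos$, produces a single fork $\tn r[[pq_1]] \sfork \tn r[[pq_2]]$, so by Definition~\ref{def:term-summary} the summaries $|r[pq_1]|$ and $|r[pq_2]|$ agree, witnessing $r[p][q_i] = r[pq_i] \in \text{duplicates}(r)$. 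Observation~\ref{obs:lift-subst} next shows $q_i$ is locally closed in $r[p]\sigma$, which combined with the closedness of $r[p]\sigma$ yields that $r[p]\sigma[q_i]$ is itself closed. Lemma~\ref{lem:dup-globstep-to-glob} then gives $\globstep(r,\sigma,r[p])[q_i] = \glob(r[p]\sigma[q_i])$, and a final use of Observation~\ref{obs:lift-subst} rewrites this as $\glob(\lift r[p]<q_i>\sigma) = \glob(u\sigma)$, again independent of $i$.

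The main delicate step will be justifying the membership $r[p][q_i] \in \text{duplicates}(r)$ in the open case. This simultaneously requires $pq_1$ and $pq_2$ to be distinct non-root positions lying in $\SCC(r)$, and requires their term summaries to coincide. Distinctness and non-rootness are secured by the case analysis on the closedness of $u$, while the summary equality is an immediate consequence of building a single fork from the hypotheses and invoking the defining property of term summaries.
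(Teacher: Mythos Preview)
Your proposal is correct and follows essentially the same two-case argument as the paper: in the closed case both sides reduce to $\glob(u)$ via Lemma~\ref{lem:closed-to-alg}, and in the open case you establish $r[pq_i]\in\text{duplicates}(r)$ to invoke Lemma~\ref{lem:dup-globstep-to-glob} and then equalize via $r[p]\sigma[q_i]=u\sigma$. You are in fact slightly more careful than the paper in verifying the side conditions $pq_i\neq\rootpos$ and $q_1\neq q_2$ required by Definition~\ref{def:duplicates}, and in stating the fork as $\tn r[[pq_1]]\sfork\tn r[[pq_2]]$ rather than over the (possibly open) term $r[p]$.
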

\begin{proof}
  First, consider the case where $\lift r[p]<q_1>$ is closed. From
  Observation~\ref{obs:locally-closed-closed} we then have
  \[
    r[p][q_1] = \lift r[p]<q_1> = \lift r[p]<q_2> = r[p][q_2].
  \]
  By Lemma~\ref{lem:closed-to-alg}, both sides of the desired equation now
  reduce to $\glob(r[pq_1])$, making it true by reflexivity.

  We can now assume that neither $\lift r[p]<q_1>$ nor $\lift r[p]<q_2>$ is
  closed. From Observation~\ref{obs:locally-closed-open} we then have $pq_1,
  pq_2 \in \SCC(r)$. Furthermore, the definition of a single fork
  (\ref{def:single-fork}) gives us $\tn r[p][[q_1]] \sfork \tn r[p][[q_1]]$, and
  hence, according to the definition of term summaries (\ref{def:term-summary}),
  $|r[pq_1]| = |r[pq_2]$|. These facts give us
  \[r[pq_1], r[pq_2] \in \text{duplicates}(r).\]
  Using Lemma~\ref{lem:dup-globstep-to-glob} we complete the lemma as follows:
  \begin{align*}
    \globstep(r,\sigma,r[p])[q_1]
    &= \glob(r[p]\sigma[q_1]) \\
    &= \glob(r[p]\sigma[q_2])
    = \globstep(r,\sigma,r[p])[q_2]
  \end{align*}
  The middle step in this reasoning chain is justified using
  Observation~\ref{obs:locally-closed-closed} and~\ref{obs:lift-subst} by
  \[r[p]\sigma[q_1] = \lift r[p]\sigma<q_1> = \lift r[p]<q_1>\sigma = \lift
    r[p]<q_2>\sigma = \lift r[p]\sigma<q_2> = r[p]\sigma[q_2].\qedhere\]
\end{proof}

Although the hypotheses in the lemma above are similar to the preconditions of a
single fork, there are some additional assumptions. This includes the existence of a
substitution list $\sigma$ and an assumption that $p \in \SCC(r)$. The
conclusion is also slightly off. Eventually, we need to conclude with a
statement of the form
\[\globstep(r,[],r)[pq_1] = \globstep(r,[],r)[pq_2],\]
where we have an empty substitution list, and the index $p$ is outside of
$\globstep$. The following technical lemma shows that if we perform enough
reduction steps of the globalization algorithm, this equation will eventually
take a shape suitable for Lemma~\ref{lem:fork0-to-alg}.

\begin{lemma}
  \label{lem:glob-boring-reduction}
  Let $r$ be closed term, $\sigma$ a substitution list, $p \in \SCC(r)$ and
  $q \in \valid{r[p]}$.
  Then there exist $\sigma_1$, $\sigma_2$, $r'$, $p'$ and $q'$ such that
  \begin{mathpar}
    pq = p'q' \and
    r' = r[p']\sigma_1 \and
    r' \text{ is closed} \and
    r'[q']\sigma_2 \text{ is closed} \and
    q' \in \SCC(r') \and
    \globstep(r, \sigma, r[p])[q] = \globstep(r', \sigma_2, r'[q']).
  \end{mathpar}
\end{lemma}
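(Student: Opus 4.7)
The plan is to proceed by induction on the length of $q$. In the base case $q = \rootpos$, the choice $p' = \rootpos$, $q' = p$, $\sigma_1 = []$, $\sigma_2 = \sigma$, $r' = r$ works: the splitting $p \cdot \rootpos = \rootpos \cdot p$ is trivial, $r$ is closed by hypothesis, $p \in \SCC(r)$ by hypothesis, and closedness of $r[p]\sigma$ is the implicit precondition of the $\globstep$ call appearing in the statement.

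For the inductive step $q = xq_0$ with $x \in \{\lamdown,\appleft,\appright\}$, I would unfold $\globstep(r, \sigma, r[p])$ along its defining case analysis. If $r[p]$ is neither closed nor in $\text{duplicates}(r)$, $\globstep$ reduces to $\globscc$; because $xq_0 \in \valid{r[p]}$, the term $r[p]$ must be an abstraction or an application, and descending one level with $x$ yields $\globstep(r, \sigma', r[px])[q_0]$, where $\sigma' = \gvar{r \cdot r[p\lamdown]} : \sigma$ when $x = \lamdown$ and $\sigma' = \sigma$ otherwise. Since $r[p]$ is not closed, $px \in \SCC(r)$, so the induction hypothesis on $(r, \sigma', px, q_0)$ yields a decomposition that transfers directly to the outer call because $p(xq_0) = (px)q_0$.

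If instead $r[p]$ is closed or is in $\text{duplicates}(r)$, $\globstep$ transitions to a new SCC: $\globstep(r, \sigma, r[p]) = \glob(r^*)$, where $r^* = r[p]\sigma$ is closed. Unfolding $\glob(r^*) = \globscc(r^*, [], r^*)$ and descending one level with $x$ yields $\globstep(r^*, \sigma^*, r^*[x])[q_0]$ for a suitable $\sigma^*$ (a singleton g-var list for a binder step, empty otherwise). If $r^*[x]$ is not closed, then $x \in \SCC(r^*)$ and the induction hypothesis on $(r^*, \sigma^*, x, q_0)$ applies directly; otherwise $\globstep(r^*, \sigma^*, r^*[x]) = \glob(r^*[x])$, and the induction hypothesis on $(r^*[x], [], \rootpos, q_0)$ applies instead. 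In either sub-case, the outer decomposition is obtained by prepending $p$ and $x$ appropriately to the inner $p''$, and by combining $\sigma$ with the g-vars accumulated during the descent to form $\sigma_1$.

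The main obstacle is the substitution bookkeeping in the SCC-transition case: obtaining the required identity $r' = r[p']\sigma_1$ forces one to show that indexing into $r[p]\sigma$ at a further path $\tilde p$ can be re-expressed as indexing into $r[p\tilde p]$ with an enlarged substitution that extends $\sigma$ by a g-var for every binder traversed along $\tilde p$. This commutation is sound because substitution lists used by the algorithm contain only closed g-vars, so the usual index-shifting under binders has no effect inside them. A short auxiliary lemma establishing this property, together with routine verification of the remaining IH preconditions ($r'$ closed, $r'[q']\sigma_2$ closed, and $q' \in \SCC(r')$), completes the argument.
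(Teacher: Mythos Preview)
Your approach---induction on $q$---is exactly what the paper does; the paper's entire proof reads ``Follows trivially by induction on $q$.'' So at the level of strategy you are aligned, and you have in fact filled in far more detail than the paper bothers to.

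There is, however, one small slip in your Case~1. You write ``Since $r[p]$ is not closed, $px \in \SCC(r)$,'' but that inference is not quite right: membership of $px$ in $\SCC(r)$ requires every nonempty prefix of $px$, \emph{including $px$ itself}, to index an open subterm. From $p\in\SCC(r)$ you get openness at all nonempty prefixes of $p$; the extra condition you need is that $r[px]$ is not closed, which is not implied by $r[p]$ being open. When $r[px]$ happens to be closed, the induction hypothesis on $(r,\sigma',px,q_0)$ does not apply directly. The fix is routine and you already demonstrate it in your Case~2: observe that $\globstep(r,\sigma',r[px]) = \glob(r[px])$ since $r[px]$ is closed, and then invoke the induction hypothesis on $(r[px],[\,],\rootpos,q_0)$ instead. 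So Case~1 needs the same closed/open split on the child that you already perform after an SCC transition.

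Your remark about the substitution bookkeeping (rewriting $(r[p]\sigma)[\tilde p]$ as $r[p\tilde p]$ applied to an extended substitution) is on point; the paper silently relies on this commutation both here and in its subsequent use of Observation~\ref{obs:lift-subst}, and it does hold because the lists $\sigma$ produced by the algorithm contain only closed $g$-vars.
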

\begin{proof}
  Follows trivially by induction on $q$.
\end{proof}

Now follows the main fact that the existence of single fork between term nodes
means that their subjects are equal after globalization. The final theorem then
follows trivially from this.

\begin{lemma}
  \label{lem:same-fork-to-alg}
  If $\tn t_1[[p_1]] \sfork \tn t_2[[p_2]]$, then $\glob(t_1)[p_1] = \glob(t_2)[p_2]$.
\end{lemma}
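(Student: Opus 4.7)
The plan is to split the proof on which of the two rules of Definition~\ref{def:single-fork} established the single fork, invoking the preliminary lemmas of this section in different ways for each.

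For the \emph{closed} rule, we have $\tn t_1[[p_1 r]] \sfork \tn t_2[[p_2 r]]$ because $t_1[p_1]$ is closed and $t_1[p_1] = t_2[p_2]$. I would apply Lemma~\ref{lem:closed-to-alg} to both sides to obtain $\glob(t_i)[p_i] = \glob(t_i[p_i])$ for $i=1,2$. Since $t_1[p_1] = t_2[p_2]$, this immediately yields $\glob(t_1)[p_1] = \glob(t_2)[p_2]$, and indexing both sides by the common extension $r$ finishes the case.

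For the \emph{let-abs} rule, we have $\tn t[[pq_1 r]] \sfork \tn t[[pq_2 r]]$ with $q_1$ locally closed in $t[p]$ and $\lift t[p]<q_1> = \lift t[p]<q_2>$. I would split further on whether $\lift t[p]<q_1>$ is closed. If it is closed, Observation~\ref{obs:locally-closed-closed} forces $t[pq_1] = \lift t[p]<q_1> = \lift t[p]<q_2> = t[pq_2]$, both closed, reducing to the previous case with common closed subject $t[pq_1]$. Otherwise $\lift t[p]<q_1>$ is open. I would then apply Lemma~\ref{lem:glob-boring-reduction} starting from $(t, [], \rootpos)$ with path $p$ to normalize $\glob(t)[p]$ into a call of the form $\globstep(R, \Sigma, R[P])$, where $R$ is the SCC root reached by following $p$ and $P$ locates $p$ within $R$. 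Observation~\ref{obs:locally-closed-open} guarantees that $pq_1$ and $pq_2$ also lie in this SCC, so the hypotheses of Lemma~\ref{lem:fork0-to-alg} are satisfied for $R, \Sigma, R[P], q_1, q_2$. That lemma yields $\globstep(R, \Sigma, R[P])[q_1] = \globstep(R, \Sigma, R[P])[q_2]$, and indexing both sides further with $r$ produces $\glob(t)[pq_1 r] = \glob(t)[pq_2 r]$.

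The main obstacle I expect is the last sub-case: carefully threading Lemma~\ref{lem:glob-boring-reduction} so that it produces the normalized form expected by Lemma~\ref{lem:fork0-to-alg} symmetrically on both sides. In particular, one has to verify that the normalized $R$, $\Sigma$, and $P$ are identical for the two prongs, which should follow from the fact that the two paths share the entire prefix $p$ and the algorithm is deterministic along a shared path. One must also check that the local-closedness and equal-lift hypotheses on $q_1, q_2$ transfer from $t[p]$ to $R[P]$ modulo the accumulated substitution; this should follow from Observation~\ref{obs:lift-subst}, which states precisely that lifting commutes with substitution on locally closed positions.
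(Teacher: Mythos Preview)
Your proposal is correct and follows essentially the same route as the paper: split on the two rules, dispatch the \emph{closed} rule via Lemma~\ref{lem:closed-to-alg}, and for the \emph{let-abs} rule normalize along the shared prefix $p$ with Lemma~\ref{lem:glob-boring-reduction}, transfer the local-closedness and equal-lift hypotheses to the normalized root via Observation~\ref{obs:lift-subst}, and finish with Lemma~\ref{lem:fork0-to-alg}. Your extra sub-split on whether $\lift t[p]<q_1>$ is closed is harmless but unnecessary, since Lemma~\ref{lem:fork0-to-alg} already handles that case internally; likewise your invocation of Observation~\ref{obs:locally-closed-open} belongs inside that lemma's proof rather than here, but this does not affect correctness.
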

\begin{proof}
  Recall that the single fork $\tn t_1[[p_1]] \sfork \tn t_2[[p_2]]$ can be
  built using two rules. We will provide a separate proof for each rule.

  \[
    \prftree[r]{closed}
    {t_1[p_1] \text{ closed}}
    {t_1[p_1] = t_2[p_2]}
    {\tn t_1[[p_1r]] \sfork \tn t_2[[p_2r]]}
  \]
  For this rule, the conclusion reduces to
  $\glob(t_1)[p_1r] = \glob(t_2)[p_2r]$. Note that it is sufficient to prove
  $\glob(t_1)[p_1] = \glob(t_2)[p_2]$. This follows directly from
  Lemma~\ref{lem:closed-to-alg} and the assumption $t_1[p_1] = t_2[p_2]$.

  \[
    \prftree[r]{\textit{let}-abs}
    {q_1 \text{ locally closed in } t[p]}
    {\lift t[p]<q_1> = \lift t[p]<q_2>}
    {\tn t[[pq_1r]] \sfork \tn t[[pq_2r]]}
  \]
  For this rule, the conclusion reduces to $\glob(t)[pq_1] = \glob(t)[pq_2]$.
  Because $t$ is closed, this can be expanded into $\globstep(t, [], t)[pq_1] =
  \globstep(t, [], t)[pq_2]$. We then use Lemma~\ref{lem:glob-boring-reduction}
  to obtain $r$, $\sigma_1$, $\sigma_2$, $p_1$ and $p_2$ such that
  \begin{mathpar}
    p = p_1 p_2 \and
    r = t[p_1]\sigma_1 \and
    r \text{ is closed} \and
    r[p_2]\sigma_2 \text{ is closed} \and
    p_2 \in \SCC(r) \and
    \globstep(t, [], t)[p] = \globstep(r, \sigma_2, r[p_2]).
  \end{mathpar}
  Note that from Observation~\ref{obs:lift-subst} we have that $q_1$ is locally
  closed in $r[p_2]$ and $\lift r[p_2]<q_1> = \lift r[p_2]<q_2>$.
  Lemma~\ref{lem:fork0-to-alg} then completes the proof by showing
  \[\globstep(r, \sigma_2, r[p_2])[q_1] = \globstep(r, \sigma_2, r[p_2])[q_2].\qedhere\]
\end{proof}

The hardest part of proving Theorem~\ref{thm:fork-to-alg} is already proven as
Lemma~\ref{lem:same-fork-to-alg},
now we finish it by considering arbitrary fork-equivalent pair instead of a single fork.

\ForkToAlgThm*
\begin{proof}
  A fork consists of a sequence of single forks. Each part of the sequence is
  proven correct by Lemma~\ref{lem:same-fork-to-alg}. The correctness of the
  complete sequence follows from the transitivity of Leibniz equality.
\end{proof}

\subsection{Algorithm implies Bisimilarity}
\label{sec:alg-to-bisim}

The main task in this section is to show that
\[
  \tn r[[\rootpos]] \bisim \tn \glob(r)[[\rootpos]].
\]
The main theorem then readily follows. To show this, we must first define the
bisimulation relation on $g$-terms. The transitions defined in
Definition~\ref{def:node-transitions} for $\lambda$-term nodes are lifted
verbatim to $g$-term nodes. Additionally, we extend the transition system by
adding appropriate outgoing edges to global variables. In particular, for any
$g$-term node $\tn t_1[[p_1]]$ such that $t_1[p_1]$ is of the form $\gvar{t_2\
  t_2[p_2]}$ and $t_2[p_2] \not \in \text{duplicates}(t_2)$ we have
\[
\tn t_1[[p_1]] \gstep{\varup} \tn t_2[[p_2]].
\]

\begin{remark}
  This definition of extra $\varup$ edges above is made specifically to match the efficient
  globalization algorithm.
  To reason about $\globnaive$, we would instead add a transition
  \[
    \tn t_1[[p_1]] \gstep{\varup} \tn t_2[[\rootpos]]
  \]
  for all $g$-term nodes such that $\tn t_1[[p_1]] = \gvar{t_2}$.
\end{remark}

\begin{remark}
  Since we added extra transitions, bimisilarity on $g$-terms does not match
  fork-equivalence on $g$-terms as it does for $\lambda$-terms. For example,
  the following two term nodes are bisimilar but not fork-equivalent.
  \[
    \tn(\lambda\ \debruijn 0)[[\lamdown]] \bisim
    \tn g((\lambda\ \debruijn 0)\ (\lambda\ \debruijn 0))[[\rootpos]]
  \]
\end{remark}

Having defined a suitable transition system to use for the bisimulation
relation, we can now start working towards a proof.
We start by observing a few technical facts about bisimilarity.
The following two observations are variants of
Lemma~\ref{lem:bisim-closed} and Lemma~\ref{lem:bisim-subpath}, trivially lifted
from $\lambda$-terms to $g$-terms.
\begin{observation}
  \label{obs:bisim-closed}
  Let $t$ and $t[p]$ be closed $g$-terms, then
  $\tn t[[p]] \bisim \tn t[p][[\rootpos]]$.
\end{observation}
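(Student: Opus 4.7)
The plan is to imitate the proof of Lemma~\ref{lem:bisim-closed} almost verbatim, the only wrinkle being the new $\varup$-edges out of $g$-var nodes. The key observation is that these edges are entirely context-independent: the target of a $\varup$-step from a $g$-var depends only on the payload of the $g$-var, not on anything in the surrounding term. Hence whenever both sides of a related pair encounter a $g$-var, they must transition to literally the same node.

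Concretely, I propose the candidate relation
\[ R \;=\; \{(\tn t[[pq]], \tn t[p][[q]]) \mid q \in \valid{t[p]}\} \;\cup\; \{(n, n) \mid n \text{ a $g$-term node}\}, \]
which contains $(\tn t[[p]], \tn t[p][[\rootpos]])$ by taking $q = \rootpos$. It then suffices to show that $R$ is a bisimulation. The identity component is trivially closed under the bisimulation conditions, so only the first component needs work.

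For pairs $(\tn t[[pq]], \tn t[p][[q]])$ in the first component, I would case-split on the shape of $t[pq]$. The cases where $t[pq]$ is a $\lambda$, an application, or a bound variable are identical to Lemma~\ref{lem:bisim-closed}: structural moves $\lamdown, \appleft, \appright$ extend $q$ and keep the pair in the first component, while a $\varup$-step from a bound variable lands on a $\lambda$ inside $t[p]$ (since $t[p]$ is closed), so again the successors stay in the first component of $R$. The only new situation is $t[pq] = \gvar{u\ u[q']}$ with $u[q'] \notin \text{duplicates}(u)$, where both term nodes fire a $\varup$-edge to the same target $\tn u[[q']]$; the resulting successor pair lies in the identity component of $R$, preserving closure.

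I do not expect any real obstacle here; the ``hard'' part is just double-checking that the new $\varup$-rule for $g$-vars, as defined at the top of Section~\ref{sec:alg-to-bisim}, really depends only on the $g$-var payload and not on the enclosing $t$ or $t[p]$. Once that is noted, the proof is essentially a minor augmentation of Lemma~\ref{lem:bisim-closed}, and in fact one could even state this as a direct corollary by appealing to reflexivity of $\bisim$ on $g$-terms to absorb the identity component.
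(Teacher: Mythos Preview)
Your proposal is correct and matches the paper's approach: the paper presents this observation as a ``trivial lift'' of Lemma~\ref{lem:bisim-closed} to $g$-terms without spelling out details, and your explicit augmentation of the witness relation with the identity component to absorb the context-independent $\varup$-steps out of $g$-vars is precisely the minor elaboration that lift entails.
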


\begin{observation}
  \label{obs:bisim-subpath2}
  If $q \in \valid{t_1[p_1]}$ and $\tn t_1[[p_1]] \bisim \tn t_2[[p_2]]$ then $\tn t_1[[p_1q]] \bisim \tn t_2[[p_2q]]$.
\end{observation}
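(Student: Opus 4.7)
The plan is to proceed by induction on the length of $q$, closely mirroring the straightforward proof of its $\lambda$-term counterpart Lemma~\ref{lem:bisim-subpath}.

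The base case $q = \rootpos$ is immediate: $p_1q = p_1$ and $p_2q = p_2$, so the conclusion coincides with the hypothesis.

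For the inductive step I would write $q = xq'$, where $x \in \{\lamdown, \appleft, \appright\}$ (positions are generated by exactly this grammar, with no $\varup$). Since $xq' = q \in \valid{t_1[p_1]}$, the transition $\tn t_1[[p_1]] \gstep{x} \tn t_1[[p_1x]]$ is defined by Definition~\ref{def:node-transitions}. Applying the bisimulation property to this transition yields some $n'$ with $\tn t_2[[p_2]] \gstep{x} n'$ and $\tn t_1[[p_1x]] \bisim n'$. Because $x \neq \varup$, the transition system restricted to $\{\lamdown, \appleft, \appright\}$ is deterministic: the only node reachable from $\tn t_2[[p_2]]$ along $x$ is $\tn t_2[[p_2x]]$ (in particular, $p_2x \in \valid{t_2}$ follows from the existence of the transition). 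Hence $n' = \tn t_2[[p_2x]]$, and the induction hypothesis applied to $q' \in \valid{t_1[p_1x]}$ delivers the desired $\tn t_1[[p_1xq']] \bisim \tn t_2[[p_2xq']]$.

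The only novelty compared with the $\lambda$-term version is the extension of the transition system with extra $\varup$ edges out of global variable nodes introduced in this section. These edges are harmless here because $q$ contains no $\varup$ symbols, so the induction only ever follows transitions labeled by $\lamdown$, $\appleft$, or $\appright$, on which $\lambda$-terms and $g$-terms behave identically. I therefore expect no real obstacles; the only bookkeeping step is to record that the newly added $g$-term transitions never intrude on the argument, which is precisely why this result is labelled an observation rather than a lemma.
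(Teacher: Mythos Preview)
Your proposal is correct and matches the paper's approach: the paper states this observation as a trivial lift of Lemma~\ref{lem:bisim-subpath} (whose proof is ``straightforward by induction on $q$'') without further detail. Your added remark that the new $\varup$-edges out of $g$-vars never interfere because $q$ contains no $\varup$ symbols is exactly the point needed to justify the lift.
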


And the following fact will be useful to prove bisimilarity by induction.
\begin{observation} Bisimulation between to $g$-term nodes can be established
  if their corresponding subterms are known to be bisimilar.
  \label{obs:bisim-recurse}
  \begin{align*}
    \tn t_1[[p_1\appleft]] \bisim \tn t_2[[p_2\appleft]] \wedge
    \tn t_1[[p_1\appright]] \bisim \tn t_2[[p_2\appright]] &\implies
    \tn t_1[[p_1]] \bisim \tn t_2[[p_2]] \\
    \tn t_1[[p_1\lamdown]] \bisim \tn t_2[[p_2\lamdown]] &\implies
    \tn t_1[[p_1]] \bisim \tn t_2[[p_2]]
  \end{align*}
\end{observation}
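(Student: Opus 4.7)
The plan is to prove each implication by explicitly extending the given bisimulation witnesses with the new pair and checking that the result is itself a bisimulation. For the application case, let $R_1$ and $R_2$ be bisimulations witnessing $\tn t_1[[p_1\appleft]] \bisim \tn t_2[[p_2\appleft]]$ and $\tn t_1[[p_1\appright]] \bisim \tn t_2[[p_2\appright]]$ respectively, and define
\[
R \;=\; R_1 \cup R_2 \cup \{(\tn t_1[[p_1]], \tn t_2[[p_2]])\}.
\]
Since all pairs already in $R_1 \cup R_2$ satisfy the bisimulation property by assumption, only the freshly added pair needs analysis.

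To verify the new pair, observe that the existence of the valid term nodes $\tn t_1[[p_1\appleft]]$ and $\tn t_1[[p_1\appright]]$ forces $t_1[p_1]$ to be an application, whose only outgoing transitions are $\appleft$ and $\appright$. Symmetrically, the existence of the bisimulation partners $\tn t_2[[p_2\appleft]]$ and $\tn t_2[[p_2\appright]]$ forces $t_2[p_2]$ to also be an application. The matching $\appleft$- and $\appright$-transitions from the new pair land in $R_1$ and $R_2$ by construction. No $\lamdown$ or $\varup$ transitions arise from an application (the $\varup$ edges are restricted to variable and $g$-var subjects, per Definition~\ref{def:node-transitions} and the extension described just above), so the bisimulation clauses at the new pair are satisfied.

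The lambda case is handled analogously by taking $R = R_1 \cup \{(\tn t_1[[p_1]], \tn t_2[[p_2]])\}$: both $t_1[p_1]$ and $t_2[p_2]$ must be lambdas, the only outgoing transition is $\lamdown$, and the resulting child pair lies in $R_1$ by assumption. The only subtlety worth flagging is book-keeping which transitions are available from applications and lambdas; since neither admits a $\varup$ transition, no further case analysis is needed and the verification is entirely routine. There is no real obstacle beyond the symbolic unfolding of Definition~\ref{def:bisim}.
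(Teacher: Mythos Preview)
Your proof is correct and is exactly the routine verification one expects: extend the witnessing bisimulations by the single new pair and check that applications and lambdas admit only $\appleft/\appright$ respectively $\lamdown$ transitions, so the new pair's successors land back in the given relations. The paper states this result as an observation with no proof at all, so there is nothing to compare against; your argument is the natural unfolding of Definition~\ref{def:bisim} that the paper implicitly relies on.
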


In the following two technical lemmata, we prove that a term is bisimilar to
itself, even if some of its de Bruijn indices have been replaced by appropriate
global variables.
\begin{lemma}
  \label{lem:subst1-bisim}
  Let $t$ be a closed $g$-term, $h$ be a $g$-term, and $n$ a $g$-term node such that
  \[
  \gvar{h} \gstep{\varup} n, \quad n \bisim \tn (\lambda\ t)[[\rootpos]].
  \]
  Then
  \[
    \tn t[\debruijn 0 \coloneqq \gvar{h}][[\rootpos]] \bisim \tn (\lambda\ t)[[\lamdown]]
  \]
\end{lemma}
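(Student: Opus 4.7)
The plan is to exhibit an explicit bisimulation $R$ that contains the pair $(\tn t[\debruijn 0 \coloneqq \gvar{h}][[\rootpos]], \tn (\lambda\ t)[[\lamdown]])$. Writing $t' = t[\debruijn 0 \coloneqq \gvar{h}]$, the natural candidate is
\[
R \;=\; \{(\tn t'[[q]], \tn (\lambda\ t)[[\lamdown q]]) \mid q \in \valid{t}\} \;\cup\; R_0,
\]
where $R_0$ is any bisimulation witnessing the hypothesis $n \bisim \tn (\lambda\ t)[[\rootpos]]$.

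The first step is a short structural calculation: because $\gvar{h}$ is closed, capture-avoiding substitution triggers no index shifting when traversing under binders, and so $\valid{t} = \valid{t'}$ with $t'[q] = t[q]$ at every position except those $q$ where $t[q] = \debruijn{\lamsize{q}}$, at which $t'[q] = \gvar{h}$. This localizes the entire difference between $t'$ and $(\lambda\ t)$ restricted to positions $\lamdown q$ to the variable leaves pointing at the outermost $\lambda$.

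With this in hand, I would verify the bisimulation conditions for each pair in the first component of $R$ by case analysis on $t[q]$. The structural cases ($t[q]$ an abstraction or application) are immediate: the $\lamdown$, $\appleft$, $\appright$ transitions on both sides extend $q$ and land in the same set. If $t[q] = \debruijn i$ with $i < \lamsize{q}$ the variable is "inner" and is not substituted; the $\varup$-transitions on both sides jump to the corresponding binder at a common shorter prefix of $q$, which is still a valid position in $t$. The critical case is $t[q] = \debruijn{\lamsize{q}}$: on the left $t'[q] = \gvar{h}$ has a $\varup$-edge to $n$ by hypothesis, while on the right the $\varup$-edge in $\lambda\ t$ lands at $\tn (\lambda\ t)[[\rootpos]]$; these endpoints are precisely what $R_0$ relates. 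Finally, if $t[q] = \gvar{u}$ the two nodes agree, and their outgoing transitions (possibly the extra $\varup$-edges on $g$-vars) go to the same target, so closure under the identity—or equivalently, replacing $R_0$ in the union by the full relation $\bisim$ which already contains the diagonal and $R_0$—disposes of this case.

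The main obstacle is the bookkeeping around how the substitution $t[\debruijn 0 \coloneqq \gvar{h}]$ interacts with de Bruijn indices of varying depth, and establishing the clean localization statement above. Once that is settled, the bisimulation check reduces to matching four well-behaved cases, with the hypothesis on $n$ plugging in at exactly the one spot where the substituted $\gvar{h}$ lives.
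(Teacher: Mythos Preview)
Your proposal is correct and follows essentially the same approach as the paper: both construct the explicit bisimulation
\[
R' = R_0 \cup \{(\tn t'[[q]], \tn (\lambda\ t)[[\lamdown q]]) \mid q \in \valid{t}\}
\]
where $R_0$ witnesses $n \bisim \tn (\lambda\ t)[[\rootpos]]$. Your write-up is in fact more detailed than the paper's (which simply asserts ``it is easy to check''), and your observation about the $g$-var case---that one may need to fall back on the full relation $\bisim$ to absorb the identical $\varup$-targets---is a careful point the paper leaves implicit.
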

\begin{proof}
  The bisimulation $n \bisim \tn (\lambda\ t)[[\rootpos]]$ gives us a
  bisimulation relation $R$. From that, we construct a new relation $R'$ as
  follows.
  \[ R' = R \cup \{ (\tn t[\debruijn 0 \coloneqq \gvar{h}][[p]], \tn (\lambda\ t)[[\lamdown p]]) \mid p \in \valid{t} \}\]
  It is easy to check that this relation is a bisimulation,
  therefore $R$ is included in bisimilarity, and
  $\tn t[\debruijn 0 \coloneqq \gvar{h}][[\rootpos]] \bisim \tn (\lambda\ t)[[\lamdown]]$.
\end{proof}

\begin{lemma}
  \label{lem:substs-bisim}
  Let $\sigma$ be a substitution into global variables, $h, t$ be $g$-terms,
  and $n$ be a $g$-term node such that
  \[
  \gvar{h} \gstep{\varup} n,\quad
  n \bisim \tn (\lambda\ t)\sigma [[\rootpos]].
  \]
  Then
  \[
    \tn t (\gvar{h}:\sigma) [[\rootpos]] \bisim
    \tn (\lambda\ t) \sigma [[\lamdown]]
    .
  \]
\end{lemma}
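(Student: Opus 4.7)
The plan is to reduce the claim to Lemma~\ref{lem:subst1-bisim} by simplifying both sides so that $\sigma$ disappears. Write $\sigma = [u_0, \ldots, u_{k-1}]$, noting that every $u_i$ is closed. Because closed terms are unaffected by the index shifts that accompany entering a binder, we have $(\lambda\ t)\sigma = \lambda\ s$, where $s = t[\debruijn 1 \coloneqq u_0, \ldots, \debruijn k \coloneqq u_{k-1}]$. Under this abbreviation the hypothesis rewrites to $n \bisim \tn (\lambda\ s)[[\rootpos]]$, and the right-hand side of the goal becomes $\tn (\lambda\ s)[[\lamdown]]$.

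The next step is to establish the substitution identity
\[
t(\gvar{h} : \sigma) = s[\debruijn 0 \coloneqq \gvar{h}].
\]
Both sides replace the free variables $\debruijn 0, \debruijn 1, \ldots$ of $t$ by $\gvar{h}, u_0, u_1, \ldots$; the left does so simultaneously, the right in two stages. The split is sound because no $u_i$ contains a free $\debruijn 0$ (each is closed), so every free occurrence of $\debruijn 0$ in $s$ originates from $t$, and because $\gvar{h}$ is itself closed no shift is needed during the final substitution. This identity lets me rewrite the left-hand side of the goal to $\tn s[\debruijn 0 \coloneqq \gvar{h}][[\rootpos]]$.

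With both rewrites in place, the goal becomes exactly the conclusion of Lemma~\ref{lem:subst1-bisim} applied with $s$, $h$, and $n$. The preconditions $\gvar{h} \gstep{\varup} n$ and $n \bisim \tn (\lambda\ s)[[\rootpos]]$ are already in hand, and the ``closed $g$-term'' precondition on $s$ amounts to $\lambda\ s$ being closed, which holds since $\lambda\ s = (\lambda\ t)\sigma$ underlies a valid term node.

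The main obstacle is the substitution identity in the second step: cleanly splitting a simultaneous substitution into two stages requires careful bookkeeping of the index shifts in capture-avoiding substitution, and the fact that every element of $\sigma$ together with $\gvar{h}$ is closed is doing the real work. An alternative route, sidestepping this identity entirely, would be to imitate the proof of Lemma~\ref{lem:subst1-bisim} and verify directly that $R \cup \{(\tn t(\gvar{h}{:}\sigma)[[p]], \tn (\lambda\ t)\sigma[[\lamdown p]]) \mid p \in \valid{t}\}$ is a bisimulation, where $R$ witnesses the hypothesis; the key case is a free variable at position $p$ whose outer index is $0$, whose $\varup$ transition on the left goes to $n$ and on the right goes to $\tn (\lambda\ t)\sigma[[\rootpos]]$.
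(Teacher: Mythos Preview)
Your proposal is correct and follows essentially the same approach as the paper: your $s$ is precisely the paper's $t' = (\lambda\ t)\sigma[\lamdown]$, and both proofs reduce to Lemma~\ref{lem:subst1-bisim} via the same substitution identity $t(\gvar{h}{:}\sigma) = t'[\debruijn 0 \coloneqq \gvar{h}]$. You supply more justification for that identity and for the closedness precondition than the paper does, and your alternative direct-bisimulation route is a reasonable aside, but the core argument is identical.
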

\begin{proof}
  Let $t' = (\lambda\ t)\sigma [\lamdown]$. Then
  $t (\gvar{h}:\sigma) = t'[\debruijn 0 \coloneqq \gvar{h}]$,
  and we obtain the result by applying Lemma~\ref{lem:subst1-bisim} to $t'$.
\end{proof}

Now we have all the tools needed to prove the key fact --
that the globalization algorithm does not change the term modulo bisimilarity.
We cannot prove this directly for the $\glob$ function. The induction hypothesis
would be too weak. Instead we prove a stronger statement for $\globscc$.

\begin{lemma}
  \label{lem:globalize-bisim2}
  let $r$ be a closed $g$-term, $p\in\SCC(r)$,
  and let $\sigma$ be a list of global variables. Assume
  \begin{equation}
    \label{eqn:bisim-assump}
    \tn r[[p]] \bisim \tn r[p]\sigma [[\rootpos]],
  \end{equation}
  and $r[p] \not\in \text{duplicates}(r)$.
  Then we obtain
  \begin{equation}
    \label{eqn:bisim-general}
    \tn r[[p]] \bisim \tn \globscc(r,\sigma,r[p])[[\rootpos]].
  \end{equation}
\end{lemma}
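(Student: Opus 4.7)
The plan is to proceed by strong induction on the size of $r[p]$ (taking $g$-vars as leaves, so substitutions do not increase size). The cases follow the pattern-matching of $\globscc$. The base cases are when $r[p]$ is either a de Bruijn index $\debruijn i$ or a $g$-var $\gvar{u}$. For the variable case $\globscc(r,\sigma,\debruijn i) = \debruijn i\sigma$ and $r[p]\sigma = \debruijn i\sigma$, so the conclusion is literally the premise. For the $g$-var case, substitution does not enter $\gvar{u}$, so $r[p]\sigma = \gvar{u} = \globscc(r,\sigma,\gvar{u})$, and again the conclusion is the premise.

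For the application case $r[p] = u_1\ u_2$, Observation~\ref{obs:bisim-recurse} reduces the goal to showing $\tn r[[p\appleft]] \bisim \tn \globstep(r,\sigma,u_1)[[\rootpos]]$ and the analogous statement for the right child. Each child is handled by a three-way split on $\globstep$. If $u_i$ is closed, then $\globstep(r,\sigma,u_i) = \glob(u_i)$; combining Observation~\ref{obs:bisim-subpath2} (to descend along $\appleft$) and Observation~\ref{obs:bisim-closed} (to strip the context from $u_i$), we reduce to the induction hypothesis applied with the fresh SCC rooted at $u_i$ and the trivially true premise $\tn u_i[[\rootpos]] \bisim \tn u_i[[\rootpos]]$. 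If $u_i \in \text{duplicates}(r)$, the same strategy works: we first push the substitution through by combining Observation~\ref{obs:bisim-subpath2} with Observation~\ref{obs:bisim-closed} on the closed term $r[p]\sigma$ to obtain $\tn r[[p\appleft]] \bisim \tn u_i\sigma[[\rootpos]]$, and then invoke the induction hypothesis with the new SCC $u_i\sigma$. In the remaining case, $\globstep = \globscc$, and we apply the induction hypothesis in place with the same $r$, same $\sigma$, and the refined position $p\appleft$, feeding Observation~\ref{obs:bisim-subpath2} into the outer premise to discharge the new premise.

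The lambda case $r[p] = \lambda\ t$ is the most delicate. Observation~\ref{obs:bisim-recurse} again reduces us to showing $\tn r[[p\lamdown]] \bisim \tn \globstep(r,(\gvar{r\ t}:\sigma),t)[[\rootpos]]$, after which we perform the same three-way split on $\globstep$. The distinguishing ingredient in the non-trivial subcase, where $\globstep$ falls through to $\globscc$, is that the new premise required by the induction hypothesis is $\tn r[[p\lamdown]] \bisim \tn t(\gvar{r\ t}:\sigma)[[\rootpos]]$, and this is precisely what Lemma~\ref{lem:substs-bisim} delivers: we instantiate it with $h := r\ t$, picking for $n$ the $\varup$-target of $\gvar{r\ t}$, and discharge its precondition $n \bisim \tn (\lambda\ t)\sigma[[\rootpos]]$ using the outer premise $\tn r[[p]] \bisim \tn r[p]\sigma[[\rootpos]]$ together with basic properties of the transition system. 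The closed and duplicate subcases proceed analogously, but route through $\glob$ on a smaller closed term (to which the induction hypothesis applies with the trivial premise), using Lemma~\ref{lem:substs-bisim} to bridge across the substitution before switching SCCs.

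The main obstacle will be establishing the premise of Lemma~\ref{lem:substs-bisim} cleanly in each lambda subcase: we must identify a concrete $\varup$-target $n$ of $\gvar{r\ t}$ and show it is bisimilar to $\tn (\lambda\ t)\sigma[[\rootpos]]$. This forces a careful side-analysis of whether $r[p\lamdown] \in \text{duplicates}(r)$, since the extended $\varup$ transition on $g$-vars is only defined when that side condition fails, and one has to verify that exactly the right subcase of $\globstep$ is active in each situation. Once the bookkeeping is done, the induction closes because every recursive invocation — whether staying inside $\globscc$, starting a new SCC via $\glob$, or descending into an application's child — operates on a strictly smaller subject term, so the induction is well-founded.
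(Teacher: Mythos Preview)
Your proposal is correct and follows the same approach as the paper: induction on the subject term, with Lemma~\ref{lem:substs-bisim} supplying the key bisimilarity in the $\lambda$ case and Observation~\ref{obs:bisim-recurse} reassembling the pieces; your three-way split on $\globstep$ is the paper's two-way split (closed-or-duplicate handled together), and your strong induction on size is exactly what makes the hypothesis applicable to the substituted term after the transition to $\glob$.

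One simplification you are missing: the ``careful side-analysis'' of whether $r[p\lamdown]\in\text{duplicates}(r)$ is not needed. In its proof the paper unfolds the pushed $g$-var as $\gvar{r\ r[p]}$ --- the whole binder, in line with the OCaml code rather than the literal pattern in Definition~\ref{def:eff-globalization} --- and instantiates $n\coloneqq\tn r[[p]]$. Since the lemma already assumes $r[p]\not\in\text{duplicates}(r)$, the $\varup$ edge is well-defined immediately, and $n\bisim\tn(\lambda\ t)\sigma[[\rootpos]]$ is exactly the outer premise~\eqref{eqn:bisim-assump}. This yields the analogue of your ``new premise'' (Equation~\eqref{eqn:subterm-bisim}) once, \emph{before} the case split on $\globstep$, so both subcases simply combine it with the induction hypothesis. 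By contrast, if you read the $g$-var literally as $\gvar{r\ t}$ with $t=r[p\lamdown]$, your worry is real: when $r[p\lamdown]\in\text{duplicates}(r)$ the $\varup$ transition on that $g$-var is undefined and Lemma~\ref{lem:substs-bisim} cannot be invoked; the paper's reading sidesteps this entirely.
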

\begin{proof}
  Proceed by induction on $r[p]$.
  \begin{description}
  \item[Case {$r[p] = \debruijn i$}:]
    We have $\globscc(r,\sigma,r[p]) = \debruijn{i}\sigma = r[p]\sigma$.
    The conclusion the follows directly from Equation~\ref{eqn:bisim-assump}.
  \item[Case {$r[p] = \gvar{h}$}:]
    We have $\globscc(r,\sigma,r[p]) = \gvar{h} = r[p]\sigma$. The conclusion
    again follows from Equation~\ref{eqn:bisim-assump}.
  \item[Case {$r[p] = \lambda\ t$}:]
    We have
    \begin{equation}
      \label{eqn:globalize-lambda-reduction}
      \globscc(r,\sigma,r[p]) = \lambda\ \globstep(r,(\gvar{r\ r[p]} :
      \sigma),r[p\lamdown]).
    \end{equation}
    Further reduction of the algorithm depends on whether $\globstep$ transitions
    to $\glob$ or $\globscc$. We will consider both cases separately. However,
    in both cases we will require the following fact:
    \begin{equation}
      \label{eqn:subterm-bisim}
      \tn r[[p\lamdown]] \bisim \tn r[p\lamdown](\gvar{r\ r[p]} : \sigma) [[\rootpos]].
    \end{equation}
    This follows from Lemma~\ref{lem:substs-bisim}, instantiating node $n$ to
    $\tn r[[p]]$.

    Now, Equation~\ref{eqn:globalize-lambda-reduction} can reduce further
    according to two possibilities:
    \begin{enumerate}
    \item If
      $r[p\lamdown]$ is closed or $r[p\lamdown] \in \text{duplicates}(r)$, it
      reduces to
      \[\lambda\ \glob(r') =
        \lambda\ \globscc(r', [], r'),\]
      where $r' = r[p\lamdown](\gvar{r\ r[p]} : \sigma)$. From the induction
      hypothesis, using $\tn r'[[\rootpos]] \bisim \tn r'[\rootpos][][[\rootpos]]$, we then obtain
      \[\tn r'[[\rootpos]] \bisim \tn \globscc(r', [], r')[[\rootpos]].\]
      Combining this with with Equation~\ref{eqn:subterm-bisim} yields
      \[\tn r[[p\lamdown]] \bisim \tn \globscc(r', [], r')[[\rootpos]].\]
      The final conclusion then follows from
      Observation~\ref{obs:bisim-recurse}.
    \item Otherwise, if $r[p\lamdown]$ is neither closed nor duplicate, then
      Equation~\ref{eqn:globalize-lambda-reduction} reduces to
      \[\lambda\ \globscc(r,(\gvar{r\ r[p]} : \sigma),r[p\lamdown]).\]
      By applying Equation~\ref{eqn:subterm-bisim} on the induction hypothesis,
      we obtain
      \[\tn r[[p\lamdown]] \bisim
        \tn \globscc(r,(\gvar{r\ r[p]} : \sigma),r[p\lamdown])[[\rootpos]].\]
      Finally, the conclusion again follows from
      Observation~\ref{obs:bisim-recurse}.
    \end{enumerate}
  \item[Case {$r[p] = t\ u$}:]
    We have
    \[
      \globscc(r,\sigma,r[p]) =
      \globstep(r, \sigma, r[p\appleft])\ \globstep(r, \sigma, r[p\appright]).
    \]
    Similar to the two reduction options of the previous case, the two instances
    of $\globstep$ either reduce further to $\globscc$ or $\glob$. Using similar
    reasoning to the previous case, we can use the induction hypothesis together
    with
    \begin{mathpar}
      \tn r[[p\appleft]] \bisim \tn r[p\appleft]\sigma[[\rootpos]] \and
      \tn r[[p\appright]] \bisim \tn r[p\appright]\sigma[[\rootpos]] \and
    \end{mathpar}
    to obtain
    \begin{mathpar}
      \quad\tn r[[p\appleft]] \bisim \tn \globstep(r, \sigma, r[p\appleft])[[\rootpos]] \and
      \tn r[[p\appright]] \bisim \tn \globstep(r, \sigma, r[p\appright])[[\rootpos]].
    \end{mathpar}
    Finally, the conclusion again follows from
    Observation~\ref{obs:bisim-recurse}.\qedhere
  \end{description}
\end{proof}

\begin{corollary}
  \label{corr:globalize-bisim}
  \[
    \tn r[[\rootpos]] \bisim \tn \glob(r)[[\rootpos]].
  \]
\end{corollary}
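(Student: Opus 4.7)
The plan is to derive this corollary as a direct specialization of Lemma~\ref{lem:globalize-bisim2}, taking $p = \rootpos$ and $\sigma = []$. Under these choices, the conclusion of the lemma becomes $\tn r[[\rootpos]] \bisim \tn \globscc(r, [], r)[[\rootpos]]$, which is precisely $\tn r[[\rootpos]] \bisim \tn \glob(r)[[\rootpos]]$ by unfolding the top-level defining equation $\glob(r) = \globscc(r, [], r)$ from Definition~\ref{def:eff-globalization}.

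Next I would verify the four preconditions of Lemma~\ref{lem:globalize-bisim2} under these instantiations. First, $r$ is closed by the standing precondition attached to $\glob$. Second, $\rootpos \in \SCC(r)$ is immediate from the definition of a strongly connected component, since the set of nonempty prefixes of $\rootpos$ is empty and the quantified condition holds vacuously. Third, the bisimilarity assumption $\tn r[[\rootpos]] \bisim \tn r[\rootpos]\sigma[[\rootpos]]$ collapses, for $\sigma = []$, to $\tn r[[\rootpos]] \bisim \tn r[[\rootpos]]$, which holds by reflexivity of bisimilarity. Fourth, $r \notin \text{duplicates}(r)$ follows from Definition~\ref{def:duplicates}: all positions witnessing membership in $\text{duplicates}(r)$ are required to be distinct from $\rootpos$, so only strict subterms can appear, and a strict subterm cannot coincide syntactically with the whole term.

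There is no substantial obstacle here; the heavy lifting has been absorbed into Lemma~\ref{lem:globalize-bisim2}. The only subtle point worth stating explicitly is precondition (4), since one might worry that the root itself could accidentally lie in $\text{duplicates}(r)$. Once that pitfall is ruled out by reading the definition carefully, the corollary is obtained by a single invocation of the lemma followed by the definitional rewrite $\glob(r) = \globscc(r, [], r)$.
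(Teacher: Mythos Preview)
Your proposal is correct and follows exactly the approach taken in the paper, which simply states that the corollary ``follows directly from Lemma~\ref{lem:globalize-bisim2}.'' You have merely spelled out the instantiation $p = \rootpos$, $\sigma = []$ and verified the preconditions explicitly, which the paper leaves implicit.
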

\begin{proof}
  Follows directly from Lemma~\ref{lem:globalize-bisim2}.
\end{proof}

Proving Theorem~\ref{thm:alg-to-bisim} is now straightforward.

\AlgToBisimThm*
\begin{proof}
  Since $\glob(t_1)$, and $\glob(t_1)$ are closed $g$-terms,
  this result is obtained from the following equivalence chain provided by
  Lemma~\ref{corr:globalize-bisim}, Observation~\ref{obs:bisim-subpath2}, and Observation~\ref{obs:bisim-closed}.
  \begin{align*}
  \tn t_1[[p_1]]
  \bisim \tn \glob(t_1)[[p_1]]
  \bisim\ &\tn \glob(t_1)[p_1][[\rootpos]]\\
  =\ &\tn \glob(t_2)[p_2][[\rootpos]]
  \bisim \tn \glob(t_2)[[p_2]]
  \bisim \tn t_1[[p_1]]
  \end{align*}
\end{proof}

\end{document}